\newcommand{\ignore}[1]{}
\newcommand{\nop}[1]{}
\newcommand{\eat}[1]{}
\newcommand{\kw}[1]{{\ensuremath{\mathsf{#1}}}\xspace}
\newcommand{\stitle}[1]{\par\noindent{\bf #1}}
\long\def\comment#1{}
\newcommand{\sota}{\kw{SOTA}}
\newcommand{\lca}{\kw{LCA}}
\newcommand{\mde}{\kw{MDE}}
\newtheorem{definition}{Definition}
\newtheorem{problem}{Problem}
\newtheorem{example}{Example}
\newcommand{\snap}{\kw{SNAP}}
\newcommand{\dimacs}{\kw{DIMACS}}
\newcommand{\dfs}{\kw{DFS}}
\newcommand{\resistancedistance}{\kw{RD}}
\newcommand{\resistancepath}{\kw{RD}}
\newcommand{\penaltymethod}{\kw{Penalty}}
\newcommand{\plateaumethod}{\kw{Plateau}}
\newcommand{\res}{\kw{res}}
\newcommand{\order}{\kw{order}}
\newcommand{\dfsorder}{\kw{DFSOrder}}
\newcommand{\diagonal}{\kw{Diagonal}}
\newcommand{\subtree}{\kw{SubTree}}
\newcommand{\Parent}{\kw{Parent}}
\newcommand{\rootvector}{\kw{root}}
\newcommand{\column}{\kw{Col}}
\newcommand{\ratio}{\kw{ratio}}
\newcommand{\tw}{\kw{tw}}
\newcommand{\singlepairn}{\kw{SP\textrm{-}\xspace N}}
\newcommand{\length}{\kw{Length}}
\newcommand{\diversity}{\kw{Diversity}}
\newcommand{\robustness}{\kw{Robustness}}
\newcommand{\push}{\kw{push}}
\newcommand{\bipush}{\kw{BiPush}}
\newcommand{\geer}{\kw{GEER}}
\newcommand{\tedi}{\kw{TEDI}}
\newcommand{\multihop}{\kw{MultiHop}}
\newcommand{\lapsolver}{\kw{LapSolver}}
\newcommand{\treeindex}{\kw{TreeIndex}}
\newcommand{\leindex}{\kw{LEIndex}}
\newcommand{\lewalk}{\kw{LEwalk}}
\newcommand{\htwoh}{\kw{H2H}}
\newcommand{\youtube}{\kw{Youtube}}
\newcommand{\dblp}{\kw{DBLP}}
\newcommand{\amazon}{\kw{Amazon}}
\newcommand{\emailenron}{\kw{Email\textrm{-}\xspace enron}}
\newcommand{\roadca}{\kw{Road\textrm{-}CA}}
\newcommand{\roadpa}{\kw{Road\textrm{-}PA}}
\newcommand{\newyork}{\kw{NewYork}}
\newcommand{\roadtx}{\kw{Road\textrm{-}TX}}
\newcommand{\western}{\kw{Western}}
\newcommand{\roadctr}{\kw{Road\textrm{-}CTR}}
\newcommand{\fullusa}{\kw{Full\textrm{-}USA}}
  \providecommand\BibTeX{{%
    \normalfont B\kern-0.5em{\scshape i\kern-0.25em b}\kern-0.8em\TeX}}}
\begin{document}
\title{Efficient Exact Resistance Distance Computation on Small-Treewidth Graphs: a Labelling Approach}
\author{Meihao Liao}
\email{mhliao@bit.edu.cn}
\affiliation{%
	\institution{Beijing Institute of Technology}
	\city{Beijing}
	\country{China}}

\author{Yueyang Pan}
\email{yyp@bit.edu.cn}
\affiliation{%
	\institution{Beijing Institute of Technology}
	\city{Beijing}
	\country{China}}

\author{Rong-Hua Li}
\email{lironghuabit@126.com}
\affiliation{%
	\institution{Beijing Institute of Technology}
	\city{Beijing}
	\country{China}}

\author{Guoren Wang}
\email{wanggrbit@126.com}
\affiliation{%
	\institution{Beijing Institute of Technology}
	\city{Beijing}
	\country{China}}

\begin{abstract}
	Resistance distance computation is a fundamental problem in graph analysis, yet existing random walk-based methods are limited to approximate solutions and suffer from poor efficiency on small-treewidth graphs (e.g., road networks). In contrast, shortest-path distance computation achieves remarkable efficiency on such graphs by leveraging cut properties and tree decompositions. Motivated by this disparity, we first analyze the cut property of resistance distance. While a direct generalization proves impractical due to costly matrix operations, we overcome this limitation by integrating tree decompositions, revealing that the resistance distance $r(s,t)$ depends only on labels along the paths from $s$ and $t$ to the root of the decomposition. This insight enables compact labelling structures. Based on this, we propose \treeindex, a novel index method that constructs a resistance distance labelling of size $O(n \cdot h_{\mathcal{G}})$ in $O(n \cdot h_{\mathcal{G}}^2 \cdot d_{\max})$ time, where $h_{\mathcal{G}}$ (tree height) and $d_{\max}$ (maximum degree) behave as small constants in many real-world small-treewidth graphs (e.g., road networks). Our labelling supports exact single-pair queries in $O(h_{\mathcal{G}})$ time and single-source queries in $O(n \cdot h_{\mathcal{G}})$ time. Extensive experiments show that TreeIndex substantially outperforms state-of-the-art approaches. For instance, on the full USA road network, it constructs a $405$ GB labelling in $7$ hours (single-threaded) and answers exact single-pair queries in $10^{-3}$ seconds and single-source queries in $190$ seconds--the first exact method scalable to such large graphs.
\end{abstract}
\maketitle
\section{Introduction}\label{sec:intro}
Resistance distance~\cite{tetali1991random}, recognized for its robustness and smoothness compared to shortest path distance, has recently garnered significant attention in the graph data management community. Its applications span a diverse array of domains, including link prediction in social networks~\cite{linkprediction2018spectralembedding,long-tail-recommendation-vldb12}, graph clustering in geo-spatial networks~\cite{density-clustering-sigmod14,GraphClusteringITCS18}, and robust routing in road networks~\cite{RobustRouting21}. Furthermore, it has found utility in analyzing over-smoothing and over-squashing issues in graph neural networks~\cite{OversquashingICLR22,OversquashingWWW23,GraphVariance,GraphCurvature}. Nevertheless, resistance distance computation remains computationally challenging, primarily because it requires solving a linear system involving the graph Laplacian matrix.

Existing methods for computing resistance distances predominantly rely on random walk-based approximation techniques~\cite{KDDlocal21,22resistance,23resistance,ResistanceYang,SpanningEdgeCentrality}. Although these approaches scale effectively to large graphs, they inherently sacrifice exactness in favor of computational efficiency. Furthermore, random walk-based techniques are highly sensitive to the spectral properties of the underlying graph. Let $\lambda_2$ denote the second smallest eigenvalue of the graph's Laplacian matrix; random walks are known to mix rapidly when $\lambda_2$ is large~\cite{chung1997spectral}. Thus, random walk-based methods have been demonstrated to perform effectively on rapidly mixing graphs, such as scale-free social networks~\cite{KDDlocal21,22resistance,ResistanceYang}. However, many real-world graphs do not exhibit rapid mixing behavior~\cite{mixing2010measuring,mixing18}. Tree-width, a measure quantifying the closeness of a graph's structure to a tree~\cite{TreeWidth84}, is particularly relevant in this context. Road networks, characterized by small tree-width, are known to be easily separable, implying that $\lambda_2$ typically approaches zero according to Cheeger's inequality~\cite{chung1997spectral}. Consequently, random walk-based algorithms suffer significant performance degradation on graphs with small tree-width, including road networks~\cite{23resistance}. For example, our experimental results indicate that even state-of-the-art index-based solutions \leindex \cite{23resistance} for computing resistance distances on large road networks require approximately $1,000$ seconds to achieve an absolute error of merely $10^{-1}$. Such inefficiency significantly limits the practical applicability of resistance distance computations on real-world road networks.

To address this challenge, we leverage the concepts of the \textit{cut property} and \textit{tree decomposition}, which have demonstrated effectiveness in shortest path computations~\cite{TEDISIGMOD10,LiJunChang2012exact,HopLabeling2018hierarchy}. A widely adopted approach for efficient shortest path queries is to construct distance labelling schemes over graphs. Specifically, the \textit{cut property} states that the shortest path distance between two sets of nodes separated by a vertex cut is determined by the minimum sum of distances from each node to the vertex cut. Leveraging this property, \textit{tree decomposition} has been utilized to partition the graph into disconnected components, ensuring that the shortest path distance $d(s,t)$ can be computed solely based on pre-computed distances stored at the least common ancestor (\lca) of $s$ and $t$ in the tree decomposition. The success of distance labelling techniques has enabled shortest path computations to scale effectively to road networks comprising millions of nodes ~\cite{HopLabeling2018hierarchy}, as such graphs typically exhibit small tree-width. A natural question arises: Can we design an analogous labelling scheme for resistance distance?

In this work, we answer this question affirmatively by developing the first efficient resistance distance labelling scheme. Unlike shortest path distance, resistance distance computation involves complex graph matrix operations, posing significant challenges for designing effective labelling strategies. To address this, we first study the cut property of resistance distance and generalize it from individual nodes to node sets. This extension introduces additional complexity, as it necessitates computing the Schur complement. To mitigate this complexity, we utilize the Cholesky decomposition of the inverse Laplacian matrix to provide a simplified version of the cut property. We demonstrate that for each node in separated node sets, it suffices to store only a single label per node in the \textit{vertex cut}, and simple arithmetic operations can accurately recover resistance distances from these labels. Furthermore, by employing tree decomposition and vertex hierarchies, we establish that the resistance distance $r(s,t)$ depends solely on the ancestors of nodes $s$ and $t$ in the tree decomposition. Although fundamentally different, this property closely resembles the cut property of shortest path distances, thereby enabling the design of compact resistance distance labelling schemes.

Leveraging this insight, we propose a compact resistance distance labelling scheme named \treeindex. We demonstrate that the labelling size is bounded by $O(n \cdot h_{\mathcal{G}})$, where $h_{\mathcal{G}}$ denotes the height of the tree decomposition and empirically behaves as a small constant in many real-world small tree-width graphs (e.g., road networks). To efficiently compute the labelling, we develop a bottom-up construction algorithm that builds the labelling in $O(n \cdot h_{\mathcal{G}}^2 \cdot d_{\max})$ time by performing rank-$1$ updates on the inverse Laplacian matrix following a predefined \dfs ordering. Utilizing this labelling, we propose two efficient query processing algorithms, answering single-pair queries in $O(h_{\mathcal{G}})$ time and single-source queries in $O(n \cdot h_{\mathcal{G}})$ time.

We conduct extensive experiments on $10$ real-world large-scale networks, including the entire US road network \fullusa, comprising $23,947,348$ nodes and $28,854,312$ edges. The experimental results demonstrate that the proposed method, \treeindex, achieves more than $3$ orders of magnitude improvement in query efficiency for single-pair queries compared to state-of-the-art approaches, including approximate solutions that yield results with absolute errors up to $10^{-1}$. For single-source queries, our method remains exact while also being an order of magnitude faster than the best available approximate methods. Moreover, \treeindex maintains acceptable label size and construction time. Notably, labels for \fullusa can be constructed within approximately $7$ hours, resulting in a total label size of $405$~GB. With this index, single-pair queries can be answered in approximately $10^{-3}$ seconds, and single-source queries within $190$ seconds. To the best of our knowledge, this represents the first exact approach capable of computing single-source resistance distances on such a large-scale road network. As a practical demonstration of resistance distance computation on large road networks, we also present a case study on robust routing. Our key contributions are summarized as follows:

\stitle{New Theoretical Findings.} 
We discover two new properties of resistance distance: the \textit{cut property} and the \textit{dependency property}. The \textit{cut property} expresses $r(s,t)$ in terms of relative resistances from nodes $s$ and $t$ to a vertex cut, enabling compact storage and efficient recovery of distance labels. The \textit{dependency property} demonstrates that the resistance distance $r(s,t)$ solely depends on labels along the paths from nodes $s$ and $t$ to the root node within the tree decomposition structure.

\stitle{Novel Indexing Algorithms.}
We propose a novel resistance distance labelling, \treeindex, leveraging tree decomposition. The label size is bounded by $O(n\cdot h_{\mathcal{G}})$. We develop a bottom-up algorithm for label construction in $O(n\cdot h_{\mathcal{G}}^2\cdot d_{max})$ time, as well as two query algorithms: one that processes single-pair queries in $O(h_{\mathcal{G}})$ time, and another for single-source queries in $O(n\cdot h_{\mathcal{G}})$ time.

\stitle{Extensive Experiments.}
We conduct extensive evaluations on $10$ large-scale networks, including \fullusa. Our experimental results show that the proposed \treeindex significantly improves query efficiency while guaranteeing exact accuracy, moderate label sizes, and practical label construction time. To the best of our knowledge, this is the first method capable of computing exact single-source resistance distances on graphs with more than $20$ million nodes. We also demonstrate the practical utility of our approach by successfully applying it to robust routing problems on real-life road networks. The source code of our paper is publicly available at \url{https://github.com/mhliao0516/TreeIndex}.

\section{Preliminaries}
\subsection{Problem Definition}
Given an undirected graph $\mathcal{G} = (\mathcal{V}, \mathcal{E})$ with $n$ nodes and $m$ edges, resistance distance \cite{bollobas1998modern} is a distance metric defined by modeling the graph as an electrical network, where each node represents a junction and each edge a resistor. The resistance distance between nodes $s$ and $t$, denoted as $r(s, t)$, is the voltage drop from $s$ to $t$ when a unit current flows into $s$ and out of $t$. According to Kirchhoff's voltage law, the voltage drops are equivalent along any path from $s$ to $t$. Let $\mathbf{f} \in \mathbb{R}^{|\mathcal{E}|}$ denote the electrical flow on each edge $e = (e_1, e_2)$, where $\mathbf{f}(e) > 0$ if the flow is from $e_1$ to $e_2$, and $\mathbf{f}(e) < 0$ otherwise. Let $\mathcal{P}_{st}$ be an arbitrary path from $s$ to $t$. The resistance distance can be represented as: $r(s,t)=\sum_{e\in\mathcal{P}_{st}}\mathbf{f}(e)$. 

Resistance distance is related to the well-known shortest path distance, which is defined as the number of edges in the shortest path from $s$ to $t$. A spanning tree $T$ of $\mathcal{G}$ is a connected subgraph of $\mathcal{G}$ that includes all nodes in $\mathcal{V}$. Let $\mathbf{f}_T$ denote the indicator vector for the shortest path from $s$ to $t$ on spanning tree $T$, where $\mathbf{f}_T(e) = 1$ if edge $e$ is on the shortest path from $s$ to $t$, and $0$ otherwise. There is a unique path from $s$ to $t$ in $T$, which serves as the shortest path within that tree. Let $\mathcal{T}$ denote the set of all spanning trees of $\mathcal{G}$. The electrical flow can be formulated as \cite{tetali1991random}: $\mathbf{f}=\sum_{T\in\mathcal{T}}\frac{1}{|\mathcal{T}|}\mathbf{f}_T$.
Compared to the shortest path distance, resistance distance accounts for all paths between $s$ and $t$, making it more robust.
\begin{example}
	Given an example graph $\mathcal{G}$ illustrated in Fig.~\ref{fig:example-resistance-distance}(a), Fig.~\ref{fig:example-resistance-distance}(b) shows the electrical flow on $\mathcal{G}$ when a unit current flows into $v_2$ and out of $v_4$. Consider the path $\mathcal{P}_{v_2v_4} = (v_2, v_9, v_8, v_4)$; the resistance distance between $v_2$ and $v_4$ can be computed as $r(v_2, v_4) = \mathbf{f}((v_2, v_9)) + \mathbf{f}((v_9, v_8)) + \mathbf{f}((v_8, v_4)) = 0.59 + 0.36 + 0.66 = 1.61$, while the shortest path distance is $d(v_2, v_4) = 3$. Resistance distance exhibits greater robustness compared to shortest path distance. For instance, upon removal of the edge $(v_8, v_9)$, the shortest path distance between $v_2$ and $v_4$ increases to $d(v_2, v_4) = 4$ (a $33\%$ increase), whereas the resistance distance rises to $r(v_2, v_4) = 1.89$ (a $17\%$ increase).
\end{example}
\begin{figure}[t]
	\vspace*{-0.2cm}
	\begin{center}
		\begin{tabular}[t]{c}
			\subfigure[An example graph $\mathcal{G}$]{
				\includegraphics[width=0.42\columnwidth, height=2.2cm]{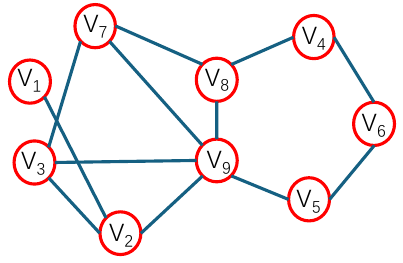}
			}
			\subfigure[Electrical flow on $\mathcal{G}$]{
				\includegraphics[width=0.44\columnwidth, height=2.4cm]{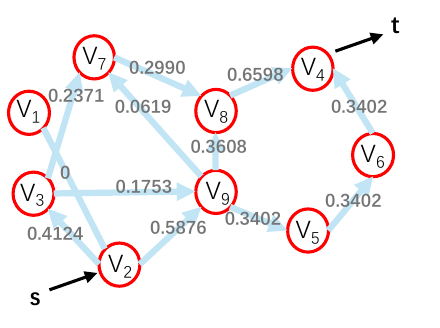}
			}
		\end{tabular}
	\end{center}
	\vspace*{-0.6cm}
	\caption{An illustrative example of resistance distance}
	\vspace*{-0.2cm}
	\label{fig:example-resistance-distance}
\end{figure}
In this paper, we address the problem of exact resistance distance computation, following previous studies \cite{22resistance,23resistance,ResistanceYang,KDDlocal21}, we focus on two types of queries: single-pair and single-source resistance distance queries.
\begin{problem}[Single-pair resistance distance query]
Given a graph $\mathcal{G}=(\mathcal{V},\mathcal{E})$ and a pair of nodes $s, t \in \mathcal{V}$, a single-pair resistance distance query computes the resistance distance $r(s,t)$ between nodes $s$ and $t$.
\end{problem}
\begin{problem}[Single-source resistance distance query]
Given a graph $\mathcal{G}=(\mathcal{V},\mathcal{E})$ and a source node $s \in \mathcal{V}$, a single-source resistance distance query computes the resistance distances from node $s$ to every other node in $\mathcal{V}$.
\end{problem}
Below, we first show that the computation of resistance distance is inherently linked to matrix-based formulations. Then, we review existing methods and discuss their limitations.
\subsection{Resistance Distance Formulations}
According to the definition of resistance distance, it can be expressed using graph-related matrices. Let $\mathbf{A}$ be the adjacency matrix and $\mathbf{D}$ be the degree matrix of graph $\mathcal{G}$; the Laplacian matrix $\mathbf{L}$ is defined as $\mathbf{L} = \mathbf{D} - \mathbf{A}$. Let $\mathbf{x}$ denote the voltage vector at each node when a unit current flows into node $s$ and out of node $t$. The electrical flow on edge $e = (e_1, e_2)$ can be expressed as $\mathbf{f}(e) = \mathbf{x}(e_1) - \mathbf{x}(e_2)$. According to Kirchhoff's voltage law, the voltages at each node satisfy: $\mathbf{L}\mathbf{x} = \mathbf{e}_s - \mathbf{e}_t$, where $\mathbf{e}_s$ is a one-hot vector with a 1 at the index corresponding to $s$ and 0 elsewhere. Since the columns of $\mathbf{L}$ sum to 0, $\mathbf{L}$ has rank $n-1$, so its inverse does not exist. Instead, we use the Moore-Penrose pseudo-inverse. Suppose the eigen-decomposition of $\mathbf{L}$ is $\mathbf{L} = \sum_{i=1}^n \lambda_i \mathbf{u}_i \mathbf{u}_i^T$, where $0 = \lambda_1 \leq \cdots \leq \lambda_n$ are the eigenvalues of $\mathbf{L}$ and $\mathbf{u}_i$ is the corresponding eigenvector for $i = 1$ to $n$. The Moore-Penrose pseudo-inverse of $\mathbf{L}$ is then defined as $\mathbf{L}^\dagger = \sum_{i=2}^n \frac{1}{\lambda_i} \mathbf{u}_i \mathbf{u}_i^T$. Thus, we derive that $\mathbf{x} = \mathbf{L}^\dagger (\mathbf{e}_s - \mathbf{e}_t)$. The resistance distance is therefore:
\begin{equation}
	r(s, t) = \mathbf{x}(s) - \mathbf{x}(t) = (\mathbf{e}_s - \mathbf{e}_t)^T \mathbf{L}^\dagger (\mathbf{e}_s - \mathbf{e}_t).
\end{equation}
Almost all initial methods for computing resistance distance rely on matrix-based definitions. The primary challenge is computing the pseudo-inverse $\mathbf{L}^\dagger$, which requires $O(n^3)$ time for exact calculation. Several formulas have been proposed to avoid computing $\mathbf{L}^\dagger$. For example, \cite{22resistance} focuses on expressing resistance distance via $\mathbf{L}_v^{-1}$, where $\mathbf{L}_v$ is the Laplacian sub-matrix obtained by removing the $v$-th row and column of $\mathbf{L}$, and $v$ is an arbitrary node. Specifically, they provide the exact formula for resistance distance, characterized by:
\begin{equation}
	r(s,v)=\mathbf{e}_s^T\mathbf{L}_v^{-1}\mathbf{e}_s,
\end{equation}
\begin{equation}
	r(s,t)=(\mathbf{e}_s-\mathbf{e}_t)^T\mathbf{L}_v^{-1}(\mathbf{e}_s-\mathbf{e}_t),\quad s,t\neq v,
\end{equation}
Then, \cite{23resistance} proposes a formula that extends the concept from a single node $v$ to a node set $\mathcal{V}_i$. Suppose that $\mathcal{U}_i$ and $\mathcal{V}_i$ form a partition of $\mathcal{V}$ such that $\mathcal{V} = \mathcal{U}_i \cup \mathcal{V}_i$. Then, $\mathbf{L}_{\mathcal{U}_i\mathcal{U}_i}$ is the matrix obtained by removing the rows and columns indexed by $\mathcal{V}_i$ from $\mathbf{L}$. The Schur complement $\mathbf{L}/\mathcal{V}_i$ is defined as:
\begin{equation}
	\mathbf{L}/\mathcal{V}_i = \mathbf{L}_{\mathcal{V}_i\mathcal{V}_i} - \mathbf{L}_{\mathcal{V}_i\mathcal{U}_i} \mathbf{L}_{\mathcal{U}_i\mathcal{U}_i}^{-1} \mathbf{L}_{\mathcal{U}_i\mathcal{V}_i}.
\end{equation}
The resistance distance can be computed using the Schur complement as:
\begin{theorem}\label{theorem:formula-prob}\cite{23resistance}
              Let $\mathbf{p}_{u}$ be the $u$-th row of the matrix $\mathbf{L}_{\mathcal{U}_i\mathcal{U}_i}^{-1}\mathbf{L}_{\mathcal{U}_i\mathcal{V}_i}$ for $u\in\mathcal{U}$. The resistance distance can be formulated as: 
  \begin{enumerate}
\item  For $u_1,u_2\in\mathcal{U}_i$, we have
    \begin{equation}\label{equation:r-p-uu}
      \begin{split}
          r(u_1,u_2)&=(\mathbf{e}_{u_1}-\mathbf{e}_{u_2})^T(\mathbf{L}_{\mathcal{U}_i\mathcal{U}_i}^{-1})(\mathbf{e}_{u_1}-\mathbf{e}_{u_2})\\
          &\quad +(\mathbf{p}_{u_1}-\mathbf{p}_{u_2})^T(\mathbf{L}/\mathcal{V}_i)^\dagger(\mathbf{p}_{u_1}-\mathbf{p}_{u_2});
      \end{split}
  \end{equation}

  \item For $u\in\mathcal{U}_i, v\in\mathcal{V}_i$, we have
        \begin{equation}\label{equation:r-p-uv}
      \begin{split}
          r(u,v)&=\mathbf{e}_{u}^T\mathbf{L}_{\mathcal{U}_i\mathcal{U}_i}^{-1}\mathbf{e}_{u}+(\mathbf{p}_{u}-\mathbf{e}_{v})^T(\mathbf{L}/\mathcal{V}_i)^\dagger(\mathbf{p}_{u}-\mathbf{e}_{v});
      \end{split}
  \end{equation}

  \item For $v_1,v_2\in\mathcal{V}_i$, we have
      \begin{equation}\label{equation:r-p-vv}
      \begin{split}
          r(v_1,v_2)&=(\mathbf{e}_{v_1}-\mathbf{e}_{v_2})^T(\mathbf{L}/\mathcal{V}_i)^\dagger(\mathbf{e}_{v_1}-\mathbf{e}_{v_2}).
      \end{split}
  \end{equation}
\end{enumerate}
\end{theorem}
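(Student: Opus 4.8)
The plan is to derive all three formulas from a single block-matrix identity: the relation between the pseudo-inverse of $\mathbf{L}$ (or a suitable grounded inverse) and the inverse of the diagonal block $\mathbf{L}_{\mathcal{U}_i\mathcal{U}_i}$ together with the pseudo-inverse of the Schur complement $\mathbf{L}/\mathcal{V}_i$. Concretely, I would first ground the Laplacian at an arbitrary node $v\in\mathcal{V}_i$, so that $\mathbf{L}_v$ is invertible, and use the formulas $r(s,t)=(\mathbf{e}_s-\mathbf{e}_t)^T\mathbf{L}_v^{-1}(\mathbf{e}_s-\mathbf{e}_t)$ for $s,t\neq v$ and $r(s,v)=\mathbf{e}_s^T\mathbf{L}_v^{-1}\mathbf{e}_s$, which are already stated in the excerpt. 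Now $\mathbf{L}_v$ has a block structure with $\mathcal{U}_i$ on one side and $\mathcal{V}_i\setminus\{v\}$ on the other, and the $\mathcal{U}_i$-block is exactly $\mathbf{L}_{\mathcal{U}_i\mathcal{U}_i}$. Applying the standard block-inverse (Banachiewicz) formula to $\mathbf{L}_v$ expresses $\mathbf{L}_v^{-1}$ in terms of $\mathbf{L}_{\mathcal{U}_i\mathcal{U}_i}^{-1}$, the matrix $\mathbf{p}=\mathbf{L}_{\mathcal{U}_i\mathcal{U}_i}^{-1}\mathbf{L}_{\mathcal{U}_i\mathcal{V}_i}$ (whose rows are the $\mathbf{p}_u$), and the inverse of the Schur complement of the $\mathcal{U}_i$-block, which is precisely $(\mathbf{L}/\mathcal{V}_i)$ restricted to $\mathcal{V}_i\setminus\{v\}$.

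Next I would substitute this block expansion into each of the three quadratic forms. For case (3), both endpoints lie in $\mathcal{V}_i$, so $\mathbf{e}_{v_1}-\mathbf{e}_{v_2}$ has zero $\mathcal{U}_i$-component; only the Schur-complement block survives, and one checks that $(\mathbf{e}_{v_1}-\mathbf{e}_{v_2})^T(\text{Schur block})^{-1}(\mathbf{e}_{v_1}-\mathbf{e}_{v_2})$ equals $(\mathbf{e}_{v_1}-\mathbf{e}_{v_2})^T(\mathbf{L}/\mathcal{V}_i)^\dagger(\mathbf{e}_{v_1}-\mathbf{e}_{v_2})$ — this is again the ``grounded inverse equals pseudo-inverse on differences'' fact, now applied to the graph/network whose Laplacian is $\mathbf{L}/\mathcal{V}_i$ (the Schur complement of a Laplacian is itself a Laplacian, a fact I would invoke or quickly justify). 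For case (2), $u\in\mathcal{U}_i$ and $v\in\mathcal{V}_i$: expanding $\mathbf{e}_u^T\mathbf{L}_v^{-1}\mathbf{e}_u$ via the block formula produces the term $\mathbf{e}_u^T\mathbf{L}_{\mathcal{U}_i\mathcal{U}_i}^{-1}\mathbf{e}_u$ plus a cross term that reorganizes into $(\mathbf{p}_u-\mathbf{e}_v)^T(\mathbf{L}/\mathcal{V}_i)^\dagger(\mathbf{p}_u-\mathbf{e}_v)$ — here the appearance of $\mathbf{e}_v$ comes from tracking the grounding node $v$ through the Schur-complement block. Case (1) is the analogous computation for $u_1,u_2\in\mathcal{U}_i$, where $(\mathbf{e}_{u_1}-\mathbf{e}_{u_2})$ is supported on $\mathcal{U}_i$ and the block expansion yields the $\mathbf{L}_{\mathcal{U}_i\mathcal{U}_i}^{-1}$ quadratic form plus the $\mathbf{p}$-difference term against $(\mathbf{L}/\mathcal{V}_i)^\dagger$.

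A cleaner alternative I would consider is to work directly with a known identity for the pseudo-inverse of a partitioned Laplacian, or to use the electrical-network interpretation: the Schur complement $\mathbf{L}/\mathcal{V}_i$ is the Laplacian of the network obtained by eliminating the vertices $\mathcal{U}_i$ via star-mesh transforms, so resistance distances between vertices of $\mathcal{V}_i$ are literally preserved. Then $\mathbf{p}_u$ is the harmonic extension (the vector of ``hitting probabilities''/voltages that vertex $u$ induces on the boundary $\mathcal{V}_i$ when $\mathcal{U}_i$ is grounded appropriately), and each formula becomes a decomposition ``resistance inside the $\mathcal{U}_i$-part'' plus ``resistance seen through the boundary''. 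This route makes the structure transparent but still needs the algebraic bookkeeping to match constants exactly.

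The main obstacle I anticipate is handling the grounding node $v$ consistently: the clean block-inverse formula applies to $\mathbf{L}_v$ (one row/column already removed), so the Schur-complement block one obtains is $(\mathbf{L}/\mathcal{V}_i)$ with the $v$-row/column removed, i.e. a \emph{grounded} version of the Schur-complement Laplacian, not $(\mathbf{L}/\mathcal{V}_i)^\dagger$ itself. Converting between ``$(\mathbf{L}/\mathcal{V}_i)$ grounded at $v$, inverted'' and ``$(\mathbf{L}/\mathcal{V}_i)^\dagger$'' — and checking that the vectors being plugged in ($\mathbf{p}_{u_1}-\mathbf{p}_{u_2}$, $\mathbf{p}_u-\mathbf{e}_v$, $\mathbf{e}_{v_1}-\mathbf{e}_{v_2}$) all lie in the correct range space so that the conversion is exact — is the delicate part, and it is also where the slightly different-looking $\mathbf{e}_v$ terms in the three cases get reconciled. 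Once that bookkeeping is pinned down, everything else is routine block-matrix algebra, so I would present the grounding lemma first and then treat the three cases in parallel.
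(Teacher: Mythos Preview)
The paper does not prove this theorem; it is quoted from the cited reference and used as a black box, so there is no in-paper proof to compare against. Your grounding-plus-block-inverse plan is the standard derivation and is correct. The step you rightly flag as delicate --- replacing the grounded Schur-complement inverse $((\mathbf{L}/\mathcal{V}_i)_v)^{-1}$ by $(\mathbf{L}/\mathcal{V}_i)^\dagger$ --- is resolved by the observation that the rows of $-\mathbf{L}_{\mathcal{U}_i\mathcal{U}_i}^{-1}\mathbf{L}_{\mathcal{U}_i\mathcal{V}_i}$ each sum to $1$ (from $\mathbf{L}\mathbf{1}=0$ one gets $\mathbf{L}_{\mathcal{U}_i\mathcal{V}_i}\mathbf{1}=-\mathbf{L}_{\mathcal{U}_i\mathcal{U}_i}\mathbf{1}$; these rows are exactly the harmonic measures you allude to). Consequently $\mathbf{p}_{u_1}-\mathbf{p}_{u_2}$, $\mathbf{p}_u-\mathbf{e}_v$, and $\mathbf{e}_{v_1}-\mathbf{e}_{v_2}$ all lie in $\mathbf{1}^\perp$, where the grounded inverse and the pseudo-inverse give identical quadratic forms. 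With that fact in hand, grounding at the target node $v\in\mathcal{V}_i$ for case~(2) is indeed the clean way to make $\mathbf{e}_v$ appear, and the remaining cases are the bookkeeping you describe.
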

\begin{figure}[t]
	\centering
	\begin{minipage}{0.48\linewidth}
	  \centering
	  \scalebox{0.58}{$\begin{bmatrix}
	   1 & -1 & 0 & 0 & 0 & 0 & 0 & 0 & 0 \\
	  -1 & 3 & -1 & 0 & 0 & 0 & 0 & 0 & -1 \\
	   0 & -1 & 3 & 0 & 0 & 0 & -1 & 0 & -1 \\
	   0 & 0 & 0 & 2 & -1 & -1 & 0 & 0 & 0 \\
	   0 & 0 & 0 & -1 & 2 & -1 & 0 & 0 & -1 \\
	   0 & 0 & 0 & -1 & -1 & 2 & 0 & 0 & 0 \\
	   0 & 0 & -1 & 0 & 0 & 0 & 3 & -1 & -1 \\
	   0 & 0 & 0 & -1 & 0 & 0 & -1 & 3 & -1 \\
	   0 & -1 & -1 & 0 & -1 & 0 & -1 & -1 & 5
	  \end{bmatrix}$}
	  \vspace{-1.0em}
	  \caption*{(a) Laplacian matrix $\mathbf{L}$}
	\end{minipage}
	\hspace{0.0\linewidth}
	\begin{minipage}{0.48\linewidth}
	  \centering
	  \scalebox{0.62}{$\begin{bmatrix}
	  1.62 & 0.62 & 0.24 & 0.03 & 0.01 & 0.02 & 0.09 & 0.04 \\
	  0.62 & 0.62 & 0.24 & 0.03 & 0.01 & 0.02 & 0.09 & 0.04 \\
	  0.24 & 0.24 & 0.47 & 0.06 & 0.02 & 0.04 & 0.19 & 0.08 \\
	  0.03 & 0.03 & 0.06 & 1.05 & 0.35 & 0.70 & 0.15 & 0.40 \\
	  0.01 & 0.01 & 0.02 & 0.35 & 0.78 & 0.57 & 0.05 & 0.13 \\
	  0.02 & 0.02 & 0.04 & 0.70 & 0.57 & 1.13 & 0.10 & 0.27 \\
	  0.09 & 0.09 & 0.19 & 0.15 & 0.05 & 0.10 & 0.46 & 0.21 \\
	  0.04 & 0.04 & 0.08 & 0.40 & 0.13 & 0.27 & 0.21 & 0.54
	  \end{bmatrix}$}
	  \vspace{-1.0em}
	  \caption*{(b) $\mathbf{L}_v^{-1}$}
	\end{minipage}
	
	\vspace{0.8em}
	\begin{minipage}{0.33\linewidth}
	  \centering
	  \scalebox{0.6}{$\begin{bmatrix}
	  1.60 & 0.60 & 0.20 & 0 & 0 & 0 \\
	  0.60 & 0.60 & 0.20 & 0 & 0 & 0 \\
	  0.20 & 0.20 & 0.40 & 0 & 0 & 0 \\
	  0 & 0 & 0 & 0.75 & 0.25 & 0.50 \\
	  0 & 0 & 0 & 0.25 & 0.75 & 0.50 \\
	  0 & 0 & 0 & 0.50 & 0.50 & 1.00
	  \end{bmatrix}$}
	  \vspace{-1.0em}
	  \caption*{(c) $\mathbf{L}_{\mathcal{U}_i\mathcal{U}_i}^{-1}$}
	\end{minipage}
	\hfill
	\begin{minipage}{0.31\linewidth}
	  \centering
	  \scalebox{0.65}{$\begin{bmatrix}
	  0.20 & 0 & 0.80 \\
	  0.20 & 0 & 0.80 \\
	  0.40 & 0 & 0.60 \\
	  0 & 0.75 & 0.25 \\
	  0 & 0.25 & 0.75 \\
	  0 & 0.50 & 0.50
	  \end{bmatrix}$}
	  \vspace{-1.0em}
	  \caption*{(d) $\mathbf{L}_{\mathcal{U}_i\mathcal{U}_i}^{-1}\mathbf{L}_{\mathcal{U}_i\mathcal{V}_i}$}
	\end{minipage}
	\hspace{-0.02\linewidth}
	\begin{minipage}{0.31\linewidth}
	  \centering
	  \scalebox{0.75}{$\begin{bmatrix}
	   0.17 & -0.11 & -0.01 \\
	  -0.11 &  0.20 & -0.09 \\
	  -0.01 & -0.09 &  0.16
	  \end{bmatrix}$}
	  \vspace{-1.0em}
	  \caption*{(e) $(\mathbf{L}/\mathcal{V}_i)^{\dagger}$}
	\end{minipage}
	
	\caption{An illustrative example of graph-related matrices of $\mathcal{G}$. (a) Laplacian matrix $\mathbf{L}$; (b) $\mathbf{L}_v^{-1}$, $v$ is selected as $v_9$; (c) $\mathbf{L}_{\mathcal{U}_i\mathcal{U}_i}^{-1}$, $\mathcal{U}_i=\{v_1,v_2,v_3,v_4,v_5,v_6\}$; (d) $\mathbf{L}_{\mathcal{U}_i\mathcal{U}_i}^{-1}\mathbf{L}_{\mathcal{U}_i\mathcal{V}_i}$; (e) $(\mathbf{L}/\mathcal{V}_i)^\dagger$.}
	\label{fig:example-graph-matrix}
\end{figure}
\begin{example}
	Fig.~\ref{fig:example-graph-matrix} illustrates several graph matrices that can be used to represent resistance distance. Fig.~\ref{fig:example-graph-matrix}(a) shows the Laplacian matrix $\mathbf{L}$ of the graph $\mathcal{G}$ from Fig.~\ref{fig:example-resistance-distance}(a). Fig.~\ref{fig:example-graph-matrix}(b) displays the matrix $\mathbf{L}_v^{-1}$, where $v$ is set to $v_9$. The resistance distance $r(v_1, v_9)$ is the element at the $v_1$-th row and column of $\mathbf{L}_v^{-1}$, which equals 1.62. Similarly, $r(v_2, v_4)$ can be computed as $r(v_2, v_4) = \mathbf{e}_2^T \mathbf{L}_v^{-1} \mathbf{e}_2 + \mathbf{e}_4^T \mathbf{L}_v^{-1} \mathbf{e}_4 - 2 \mathbf{e}_2^T \mathbf{L}_v^{-1} \mathbf{e}_4 = 1.61$. Fig.~\ref{fig:example-graph-matrix}(c)-(e) display the matrices $\mathbf{L}_{\mathcal{U}_i\mathcal{U}_i}^{-1}$, $\mathbf{L}_{\mathcal{U}_i\mathcal{U}_i}^{-1} \mathbf{L}_{\mathcal{U}_i\mathcal{V}_i}$, and $(\mathbf{L}/\mathcal{V}_i)^\dagger$, where $\mathcal{U}_i = \{v_1, v_2, v_3, v_4, v_5, v_6\}$. The resistance distance between $v_2$ and $v_4$ can be computed as $r(v_2, v_4) = (\mathbf{e}_2 - \mathbf{e}_4)^T \mathbf{L}_{\mathcal{U}_i\mathcal{U}_i}^{-1} (\mathbf{e}_2 - \mathbf{e}_4) + (\mathbf{p}_2 - \mathbf{p}_4)^T (\mathbf{L}/\mathcal{V}_i)^\dagger (\mathbf{p}_2 - \mathbf{p}_4) = 1.61$.
\end{example}
\subsection{Existing Solutions and Their Limitations}
The matrix-based formulations give flexible ways to express resistance distance. However, explicit use of numerical solvers can not scale to large graphs. Therefore, most existing methods focus on designing graph-based methods.
\stitle{Random Walk-based Approximate Methods.} Recently, a series of methods focuses on sampling random walks to approximate resistance distance \cite{KDDlocal21,ResistanceYang,22resistance,23resistance}. The basic idea is to develop a random walk-based estimator for $r(s,t)$ and use the Monte Carlo method to approximate the expectation. Specifically, \geer \cite{ResistanceYang} focuses on designing random walk algorithms to approximate $r(s,t)$, enabling guaranteed approximation results without accessing the entire graph. \bipush \cite{22resistance} utilizes variance-reduced random walk sampling to approximate elements of $\mathbf{L}_v^{-1}$. It heuristically selects $v$ as an easy-to-hit node, allowing the random walk to terminate quickly. On some graphs, finding a single suitable node $v$ is challenging. \leindex \cite{23resistance} extends the approach to express resistance distance via $\mathbf{L}_{\mathcal{U}_i\mathcal{U}_i}^{-1}$. \leindex is an index-based method that employs random walk and random spanning forest sampling to approximate $(\mathbf{L}/\mathcal{V}_i)^\dagger$ and $\mathbf{L}_{\mathcal{U}_i\mathcal{U}_i}^{-1} \mathbf{L}_{\mathcal{U}_i\mathcal{V}_i}$. It then stores the $|\mathcal{V}_i| \times |\mathcal{V}_i|$ and $|\mathcal{U}_i| \times |\mathcal{V}_i|$ matrices as an index. For queries, it computes resistance distance by only calculating elements of $\mathbf{L}_{\mathcal{U}_i\mathcal{U}_i}^{-1}$. These random walk-based methods scale resistance distance computation to large-scale networks. However, they are limited to approximate solutions and are sensitive to the spectral properties of the graph, often resulting in slow query times and large estimation errors on small tree-width graphs. 

\stitle{Laplacian Solver-based Exact Methods.} Another line of methods directly employs a Laplacian solver to compute resistance distance. Similar to random walk-based methods, exact numerical methods also suffer on small tree-width graphs due to their large condition numbers \cite{CoreTree14}. The basic idea of a Laplacian solver is to first construct a preconditioner to reduce the condition number and then apply traditional iterative methods like conjugate gradient. In theory, Laplacian solvers have achieved a near-linear complexity of $\widetilde{O}(m)$ \cite{GraphSparsificationEff08}. Several attempts have been made to make Laplacian solvers practical \cite{LaplacianSolver,RobustandPracticalLaplacian23,RCHOL21}. However, the $\widetilde{O}(m)$ complexity, along with the hidden large constant factor, still limits the query efficiency of resistance distance on large-scale networks.
\comment{
\stitle{Tree decomposition-based shortest path distance labelling.} Large road networks are typically sparse and have small tree widths, making them easily decomposable into disconnected pieces by removing a small number of nodes. Tree decomposition, a technique that breaks a graph into such disconnected components, is one of the most advanced methods for shortest path distance computation \cite{TEDISIGMOD10,LiJunChang2012exact,ProjectedVertexSeparator2021,HopLabeling2018hierarchy}. A representative method is \htwoh \cite{HopLabeling2018hierarchy}, which leverages the cut property of shortest path distances and tree decomposition to create a distance labelling over the graph. For a query $d(s,t)$, the distance can be quickly recovered by accessing the labels of the ancestors of $s$ and $t$ in the tree decomposition structure. \htwoh efficiently processes shortest path distance queries on large road networks. However, it is designed for shortest path computation, whereas resistance distance computation presents a fundamentally different problem.}
\subsection{Challenges of applying tree decomposition}
For the problem of shortest path distance computation, tree decomposition has been successfully applied to obtain superior performance on small treewidth graphs \cite{TEDISIGMOD10,LiJunChang2012exact,HopLabeling2018hierarchy}. Given the limitations of existing resistance distance computation approaches for graphs with small treewidth and the success of tree decomposition in shortest path distance computation, in this paper, we focus on applying tree decomposition to resistance distance computation. However, extending the tree decomposition-based method to resistance distance computation poses several primary challenges:

(1) $d(s,t)$ is only related to the shortest path between $s$ and $t$, while $r(s,t)$ is related to all paths between $s$ and $t$. Thus, the cut property of shortest path distance only needs a simple minimum operation. However, resistance distance computation indeed solves a Laplacian linear system, which requires a lot of complex matrix operations. Immediate matrix operation results should be stored as labels. It is non-trival to design such a new cut property, which requires an in-depth understanding of resistance distance. 

(2) To integrate the tree decomposition and vertex hierarchy property, given the proposed cut properties, the challenge is to recognize the relationship between the non-zero structure of matrix decomposition and the structure of tree decomposition. A closed form matrix-based formula of $r(s,t)$ in terms of the tree decomposition must be provided. It is also non-trivial to ensure that such a formula relates to only a small part of the tree decomposition.

(3) Compared with the labels of shortest path distance laeblling structure representing distance values, the labels of resistance distance laeblling scheme are immediate matrix computation results, which are hard to compute. Thus, it is challenging to construct the labels exactly, by leveraging the non-zero structure corresponding to the tree decomposition.

In the subsequent sections, we address the first challenge by expressing and simplifying the cut property of resistance distance using the operations of vector outer products (Section~\ref{subsec:cut-property}). For the second challenge, we employ Cholesky decomposition to the inverse Laplacian matrix, ensuring that the non-zero structure of the labelling corresponds exactly to the tree decomposition (Section~\ref{subsec:tree-decomposition-vertex-hierarchy}). For the third challenge, we compute the labels using incremental rank-1 updates, facilitating a bottom-up construction of the resistance distance labelling. By integrating these techniques, we develop a resistance distance labelling scheme with a time complexity comparable to that of tree decomposition-based shortest path distance labelling (Section~\ref{sec:resistance-distance-labelling}), marking a substantial advancement over the previous state-of-the-art index-based method \leindex \cite{23resistance}.

\comment{However, resistance distance poses two primary challenges compared to shortest path distance: (i) The cut property of resistance distance is not well-defined, and direct generalization necessitates computing the Schur complement, which is significantly less efficient than the minimization operation inherent in the cut property of shortest path distance; (ii) Unlike shortest path distance labelling, which comprises simple distance values, each label in the proposed resistance distance labelling represents an element of the inverse of a Laplacian submatrix, requiring the exact solution of a Laplacian linear system, a computationally intensive task. In the subsequent sections, we address the first challenge by employing Cholesky decomposition of the inverse Laplacian matrix to present the cut property of resistance distance, thereby enabling compact storage and efficient recovery (Section~\ref{sec:resistance-distance-property}). For the second challenge, we compute the labels using incremental rank-1 updates based on a node ordering, facilitating a bottom-up construction of the resistance distance labelling. By integrating these techniques, we develop a resistance distance labelling scheme with a time complexity comparable to that of tree decomposition-based shortest path distance labelling (Section~\ref{sec:resistance-distance-labelling}), marking a substantial advancement over the previous state-of-the-art index-based method \leindex \cite{23resistance}.}

\section{Resistance distance Properties}\label{sec:resistance-distance-property}
In this section, we first establish the cut property of resistance distance in Section~\ref{subsec:cut-property}. Then, we introduce the concept of tree decomposition and vertex hierarchy in Section~\ref{subsec:tree-decomposition-vertex-hierarchy}, showing that a resistance distance labelling can be built such that resistance distance $r(s,t)$ only relies on the labels of the ancestors of nodes $s$ and $t$ in the tree decomposition.
\subsection{Cut Property of Resistance Distance}\label{subsec:cut-property}
Given a graph $\mathcal{G} = (\mathcal{V},\mathcal{E})$, a vertex set $\mathcal{V}_{cut} \subset \mathcal{V}$ is called a \textit{vertex cut} if its removal from $\mathcal{G}$ results in multiple connected components. Suppose that $s \in \mathcal{V}_1$ and $t \in \mathcal{V}_2$, where $\mathcal{V}_1$ and $\mathcal{V}_2$ are two disconnected node sets obtained by deleting $\mathcal{V}_{cut}$. The cut property for the shortest path distance \cite{HopLabeling2018hierarchy} implies that
\begin{equation}
    d(s,t) = \min_{v \in \mathcal{V}_{cut}} \Big[ d(s,v) + d(v,t) \Big].
\end{equation}
According to this property, when the vertex cut contains only a few nodes, existing distance labelling methods \cite{LiJunChang2012exact,TEDISIGMOD10,HopLabeling2018hierarchy,ProjectedVertexSeparator2021} need only store the distances between nodes in $\mathcal{V}_1$ (or $\mathcal{V}_2$) and each $v \in \mathcal{V}_{cut}$. Consequently, a query for $d(s,t)$ can be quickly resolved by taking the minimum over all $v \in \mathcal{V}_{cut}$. 

\stitle{Warm up.} For resistance distance, we observe that the cut property holds as well if the vertex cut consists of only a single node $v$.
\begin{lemma}\label{lemma:cut-property-single}
  Let $s,t,v\in\mathcal{V}$. If $v$ is a cut vertex that the removal of $v$ splits $s$ and $t$ into different connected components. The resistance distance $r(s,t)$ satisfies: $r(s,t)=r(s,v)+r(v,t)$.
  \begin{proof}
    According to the resistance computation formula, we have:
    \begin{equation*}
      \begin{split}
        r(s,t)&=(\mathbf{e}_s-\mathbf{e}_t)^T\mathbf{L}_v^{-1}(\mathbf{e}_s-\mathbf{e}_t)\\
        &=\mathbf{e}_s^T\mathbf{L}_v^{-1}\mathbf{e}_s+\mathbf{e}_t^T\mathbf{L}_v^{-1}\mathbf{e}_t-2\mathbf{e}_s^T\mathbf{L}_v^{-1}\mathbf{e}_t\\
        &=\mathbf{e}_s^T\mathbf{L}_v^{-1}\mathbf{e}_s+\mathbf{e}_t^T\mathbf{L}_v^{-1}\mathbf{e}_t\\
        &=r(s,v)+r(v,t).
      \end{split}
    \end{equation*}
    Here, the third equality holds because $\mathbf{e}_s^T\mathbf{L}_v^{-1}\mathbf{e}_t=0$. This is because $\mathbf{e}_s^T\mathbf{L}_v^{-1}\mathbf{e}_t=\frac{\tau_v[s,t]}{d_t}$ \cite{22resistance}, where $\tau_v[s,t]$ is the expected number of passes to $t$ in a random walk starts from $s$ and terminates when it hits $v$. Since $v$ is a vertex cut of $s$ and $t$, it is impossible for a random walk from $s$ to pass $t$ before it hits $v$. Thus, $\tau_v[s,t]=0$. The Lemma is established.
  \end{proof}
\end{lemma}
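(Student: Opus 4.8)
The plan is to reduce the claim to a single vanishing cross term and then kill that term using the block structure that a cut vertex forces on $\mathbf{L}_v$. Starting from the identity $r(s,t) = (\mathbf{e}_s - \mathbf{e}_t)^T \mathbf{L}_v^{-1} (\mathbf{e}_s - \mathbf{e}_t)$ established above (valid for any choice of deleted node $v$ when $\mathcal{G}$ is connected), I would expand the quadratic form as $r(s,t) = \mathbf{e}_s^T \mathbf{L}_v^{-1}\mathbf{e}_s + \mathbf{e}_t^T \mathbf{L}_v^{-1}\mathbf{e}_t - 2\,\mathbf{e}_s^T \mathbf{L}_v^{-1}\mathbf{e}_t$. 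By the single-node formula $r(x,v) = \mathbf{e}_x^T \mathbf{L}_v^{-1}\mathbf{e}_x$, the first two summands are precisely $r(s,v)$ and $r(v,t)$, so the whole lemma is equivalent to the assertion $\mathbf{e}_s^T \mathbf{L}_v^{-1}\mathbf{e}_t = 0$.

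To establish that, I would exploit the partition induced by the cut. Let $\mathcal{V}_1$ be the connected component of $\mathcal{G} \setminus \{v\}$ containing $s$, and set $\mathcal{V}_2 = \mathcal{V} \setminus (\{v\} \cup \mathcal{V}_1)$, so that $t \in \mathcal{V}_2$. There is no edge of $\mathcal{G}$ between $\mathcal{V}_1$ and $\mathcal{V}_2$: such an edge would survive the deletion of $v$ and reconnect the two parts, contradicting that they lie in different components of $\mathcal{G} \setminus \{v\}$. Reordering rows and columns so that the indices of $\mathcal{V}_1$ precede those of $\mathcal{V}_2$, and then deleting the row and column of $v$, the matrix $\mathbf{L}_v$ becomes block diagonal, $\mathbf{L}_v = \mathrm{diag}(\mathbf{L}_v^{(1)}, \mathbf{L}_v^{(2)})$, because the only $\mathbf{L}$-entries linking $\mathcal{V}_1$ and $\mathcal{V}_2$ are the off-diagonal entries $-\mathbf{A}_{ij}$, which are all zero. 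Since $\mathcal{G}$ is connected, $\mathbf{L}_v$ is invertible, and the inverse of a block-diagonal invertible matrix is block diagonal; hence $\mathbf{e}_s^T \mathbf{L}_v^{-1}\mathbf{e}_t = 0$ as $s$ indexes the first block and $t$ the second. Substituting back gives $r(s,t) = r(s,v) + r(v,t)$.

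Two alternative routes are worth noting. The electrical one: since every $s$–$t$ path runs through $v$, Kirchhoff's current law at $v$ plus uniqueness of the electrical flow shows that the unit $s$-to-$t$ flow restricts to the unit $s$-to-$v$ flow on $\mathcal{G}[\mathcal{V}_1 \cup \{v\}]$ and to the unit $v$-to-$t$ flow on $\mathcal{G}[\mathcal{V}_2 \cup \{v\}]$, so summing $\mathbf{f}(e)$ along a path $\mathcal{P}_{st}$ that passes through $v$ and splitting the sum at $v$ yields the result. The probabilistic one (used in the paper's own proof): identify $\mathbf{e}_s^T \mathbf{L}_v^{-1}\mathbf{e}_t$ with a scaled hitting quantity $\tau_v[s,t]/d_t$ and observe it is zero because a walk from $s$ must hit $v$ before $t$.

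Since the statement is elementary, there is no genuine obstacle; the only care needed is the combinatorial bookkeeping around $\mathcal{G} \setminus \{v\}$ — confirming that no edge crosses between $\mathcal{V}_1$ and $\mathcal{V}_2$ and that $\mathbf{L}_v$ is genuinely invertible, so that "inverse of block diagonal is block diagonal" applies. In the electrical variant, the corresponding subtle point is the appeal to uniqueness of the electrical flow to justify restricting it to the two sides of the cut.
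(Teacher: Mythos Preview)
Your argument is correct. Both you and the paper reduce the claim to the vanishing of the cross term $\mathbf{e}_s^T\mathbf{L}_v^{-1}\mathbf{e}_t$, but you kill it by a different mechanism. The paper invokes the random-walk identity $\mathbf{e}_s^T\mathbf{L}_v^{-1}\mathbf{e}_t=\tau_v[s,t]/d_t$ from an external reference and argues that a walk from $s$ cannot visit $t$ before hitting the cut vertex $v$. Your primary route is purely linear-algebraic: the cut forces $\mathbf{L}_v$ to be block diagonal after reordering, so its inverse is block diagonal and the off-block entry is zero. Your argument is more self-contained (no cited identity needed) and in fact anticipates exactly the block structure the paper itself exploits one lemma later, when generalizing to a vertex cut $\mathcal{V}_{cut}$; the paper's probabilistic route, on the other hand, ties the algebra to the random-walk intuition that pervades the rest of their development. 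You correctly flag the only points needing care (no $\mathcal{V}_1$--$\mathcal{V}_2$ edges, invertibility of $\mathbf{L}_v$), and both are immediate from connectivity of $\mathcal{G}$ and the definition of a cut vertex.
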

However, when generalizing the cut property to a cut set $\mathcal{V}_{cut}$, the resistance distance can no longer be expressed solely in terms of the resistance distances from the vertex cut. As an alternative, inspired by the formulas presented in Theorem~\ref{theorem:formula-prob}, we generalize the cut property of resistance distance to a \textit{vertex cut} by introducing the concept of a \textit{contraction graph}. Formally, we have:
\begin{definition}[Contraction graph]
	Given a graph $\mathcal{G}$ and a node set $\mathcal{V}_1 \subset \mathcal{V}$, a \textit{contraction graph} $\mathcal{G}_{\mathcal{V}_1}$ is defined as the graph obtained by contracting $\mathcal{V}_1$ into a single node, such that all edges from nodes in $\mathcal{V}_1$ to nodes outside $\mathcal{V}_1$ are redirected to this new node.
\end{definition}
\begin{example}
	Fig.~\ref{fig:example-cut-property} illustrates examples of contraction graphs. Given the graph $\mathcal{G}$ in Fig.~\ref{fig:example-cut-property}(a), Fig.~\ref{fig:example-cut-property}(b) shows the contraction graph $\mathcal{G}_{\mathcal{V}_1}$ for $\mathcal{V}_1=\{v_1,v_2,v_3\}$, obtained by contracting $\mathcal{V}\setminus\mathcal{V}_1$ into a single node $\Delta_1$. Fig.~\ref{fig:example-cut-property}(c) shows the contraction graph $\mathcal{G}_{\mathcal{V}_2}$ for $\mathcal{V}_2=\{v_7,v_8,v_9\}$, obtained by contracting $\mathcal{V}\setminus\mathcal{V}_2$ into a single node $\Delta_2$.
\end{example}
Given a graph a vertex cut $\mathcal{V}_{cut}$, resistance distance between two nodes $s \in \mathcal{V}_1$ and $t \in \mathcal{V}_2$ can be expressed in terms of $r_{\mathcal{G}_{\mathcal{V}_1}}(s, \Delta_1)$, which denotes the resistance distance between $s$ and $\Delta_1$ in $\mathcal{G}_{\mathcal{V}_1}$, and $r_{\mathcal{G}_{\mathcal{V}_2}}(t, \Delta_2)$, which denotes the resistance distance between $t$ and $\Delta_2$ in $\mathcal{G}_{\mathcal{V}_2}$.
\begin{lemma}\label{lemma:cut-property} Let $\mathcal{U}=\mathcal{V}\setminus\mathcal{V}_{cut}$, $\mathbf{p}_s$ and $\mathbf{p}_t$ be the $s$-th and the $t$-th row of the matrix $-\mathbf{L}_{\mathcal{U}\mathcal{U}}^{-1}\mathbf{L}_{\mathcal{U}\mathcal{V}_{cut}}$. Then, we have: $\small r(s,t) = r_{\mathcal{G}_{\mathcal{V}_1}}(s,\Delta_1)+r_{\mathcal{G}_{\mathcal{V}_2}}(t,\Delta_2)+(\mathbf{p}_{s}-\mathbf{p}_{t})^T(\mathbf{L}/\mathcal{V}_{cut})^\dagger(\mathbf{p}_{s}-\mathbf{p}_{t})$.
\end{lemma}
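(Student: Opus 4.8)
The plan is to obtain the identity as a corollary of Theorem~\ref{theorem:formula-prob}(1), specialized to the partition $\mathcal{U}_i=\mathcal{U}=\mathcal{V}\setminus\mathcal{V}_{cut}$ and $\mathcal{V}_i=\mathcal{V}_{cut}$, and then to reinterpret the two ``local'' quadratic forms on the $\mathcal{U}$-side as single-node resistance distances inside the two contraction graphs. Since $s,t\notin\mathcal{V}_{cut}$, I can take $u_1=s$, $u_2=t$ in Eq.~\eqref{equation:r-p-uu}, which gives
\begin{equation*}
  r(s,t) = (\mathbf{e}_s-\mathbf{e}_t)^T\mathbf{L}_{\mathcal{U}\mathcal{U}}^{-1}(\mathbf{e}_s-\mathbf{e}_t) + (\mathbf{p}_s-\mathbf{p}_t)^T(\mathbf{L}/\mathcal{V}_{cut})^\dagger(\mathbf{p}_s-\mathbf{p}_t).
\end{equation*}
The $\mathbf{p}$-rows here are those of $\mathbf{L}_{\mathcal{U}\mathcal{U}}^{-1}\mathbf{L}_{\mathcal{U}\mathcal{V}_{cut}}$, i.e.\ the negatives of the ones in the statement; since $(\mathbf{L}/\mathcal{V}_{cut})^\dagger$ is symmetric, the quadratic form is insensitive to this overall sign, so the third summand already matches. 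It remains to show that the first summand equals $r_{\mathcal{G}_{\mathcal{V}_1}}(s,\Delta_1)+r_{\mathcal{G}_{\mathcal{V}_2}}(t,\Delta_2)$.

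For this I would use that $\mathcal{V}_{cut}$ is a separator. Ordering the nodes of $\mathcal{U}$ by connected component of $\mathcal{G}\setminus\mathcal{V}_{cut}$ makes $\mathbf{L}_{\mathcal{U}\mathcal{U}}$ block diagonal, with the block on the component $\mathcal{V}_1$ of $s$ equal to the principal submatrix $\mathbf{L}_{\mathcal{V}_1\mathcal{V}_1}$ of $\mathbf{L}$, and likewise $\mathbf{L}_{\mathcal{V}_2\mathcal{V}_2}$ for the component $\mathcal{V}_2$ of $t$. Each such block is positive definite — for a connected graph, any principal submatrix of $\mathbf{L}$ with at least one index deleted is PD, as a zero-extension argument shows — hence invertible, and $\mathbf{L}_{\mathcal{U}\mathcal{U}}^{-1}$ is the block diagonal of the blockwise inverses. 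Because $\mathbf{e}_s$ is supported in $\mathcal{V}_1$ and $\mathbf{e}_t$ in $\mathcal{V}_2$ (different blocks), the cross terms drop and
\begin{equation*}
  (\mathbf{e}_s-\mathbf{e}_t)^T\mathbf{L}_{\mathcal{U}\mathcal{U}}^{-1}(\mathbf{e}_s-\mathbf{e}_t) = \mathbf{e}_s^T\mathbf{L}_{\mathcal{V}_1\mathcal{V}_1}^{-1}\mathbf{e}_s + \mathbf{e}_t^T\mathbf{L}_{\mathcal{V}_2\mathcal{V}_2}^{-1}\mathbf{e}_t.
\end{equation*}

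The last and, I expect, the most error-prone step is matching $\mathbf{e}_s^T\mathbf{L}_{\mathcal{V}_1\mathcal{V}_1}^{-1}\mathbf{e}_s$ with $r_{\mathcal{G}_{\mathcal{V}_1}}(s,\Delta_1)$ (and symmetrically for $t$), since it is where the combinatorial contraction operation meets the linear algebra. I would argue directly on Laplacians: contracting $\mathcal{V}\setminus\mathcal{V}_1$ into $\Delta_1$ keeps every edge inside $\mathcal{V}_1$ unchanged and reroutes every edge leaving $\mathcal{V}_1$ onto $\Delta_1$, so for $i,j\in\mathcal{V}_1$ the off-diagonal Laplacian entries of $\mathcal{G}_{\mathcal{V}_1}$ coincide with those of $\mathcal{G}$, while the diagonal entry at $i$ is still $\sum_{j\neq i}w_{ij}=\mathbf{L}_{ii}$ because the weight that lands on the edge $(i,\Delta_1)$ is exactly the total weight of the edges from $i$ out of $\mathcal{V}_1$ (parallel edges to $\Delta_1$ are merged by summing weights, which is all the Laplacian sees). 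Hence deleting the row and column of $\Delta_1$ from the Laplacian of $\mathcal{G}_{\mathcal{V}_1}$ returns precisely $\mathbf{L}_{\mathcal{V}_1\mathcal{V}_1}$, and the single-node formula $r(s,v)=\mathbf{e}_s^T\mathbf{L}_v^{-1}\mathbf{e}_s$ applied in $\mathcal{G}_{\mathcal{V}_1}$ with $v=\Delta_1$ yields $r_{\mathcal{G}_{\mathcal{V}_1}}(s,\Delta_1)=\mathbf{e}_s^T\mathbf{L}_{\mathcal{V}_1\mathcal{V}_1}^{-1}\mathbf{e}_s$; symmetrically $r_{\mathcal{G}_{\mathcal{V}_2}}(t,\Delta_2)=\mathbf{e}_t^T\mathbf{L}_{\mathcal{V}_2\mathcal{V}_2}^{-1}\mathbf{e}_t$. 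Substituting the two identities into the decomposition of $r(s,t)$ finishes the argument; the only point to watch is that $\mathcal{V}_1$ and $\mathcal{V}_2$ must be exactly (unions of) connected components of $\mathcal{G}\setminus\mathcal{V}_{cut}$ for the block-diagonal step to hold, which is guaranteed by the hypothesis that $\mathcal{V}_{cut}$ separates $s$ from $t$.
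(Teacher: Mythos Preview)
Your proposal is correct and follows essentially the same route as the paper: start from Theorem~\ref{theorem:formula-prob}(1) with $\mathcal{U}_i=\mathcal{U}$, use the block-diagonal structure of $\mathbf{L}_{\mathcal{U}\mathcal{U}}^{-1}$ induced by the cut to kill the cross term and reduce the diagonal quadratic forms to $\mathbf{e}_s^T\mathbf{L}_{\mathcal{V}_1\mathcal{V}_1}^{-1}\mathbf{e}_s$ and $\mathbf{e}_t^T\mathbf{L}_{\mathcal{V}_2\mathcal{V}_2}^{-1}\mathbf{e}_t$, and then identify these with the contraction-graph resistances via the single-node formula $r(s,v)=\mathbf{e}_s^T\mathbf{L}_v^{-1}\mathbf{e}_s$. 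Your treatment is in fact more explicit than the paper's on two points the paper leaves implicit: the sign of $\mathbf{p}$ being irrelevant in the quadratic form, and why deleting the $\Delta_1$ row/column from the Laplacian of $\mathcal{G}_{\mathcal{V}_1}$ recovers exactly $\mathbf{L}_{\mathcal{V}_1\mathcal{V}_1}$.
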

\begin{proof}
    According to the resistance distance formula in Theorem~\ref{theorem:formula-prob}. We have:
    \begin{equation*}
      \begin{split}
        \small r(s,t) &= \mathbf{e}_{s}^T\mathbf{L}_{\mathcal{U}\mathcal{U}}^{-1}\mathbf{e}_{s}+\mathbf{e}_{t}^T\mathbf{L}_{\mathcal{U}\mathcal{U}}^{-1}\mathbf{e}_{t}-2\mathbf{e}_{s}^T\mathbf{L}_{\mathcal{U}\mathcal{U}}^{-1}\mathbf{e}_{t}\\
        &\quad+(\mathbf{p}_{s}-\mathbf{p}_{t})^T(\mathbf{L}/\mathcal{V}_{cut})^\dagger(\mathbf{p}_{s}-\mathbf{p}_{t})\\
        &= \mathbf{e}_{s}^T\mathbf{L}_{\mathcal{U}\mathcal{U}}^{-1}\mathbf{e}_{s}+\mathbf{e}_{t}^T\mathbf{L}_{\mathcal{U}\mathcal{U}}^{-1}\mathbf{e}_{t}\\
        &\quad+(\mathbf{p}_{s}-\mathbf{p}_{t})^T(\mathbf{L}/\mathcal{V}_{cut})^\dagger(\mathbf{p}_{s}-\mathbf{p}_{t})\\
        &= \mathbf{e}_{s}^T\mathbf{L}_{\mathcal{V}_1\mathcal{V}_1}^{-1}\mathbf{e}_{s}+\mathbf{e}_{t}^T\mathbf{L}_{\mathcal{V}_2\mathcal{V}_2}^{-1}\mathbf{e}_{t}\\
        &\quad+(\mathbf{p}_{s}-\mathbf{p}_{t})^T(\mathbf{L}/\mathcal{V}_{cut})^\dagger(\mathbf{p}_{s}-\mathbf{p}_{t})\\
        &=r_{\mathcal{G}_{\mathcal{V}_1}}(s,\Delta_1)+r_{\mathcal{G}_{\mathcal{V}_2}}(t,\Delta_2)\\
        &\quad+(\mathbf{p}_{s}-\mathbf{p}_{t})^T(\mathbf{L}/\mathcal{V}_{cut})^\dagger(\mathbf{p}_{s}-\mathbf{p}_{t}).
      \end{split}
  \end{equation*}
  Similar to the proof of Lemma~\ref{lemma:cut-property-single}, we can obtain that $\mathbf{e}_{s}^T\mathbf{L}_{\mathcal{U}\mathcal{U}}^{-1}\mathbf{e}_{t}=0$ for $s$ and $t$ in different connected components. $\mathbf{L}_{\mathcal{U}\mathcal{U}}^{-1}$ has a structure: $\small\left[\begin{matrix}
    \mathbf{L}_{\mathcal{V}_1\mathcal{V}_1}^{-1} & & \\
    & \mathbf{L}_{\mathcal{V}_2\mathcal{V}_2}^{-1} & \\
    & & \cdots
  \end{matrix}\right]$. Thus, the third and the fourth equality holds. The Lemma is established.
\end{proof}
\begin{figure}[t]
	\vspace*{-0.2cm}
	\begin{center}
		\begin{tabular}[t]{c}
			\subfigure[A vertex cut $\{v_7,v_8,v_9\}$]{
				\includegraphics[width=0.3\columnwidth, height=1.8cm]{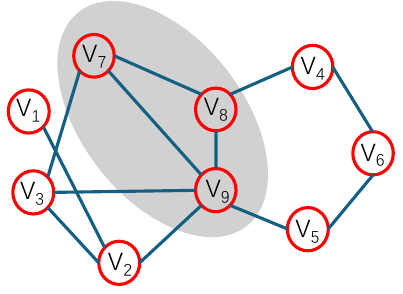}
			}
			\subfigure[$\mathcal{G}_{\mathcal{V}_1}$]{
				\raisebox{0.4cm}{\includegraphics[width=0.3\columnwidth, height=1.1cm]{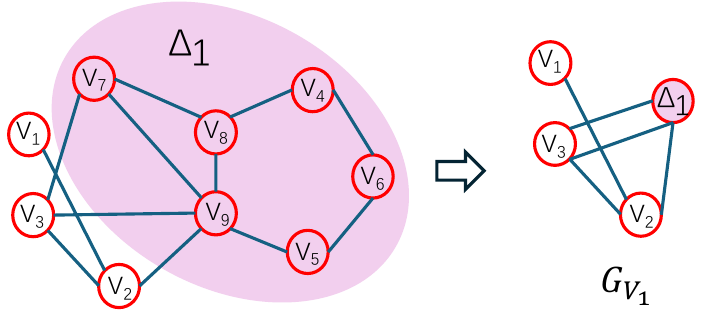}}
			}
      \subfigure[$\mathcal{G}_{\mathcal{V}_2}$]{
				\raisebox{0.4cm}{\includegraphics[width=0.3\columnwidth, height=1.1cm]{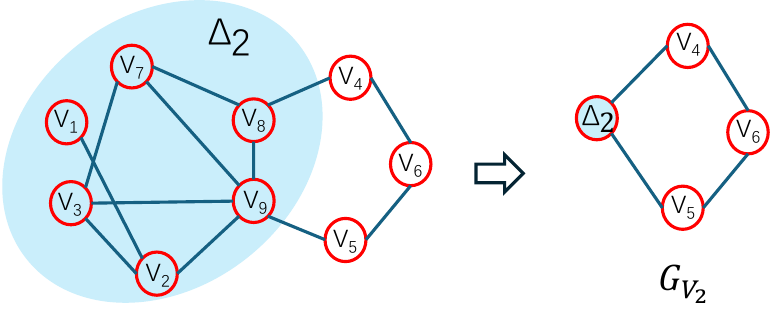}}
			}
		\end{tabular}
	\end{center}
	\vspace*{-0.6cm}
	\caption{An illustrative example of the cut property of resistance distance}
	\vspace*{-0.2cm}
	\label{fig:example-cut-property}
\end{figure}
\begin{example}\label{example:cut-property}
	An example illustrating the cut property of resistance distance is shown in Fig.~\ref{fig:example-cut-property}. Given the graph $\mathcal{G}$ in Fig.~\ref{fig:example-resistance-distance}(a), Fig.~\ref{fig:example-cut-property}(a) depicts a \textit{vertex cut} $\mathcal{V}_{cut} = \{v_7, v_8, v_9\}$ that separates $\mathcal{G}$ into two connected components. According to the cut property of shortest path distance, $d(v_2, v_4) = \min_{v \in \mathcal{V}_{cut}} d(v_2, v) + d(v, v_4) = 3$. Similarly, the cut property of resistance distance states that $r(v_2, v_4) = r(v_2, \Delta_1) + r(v_4, \Delta_2) + (\mathbf{p}_{v_2} - \mathbf{p}_{v_4})^T (\mathbf{L}/\mathcal{V}_{cut})^\dagger (\mathbf{p}_{v_2} - \mathbf{p}_{v_4}) = 1.61$.
\end{example}
According to Lemma~\ref{lemma:cut-property}, for a cut set $\mathcal{V}_{cut}$, the computation of the resistance distance $r(s, t)$ can be reduced to $r(s, \Delta_1)$ and $r(t, \Delta_2)$, both of which can be computed independently. However, Lemma~\ref{lemma:cut-property} also introduces additional complexity in computing $-\mathbf{L}_{\mathcal{U}\mathcal{U}}^{-1}\mathbf{L}_{\mathcal{U}\mathcal{V}_{cut}}$ and $(\mathbf{L}/\mathcal{V}_{cut})^\dagger$. The computation of the Schur complement and its pseudo-inverse is computationally expensive. To circumvent this, we propose a novel method to represent the cut property of resistance distance in terms of the \textit{Cholesky decomposition}. 

\stitle{Simplification by Cholesky decomposition.} According to the matrix-based formulations of resistance distance, $\mathbf{L}_v^{-1}$ encodes the resistance distances in $\mathcal{G}$, while $\mathbf{L}_{\mathcal{V}_1\mathcal{V}_1}^{-1}$ and $\mathbf{L}_{\mathcal{V}_2\mathcal{V}_2}^{-1}$ encode the resistance distances in $\mathcal{G}_{\mathcal{V}_1}$ and $\mathcal{G}_{\mathcal{V}_2}$, respectively. This leads to the following observation:
\begin{lemma}\label{lemma:matrix-difference} Let $\mathcal{U}=\mathcal{V}\setminus\mathcal{V}_{cut}$, we have:
  \begin{equation*}
    r(s,t)-r(s,\Delta_1)-r(t,\Delta_2)=(\mathbf{e}_{s}-\mathbf{e}_{t})^T(\mathbf{L}_v^{-1}-\left[\begin{matrix}
		\mathbf{L}_{\mathcal{U}\mathcal{U}}^{-1} & 0 \\
		0 & 0
	  \end{matrix}\right])(\mathbf{e}_{s}-\mathbf{e}_{t}).
  \end{equation*}
  \begin{proof}
    According to the definition of resistance distance, we can obtain that $r(s,t)=(\mathbf{e}_{s}-\mathbf{e}_{t})^T\mathbf{L}_v^{-1}(\mathbf{e}_{s}-\mathbf{e}_{t})$, $r(s,\Delta_1)=\mathbf{e}_{s}^T\mathbf{L}_{\mathcal{V}_1\mathcal{V}_1}^{-1}\mathbf{e}_{s}$, $r(t,\Delta_2)=\mathbf{e}_{t}^T\mathbf{L}_{\mathcal{V}_2\mathcal{V}_2}^{-1}\mathbf{e}_{t}$. Moreover, $\mathbf{L}_{\mathcal{U}\mathcal{U}}^{-1}$ has the block structure: $\small\left[\begin{matrix}
		\mathbf{L}_{\mathcal{V}_1\mathcal{V}_1}^{-1} & & \\
		& \mathbf{L}_{\mathcal{V}_2\mathcal{V}_2}^{-1} & \\
		& & \cdots
	  \end{matrix}\right]$, which establishes the lemma.
  \end{proof}
\end{lemma}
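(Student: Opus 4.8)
The plan is to expand every term into the quadratic-form expressions for resistance distance, namely the single-pair formula $r(s,t)=(\mathbf{e}_s-\mathbf{e}_t)^T\mathbf{L}_v^{-1}(\mathbf{e}_s-\mathbf{e}_t)$ (valid for any $v\neq s,t$) and the single-node formula $r(s,v)=\mathbf{e}_s^T\mathbf{L}_v^{-1}\mathbf{e}_s$, and then read off the identity from the block-diagonal structure of $\mathbf{L}_{\mathcal{U}\mathcal{U}}$ that the vertex cut forces. First I would take the deleted node in $\mathbf{L}_v^{-1}$ to be some $v\in\mathcal{V}_{cut}$; with this choice the surviving index set decomposes as $\mathcal{U}\cup(\mathcal{V}_{cut}\setminus\{v\})$, so the padded block matrix in the statement is conformally sized with $\mathbf{L}_v^{-1}$, and $r(s,t)$ is already in the desired form.

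Next I would rewrite the two contracted-graph terms. Applying the single-node formula inside $\mathcal{G}_{\mathcal{V}_1}$ with cut vertex $\Delta_1$ expresses $r_{\mathcal{G}_{\mathcal{V}_1}}(s,\Delta_1)$ as the $(s,s)$-entry of the inverse of the Laplacian of $\mathcal{G}_{\mathcal{V}_1}$ with row and column $\Delta_1$ deleted, and this Laplacian submatrix is exactly $\mathbf{L}_{\mathcal{V}_1\mathcal{V}_1}$: contraction leaves all intra-$\mathcal{V}_1$ edge weights untouched, and every $u\in\mathcal{V}_1$ keeps its original degree because its edges to $\mathcal{V}\setminus\mathcal{V}_1$ are merged, not dropped, into the single edge $(u,\Delta_1)$. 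Hence $r(s,\Delta_1)=\mathbf{e}_s^T\mathbf{L}_{\mathcal{V}_1\mathcal{V}_1}^{-1}\mathbf{e}_s$, and symmetrically $r(t,\Delta_2)=\mathbf{e}_t^T\mathbf{L}_{\mathcal{V}_2\mathcal{V}_2}^{-1}\mathbf{e}_t$; this is also the content of the remark made just before the lemma that $\mathbf{L}_{\mathcal{V}_i\mathcal{V}_i}^{-1}$ encodes the resistance distances of $\mathcal{G}_{\mathcal{V}_i}$.

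Finally, since deleting $\mathcal{V}_{cut}$ disconnects $\mathcal{G}$ into the components $\mathcal{V}_1,\mathcal{V}_2,\dots$, the matrix $\mathbf{L}_{\mathcal{U}\mathcal{U}}$, and hence $\mathbf{L}_{\mathcal{U}\mathcal{U}}^{-1}$, is block diagonal with diagonal blocks $\mathbf{L}_{\mathcal{V}_1\mathcal{V}_1},\mathbf{L}_{\mathcal{V}_2\mathcal{V}_2},\dots$. Because $s\in\mathcal{V}_1$ and $t\in\mathcal{V}_2$ both lie in $\mathcal{U}$, the vectors $\mathbf{e}_s,\mathbf{e}_t$ are supported in the $\mathcal{U}$-block, so the padded block matrix of the statement acts on them exactly as $\mathbf{L}_{\mathcal{U}\mathcal{U}}^{-1}$ does, and the corresponding quadratic form expands to $\mathbf{e}_s^T\mathbf{L}_{\mathcal{U}\mathcal{U}}^{-1}\mathbf{e}_s+\mathbf{e}_t^T\mathbf{L}_{\mathcal{U}\mathcal{U}}^{-1}\mathbf{e}_t-2\,\mathbf{e}_s^T\mathbf{L}_{\mathcal{U}\mathcal{U}}^{-1}\mathbf{e}_t$. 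The cross term vanishes since $s$ and $t$ lie in different diagonal blocks --- the same vanishing already used in the proof of Lemma~\ref{lemma:cut-property-single} --- while the two remaining terms collapse, by block-diagonality, to $\mathbf{e}_s^T\mathbf{L}_{\mathcal{V}_1\mathcal{V}_1}^{-1}\mathbf{e}_s+\mathbf{e}_t^T\mathbf{L}_{\mathcal{V}_2\mathcal{V}_2}^{-1}\mathbf{e}_t=r(s,\Delta_1)+r(t,\Delta_2)$. Subtracting this from $r(s,t)=(\mathbf{e}_s-\mathbf{e}_t)^T\mathbf{L}_v^{-1}(\mathbf{e}_s-\mathbf{e}_t)$ and recombining the two quadratic forms into a single one yields the claimed identity. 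I do not anticipate a genuine obstacle: the only step requiring real care --- rather than routine bilinear-form expansion --- is the claim that contraction preserves the principal submatrix on $\mathcal{V}_i$, which follows directly from how degrees and edge weights transform under contraction.
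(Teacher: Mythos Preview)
Your proposal is correct and follows essentially the same approach as the paper's proof: both invoke the quadratic-form formulas $r(s,t)=(\mathbf{e}_s-\mathbf{e}_t)^T\mathbf{L}_v^{-1}(\mathbf{e}_s-\mathbf{e}_t)$ and $r(s,\Delta_i)=\mathbf{e}_\cdot^T\mathbf{L}_{\mathcal{V}_i\mathcal{V}_i}^{-1}\mathbf{e}_\cdot$, then appeal to the block-diagonal structure of $\mathbf{L}_{\mathcal{U}\mathcal{U}}^{-1}$ induced by the cut. Your write-up is simply more explicit than the paper's terse version, in particular spelling out why the contracted Laplacian submatrix coincides with $\mathbf{L}_{\mathcal{V}_i\mathcal{V}_i}$, why one should take $v\in\mathcal{V}_{cut}$ for dimensional compatibility, and why the cross term $\mathbf{e}_s^T\mathbf{L}_{\mathcal{U}\mathcal{U}}^{-1}\mathbf{e}_t$ vanishes.
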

Based on Lemma~\ref{lemma:matrix-difference}, the key challenge in establishing the cut property of resistance distance is to represent the difference between $\mathbf{L}_v^{-1}$ and $\left[\begin{matrix}
	\mathbf{L}_{\mathcal{U}\mathcal{U}}^{-1} & 0 \\
	0 & 0
  \end{matrix}\right]$. This motivates us to introduce the concepts of Cholesky decomposition and the Schur complement.

First, we present an important property of the Schur complement of $\mathbf{L}_v^{-1}$. While existing studies~\cite{23resistance,LaplacianSolver} typically consider the Schur complement of $\mathbf{L}$, we instead focus on the Schur complement of $\mathbf{L}_v^{-1}$. We observe that the inverse of any Laplacian submatrix can be expressed as the Schur complement of the inverse of a Laplacian submatrix associated with a larger node set.
\begin{lemma}\label{lemma:schur-complement-inverse}
  Let $\mathcal{U}_1\subset\mathcal{U}_2\subset\mathcal{V}$ be two subsets of the nodes of $\mathcal{G}$, then $\mathbf{L}^{-1}_{\mathcal{U}_1\mathcal{U}_1}$ is the Schur complement of $\mathbf{L}^{-1}_{\mathcal{U}_2\mathcal{U}_2}$ with respect to the node set $\mathcal{U}_1$.
  \begin{proof}
	According to the block matrix decomposition formula, Let $S=A-BD^{-1}C$ be the Schur complement of $D$ with respect to $A$, we have:
	\begin{equation*}
		\small\begin{bmatrix}
			A & B \\
			C & D
		  \end{bmatrix}^{-1}=\begin{bmatrix}
		  S^{-1} & -S^{-1}BD^{-1} \\
		  -D^{-1}CS^{-1} & D^{-1}+D^{-1}CS^{-1}BD^{-1}
		  \end{bmatrix}.
	\end{equation*}
    We can prove this lemma using block matrix decomposition. First, let's partition $\mathbf{L}_{\mathcal{U}_2\mathcal{U}_2}$ according to $\mathcal{U}_1$ and $\mathcal{U}_2\setminus\mathcal{U}_1$:
    \begin{equation*}
      \small\mathbf{L}_{\mathcal{U}_2\mathcal{U}_2} = \begin{bmatrix}
        \mathbf{L}_{\mathcal{U}_1\mathcal{U}_1} & \mathbf{L}_{\mathcal{U}_1(\mathcal{U}_2\setminus\mathcal{U}_1)} \\
        \mathbf{L}_{(\mathcal{U}_2\setminus\mathcal{U}_1)\mathcal{U}_1} & \mathbf{L}_{(\mathcal{U}_2\setminus\mathcal{U}_1)(\mathcal{U}_2\setminus\mathcal{U}_1)}
      \end{bmatrix}.
    \end{equation*}
    
    Using the formula for the inverse of a block matrix, we have:
    \begin{equation*}
      \small\left(\mathbf{L}_{\mathcal{U}_2\mathcal{U}_2}^{-1}\right)^{-1} = \begin{bmatrix}
        A & B \\
        C & D
      \end{bmatrix}^{-1}
	  = \begin{bmatrix}
        \mathbf{L}_{\mathcal{U}_1\mathcal{U}_1} & \mathbf{L}_{\mathcal{U}_1(\mathcal{U}_2\setminus\mathcal{U}_1)} \\
        \mathbf{L}_{(\mathcal{U}_2\setminus\mathcal{U}_1)\mathcal{U}_1} & \mathbf{L}_{(\mathcal{U}_2\setminus\mathcal{U}_1)(\mathcal{U}_2\setminus\mathcal{U}_1)}
      \end{bmatrix}.
    \end{equation*}
    
    The top-left block is precisely the inverse of the Schur complement of $D$ with respect to $A$. Therefore, we have:
	\begin{align*}
		\tiny S=\mathbf{L}_{\mathcal{U}_1\mathcal{U}_1}^{-1} = &\left(\mathbf{L}_{\mathcal{U}_2\mathcal{U}_2}^{-1}\right)_{\mathcal{U}_1\mathcal{U}_1} -\left(\mathbf{L}_{\mathcal{U}_2\mathcal{U}_2}^{-1}\right)_{\mathcal{U}_1(\mathcal{U}_2\setminus\mathcal{U}_1)}\\
		&\tiny\left(\mathbf{L}_{\mathcal{U}_2\mathcal{U}_2}^{-1}\right)_{(\mathcal{U}_2\setminus\mathcal{U}_1)(\mathcal{U}_2\setminus\mathcal{U}_1)}^{-1}\left(\mathbf{L}_{\mathcal{U}_2\mathcal{U}_2}^{-1}\right)_{(\mathcal{U}_2\setminus\mathcal{U}_1)\mathcal{U}_1}.
	  \end{align*}
This is exactly the Schur complement of $\mathbf{L}_{\mathcal{U}_2\mathcal{U}_2}^{-1}$ with respect to the node set $\mathcal{U}_1$, which completes the proof.
  \end{proof}
\end{lemma}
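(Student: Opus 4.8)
The plan is to derive the statement from a single application of the standard block-inversion (Schur complement) formula, applied not to $\mathbf{L}$ but to the inverse submatrix $\mathbf{L}^{-1}_{\mathcal{U}_2\mathcal{U}_2}$. First I would fix notation and invertibility. Since $\mathcal{G}$ is connected and $\mathcal{U}_2\subsetneq\mathcal{V}$, the principal submatrix $\mathbf{L}_{\mathcal{U}_2\mathcal{U}_2}$ is positive definite, hence so is every principal submatrix of it and of its inverse; in particular $\mathbf{L}_{\mathcal{U}_1\mathcal{U}_1}$ and all blocks used below are invertible, so the Schur complements in question are well defined. Partition $\mathcal{U}_2=\mathcal{U}_1\cup(\mathcal{U}_2\setminus\mathcal{U}_1)$ and write $M:=\mathbf{L}^{-1}_{\mathcal{U}_2\mathcal{U}_2}$ in the corresponding block form
\begin{equation*}
  M=\begin{bmatrix} M_{11} & M_{12}\\ M_{21} & M_{22}\end{bmatrix},
\end{equation*}
so that the Schur complement of $M$ with respect to $\mathcal{U}_1$ is exactly $M_{11}-M_{12}M_{22}^{-1}M_{21}$, and the goal is to show this equals $\mathbf{L}^{-1}_{\mathcal{U}_1\mathcal{U}_1}$.

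Next I would record the one structural fact needed: the principal submatrix of $\mathbf{L}_{\mathcal{U}_2\mathcal{U}_2}$ on rows and columns indexed by $\mathcal{U}_1$ is nothing but $\mathbf{L}_{\mathcal{U}_1\mathcal{U}_1}$ (restricting a principal submatrix to a smaller index set just gives the principal submatrix on that set). Then I apply the block-inversion formula to $M$ itself: the $\mathcal{U}_1$-indexed block of $M^{-1}$ equals $(M_{11}-M_{12}M_{22}^{-1}M_{21})^{-1}$. But $M^{-1}=\mathbf{L}_{\mathcal{U}_2\mathcal{U}_2}$, whose $\mathcal{U}_1$-block is $\mathbf{L}_{\mathcal{U}_1\mathcal{U}_1}$ by the previous sentence. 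Hence $\mathbf{L}_{\mathcal{U}_1\mathcal{U}_1}=(M_{11}-M_{12}M_{22}^{-1}M_{21})^{-1}$, and taking inverses of both sides gives $M_{11}-M_{12}M_{22}^{-1}M_{21}=\mathbf{L}^{-1}_{\mathcal{U}_1\mathcal{U}_1}$, which is precisely the claim.

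The computation is two lines, so the only real point of care — and what makes the statement mildly counterintuitive — is the \emph{direction}: the familiar fact is that a Schur complement of the Laplacian (eliminating a vertex set) is again a Laplacian-type matrix on the remaining vertices (Kron reduction), whereas here we need the dual statement, that a Schur complement of the \emph{inverse} submatrix recovers the \emph{inverse} of a smaller submatrix. The trick that delivers this cleanly is to feed $M=\mathbf{L}^{-1}_{\mathcal{U}_2\mathcal{U}_2}$ — rather than $\mathbf{L}_{\mathcal{U}_2\mathcal{U}_2}$ — into the block-inversion identity, so that the roles of ``take the principal submatrix'' and ``take the inverse'' get swapped. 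I would also note that $\mathcal{U}_1=\mathcal{U}_2$ is trivial, and that the whole argument is purely algebraic; a probabilistic derivation via $\mathbf{e}_s^T\mathbf{L}^{-1}_{\mathcal{U}\mathcal{U}}\mathbf{e}_t=\tau_{\mathcal{V}\setminus\mathcal{U}}[s,t]/d_t$ and the interpretation of Gaussian elimination as watching a random walk only on the retained vertices is possible but longer, so I would not pursue it.
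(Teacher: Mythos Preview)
Your proof is correct and follows essentially the same route as the paper: both apply the block-inversion formula to $M=\mathbf{L}_{\mathcal{U}_2\mathcal{U}_2}^{-1}$, identify the $\mathcal{U}_1$-block of $M^{-1}=\mathbf{L}_{\mathcal{U}_2\mathcal{U}_2}$ as $\mathbf{L}_{\mathcal{U}_1\mathcal{U}_1}$, and invert to conclude that the Schur complement of $M$ on $\mathcal{U}_1$ equals $\mathbf{L}_{\mathcal{U}_1\mathcal{U}_1}^{-1}$. Your version is a bit more careful in that you justify invertibility via positive definiteness, which the paper leaves implicit.
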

Motivated by Lemma~\ref{lemma:schur-complement-inverse}, the problem of computing the matrix difference can be reduced to computing the Schur complement of $\mathbf{L}_{\mathcal{U}_2\mathcal{U}_2}^{-1}$. In numerical linear algebra, Gaussian elimination \cite{golub2013matrix} serves as a standard approach for computing such Schur complements. This method systematically transforms the matrix through a sequence of elementary row operations to eliminate specific elements. To implement this approach, we initialize the Schur complement matrix as $\widetilde{\mathcal{S}}_0=\mathbf{L}^{-1}_{\mathcal{U}_2\mathcal{U}_2}$. Without loss of generality, we establish an ordering $\{v_1,v_2,\ldots,v_{|\mathcal{U}_2\setminus\mathcal{U}_1|}\}$ for the nodes in $\mathcal{U}_2\setminus\mathcal{U}_1$. The Gaussian elimination algorithm then iteratively applies the following update procedure for each step $i\geq1$:

\begin{equation}
	\small\widetilde{\mathcal{S}}_i=\widetilde{\mathcal{S}}_{i-1}-\frac{\widetilde{\mathcal{S}}_{i-1}[:,v_i]\widetilde{\mathcal{S}}_{i-1}[:,v_i]^T}{\widetilde{\mathcal{S}}_{i-1}[v_i,v_i]},
\end{equation}
where $\widetilde{\mathcal{S}}_{i-1}[:,v_i]$ denotes the $v_i$-th column of $\widetilde{\mathcal{S}}_{i-1}$. While we defined a specific ordering above, it is worth noting that the elimination sequence can be arbitrary without affecting the final result. After completing the elimination of all nodes in $\mathcal{U}_2\setminus\mathcal{U}_1$, we obtain:
\begin{lemma}
	$\widetilde{\mathcal{S}}_{|\mathcal{U}_2\setminus\mathcal{U}_1|}=\left[\begin{matrix}
	\mathbf{L}_{\mathcal{U}_1\mathcal{U}_1}^{-1} & 0 \\ 0 & 0
	\end{matrix}\right]$.
	\begin{proof}
	  We will prove this by mathematical induction on the number of eliminated nodes.
	  
	  Base case: When no nodes have been eliminated ($i=0$), we have $\widetilde{\mathcal{S}}_0 = \mathbf{L}^{-1}_{\mathcal{U}_2\mathcal{U}_2}$.
	  
	  Inductive hypothesis: Assume that after eliminating $k$ nodes, the resulting matrix $\widetilde{\mathcal{S}}_k$ has the form where all rows and columns corresponding to the eliminated nodes are zero, and the submatrix corresponding to the remaining nodes correctly represents their Schur complement.
	  
	  Inductive step: Consider the elimination of node $v_{k+1}$. Let's partition the matrix $\widetilde{\mathcal{S}}_k$ as $\small\widetilde{\mathcal{S}}_k = \begin{bmatrix}
		A & b \\
		b^T & c
		\end{bmatrix}$, where $c = \widetilde{\mathcal{S}}_k[v_{k+1},v_{k+1}]$ is a scalar, $b = \widetilde{\mathcal{S}}_k[:,v_{k+1}]$ excluding the element $c$, and $A$ is the remaining submatrix.
	  
	  The elimination step gives:
	  \begin{equation*}
	  \small\widetilde{\mathcal{S}}_{k+1} = \widetilde{\mathcal{S}}_k - \frac{\widetilde{\mathcal{S}}_k[:,v_{k+1}]\widetilde{\mathcal{S}}_k[:,v_{k+1}]^T}{\widetilde{\mathcal{S}}_k[v_{k+1},v_{k+1}]} = \begin{bmatrix}
	  A - \frac{bb^T}{c} & 0 \\
	  0 & 0
	  \end{bmatrix}.
	  \end{equation*}
	  This is precisely the Schur complement operation. According to the block matrix inversion formula, if we have a matrix $M = \begin{bmatrix} A & B \\ C & D \end{bmatrix}$ and its inverse $M^{-1} = \begin{bmatrix} E & F \\ G & H \end{bmatrix}$, then $E = (A - BD^{-1}C)^{-1}$, which is the inverse of the Schur complement of $D$ in $M$.
	  
	  In our case, we're performing the elimination in the inverse matrix $\mathbf{L}^{-1}_{\mathcal{U}_2\mathcal{U}_2}$, and each elimination step corresponds to computing the Schur complement with respect to one node.
	  
	  After eliminating all nodes in $\mathcal{U}_2\setminus\mathcal{U}_1$, by Lemma~\ref{lemma:schur-complement-inverse}, the remaining submatrix corresponding to $\mathcal{U}_1$ is exactly $\mathbf{L}_{\mathcal{U}_1\mathcal{U}_1}^{-1}$, and all other elements are zero.
	  
	  Therefore, $\widetilde{\mathcal{S}}_{|\mathcal{U}_2\setminus\mathcal{U}_1|} = \begin{bmatrix} \mathbf{L}_{\mathcal{U}_1\mathcal{U}_1}^{-1} & 0 \\ 0 & 0 \end{bmatrix}$.
	\end{proof}
\end{lemma}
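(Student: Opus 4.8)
The plan is to prove the claim by induction on the number $k$ of nodes eliminated so far, maintaining the invariant that $\widetilde{\mathcal{S}}_k$ has the following block structure: the rows and columns indexed by $\{v_1,\dots,v_k\}$ are identically zero, and the principal submatrix on the surviving index set $\mathcal{U}_2\setminus\{v_1,\dots,v_k\}$ equals the Schur complement of $\mathbf{L}^{-1}_{\mathcal{U}_2\mathcal{U}_2}$ with respect to that set. The base case $k=0$ is immediate from $\widetilde{\mathcal{S}}_0=\mathbf{L}^{-1}_{\mathcal{U}_2\mathcal{U}_2}$, where the set of eliminated indices is empty.

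For the inductive step I would first check that the rank-$1$ update $\widetilde{\mathcal{S}}_k\mapsto\widetilde{\mathcal{S}}_{k+1}$ leaves the already-zeroed rows and columns untouched: since $\widetilde{\mathcal{S}}_k[v_i,\cdot]=\mathbf{0}$ for $i\le k$, the column $\widetilde{\mathcal{S}}_k[:,v_{k+1}]$ has a zero entry in coordinate $v_i$, so the outer product $\widetilde{\mathcal{S}}_k[:,v_{k+1}]\widetilde{\mathcal{S}}_k[:,v_{k+1}]^T$ has zero $v_i$-th row and column; subtracting it therefore preserves the zeros. Moreover the $v_{k+1}$-th row and column of $\widetilde{\mathcal{S}}_{k+1}$ become zero, because the subtracted term exactly cancels them. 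Restricted to the surviving index set $\mathcal{U}_2\setminus\{v_1,\dots,v_{k+1}\}$, the update is precisely one step of Schur-complement formation eliminating the single node $v_{k+1}$. I then invoke the quotient (transitivity) property of Schur complements: eliminating $v_{k+1}$ from the Schur complement of $\mathbf{L}^{-1}_{\mathcal{U}_2\mathcal{U}_2}$ on $\mathcal{U}_2\setminus\{v_1,\dots,v_k\}$ yields the Schur complement of $\mathbf{L}^{-1}_{\mathcal{U}_2\mathcal{U}_2}$ on $\mathcal{U}_2\setminus\{v_1,\dots,v_{k+1}\}$, which closes the induction.

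Two auxiliary points need care. First, each pivot $\widetilde{\mathcal{S}}_k[v_{k+1},v_{k+1}]$ must be nonzero for the update to be well defined; this follows because $\mathbf{L}_{\mathcal{U}_2\mathcal{U}_2}$ is a proper principal submatrix of the graph Laplacian and hence symmetric positive definite, so $\mathbf{L}^{-1}_{\mathcal{U}_2\mathcal{U}_2}$ is positive definite, and Schur complements of positive definite matrices remain positive definite, forcing strictly positive diagonal entries throughout. Second, after all $|\mathcal{U}_2\setminus\mathcal{U}_1|$ eliminations the surviving index set is exactly $\mathcal{U}_1$, so the invariant says the $\mathcal{U}_1$-block of $\widetilde{\mathcal{S}}_{|\mathcal{U}_2\setminus\mathcal{U}_1|}$ equals the Schur complement of $\mathbf{L}^{-1}_{\mathcal{U}_2\mathcal{U}_2}$ with respect to $\mathcal{U}_1$, which by Lemma~\ref{lemma:schur-complement-inverse} equals $\mathbf{L}^{-1}_{\mathcal{U}_1\mathcal{U}_1}$, while every other entry is zero — exactly the asserted block form.

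The step I expect to be the main obstacle is establishing the transitivity/quotient property of iterated Schur complements in a self-contained way, namely that eliminating the nodes of $\mathcal{U}_2\setminus\mathcal{U}_1$ one at a time (in any order) reproduces the single multi-node Schur complement that Lemma~\ref{lemma:schur-complement-inverse} identifies with $\mathbf{L}^{-1}_{\mathcal{U}_1\mathcal{U}_1}$, together with the bookkeeping that confirms the zeroed rows and columns are never revived. Both are classical facts from numerical linear algebra (equivalently, that Gaussian elimination on $\mathbf{L}^{-1}_{\mathcal{U}_2\mathcal{U}_2}$ computes Schur complements block-by-block), but they warrant an explicit argument here; the positive-definiteness remark ensuring nonzero pivots is a minor but necessary supporting ingredient.
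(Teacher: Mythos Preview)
Your proposal is correct and follows essentially the same inductive approach as the paper: both establish that each rank-$1$ update zeroes out the eliminated row/column while leaving the surviving block equal to a Schur complement, and both finish by invoking Lemma~\ref{lemma:schur-complement-inverse} to identify the final $\mathcal{U}_1$-block with $\mathbf{L}^{-1}_{\mathcal{U}_1\mathcal{U}_1}$. If anything, your version is more careful than the paper's, since you explicitly address the preservation of already-zeroed rows/columns, the positivity of pivots via positive definiteness, and the quotient property of iterated Schur complements, all of which the paper leaves implicit.
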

\comment{
	\begin{equation*}
		\tiny\widetilde{\mathcal{S}}_1 = \begin{bmatrix}
		1.62 & 0.62 & 0.23 & 0 & 0 & 0 & 0.08 & 0 \\
		0.62 & 0.62 & 0.23 & 0 & 0 & 0 & 0.08 & 0 \\
		0.23 & 0.23 & 0.46 & 0 & 0 & 0 & 0.15 & 0 \\
		0 & 0 & 0 & 0.75 & 0.25 & 0.50 & 0 & 0 \\
		0 & 0 & 0 & 0.25 & 0.75 & 0.50 & 0 & 0 \\
		0 & 0 & 0 & 0.50 & 0.50 & 1.00 & 0 & 0 \\
		0.08 & 0.08 & 0.15 & 0 & 0 & 0 & 0.38 & 0 \\
		0 & 0 & 0 & 0 & 0 & 0 & 0 & 0
		\end{bmatrix}.
		\begin{bmatrix}
			1.60 & 0.60 & 0.20 & 0 & 0 & 0 & 0 & 0 \\
			0.60 & 0.60 & 0.20 & 0 & 0 & 0 & 0 & 0 \\
			0.20 & 0.20 & 0.40 & 0 & 0 & 0 & 0 & 0 \\
			0 & 0 & 0 & 0.75 & 0.25 & 0.50 & 0 & 0 \\
			0 & 0 & 0 & 0.25 & 0.75 & 0.50 & 0 & 0 \\
			0 & 0 & 0 & 0.50 & 0.50 & 1.00 & 0 & 0 \\
			0 & 0 & 0 & 0 & 0 & 0 & 0 & 0 \\
			0 & 0 & 0 & 0 & 0 & 0 & 0 & 0
		\end{bmatrix}.
		\end{equation*}
}
\begin{example}\label{example:cholesky-decomposition}
	Consider the graph illustrated in Fig.\ref{fig:example-resistance-distance}(a). Let $\widetilde{\mathcal{S}}_0=\mathbf{L}_{\mathcal{U}_2\mathcal{U}_2}^{-1}=\mathbf{L}_v^{-1}$ with $\mathcal{U}_2=\mathcal{V}\setminus\{v_9\}$, as illustrated in Fig.~\ref{fig:example-graph-matrix}(b). We show the process to compute $\small\begin{bmatrix} \mathbf{L}_{\mathcal{U}_1\mathcal{U}_1}^{-1} & 0 \\ 0 & 0 \end{bmatrix}$ from $\mathbf{L}_{\mathcal{U}_2\mathcal{U}_2}^{-1}$ with $\mathcal{U}_1=\mathcal{V}\setminus\{v_8,v_9\}$. By applying $\small\widetilde{\mathcal{S}}_1 = \widetilde{\mathcal{S}}_0 - \frac{\widetilde{\mathcal{S}}_0[:,v_8]\widetilde{\mathcal{S}}_0[:,v_8]^T}{\widetilde{\mathcal{S}}_0[v_8,v_8]}$, we obtain:
	\begin{equation*}
	\tiny\widetilde{\mathcal{S}}_1 
	= \mathbf{L}_{\mathcal{U}_1\mathcal{U}_1}^{-1} 
	= \begin{bmatrix}
		1.62 & 0.62 & 0.23 & 0 & 0 & 0 & 0.08 & 0 \\
		0.62 & 0.62 & 0.23 & 0 & 0 & 0 & 0.08 & 0 \\
		0.23 & 0.23 & 0.46 & 0 & 0 & 0 & 0.15 & 0 \\
		0 & 0 & 0 & 0.75 & 0.25 & 0.50 & 0 & 0 \\
		0 & 0 & 0 & 0.25 & 0.75 & 0.50 & 0 & 0 \\
		0 & 0 & 0 & 0.50 & 0.50 & 1.00 & 0 & 0 \\
		0.08 & 0.08 & 0.15 & 0 & 0 & 0 & 0.38 & 0 \\
		0 & 0 & 0 & 0 & 0 & 0 & 0 & 0
	\end{bmatrix},
	\end{equation*}
	After eliminating $\mathcal{V}_{cut}=\{v_8,v_9\}$, we can observe the non-zero block structure that reflects two distinct connected components $\{v_1,v_2,v_3,v_7\}$ and $\{v_4,v_5,v_6\}$.
\end{example}

After applying \textit{Gaussian elimination}, we can observe a fundamental connection to the \textit{Cholesky decomposition} \cite{golub2013matrix} of $\mathbf{L}_v^{-1}$. The elimination process iteratively removes rank-1 updates from the matrix, each in the form of a vector outer product $\frac{\widetilde{\mathcal{S}}_i[:,v_k]\widetilde{\mathcal{S}}_i[:,v_k]^T}{\widetilde{\mathcal{S}}_i[v_k,v_k]}$. This reveals a crucial property: any Laplacian submatrix can be expressed as the sum of $n_k$ rank-1 matrices, where each rank-1 matrix is formed by the outer product of a column vector, and $n_k$ is the dimension of the matrix.
\begin{lemma}\label{lemma:cholesky-decomposition}
	Define $\mathcal{U}_i$ as the set of nodes that remain uneliminated at the point when node $v_i$ is being eliminated. Suppose that $\mathcal{S}[:,v_k]$ is the $v_k$-th column of $\widetilde{\mathcal{S}}_i$ at the moment of $v_i$'s elimination, then we have: $\small\left[\begin{matrix}
		\mathbf{L}_{\mathcal{U}_{i}\mathcal{U}_{i}}^{-1} & 0 \\ 0 & 0
		\end{matrix}\right]=\sum_{k=1}^i{\frac{\mathcal{S}[:,v_k]\mathcal{S}[:,v_k]^T}{\mathcal{S}[v_k,v_k]}}$. Specifically, we have: $\mathbf{L}_{\mathcal{U}_{2}\mathcal{U}_{2}}^{-1}-\left[\begin{matrix}
			\mathbf{L}_{\mathcal{U}_{1}\mathcal{U}_{1}}^{-1} & 0 \\ 0 & 0
			\end{matrix}\right]=\sum_{k=|\mathcal{U}_1|+1}^{|\mathcal{U}_2|}{\frac{\mathcal{S}[:,v_k]\mathcal{S}[:,v_k]^T}{\mathcal{S}[v_k,v_k]}}$.
	\begin{proof}
	The proof follows directly from the Gaussian elimination process. When we eliminate a node $v_i$, we perform the operation $\widetilde{\mathcal{S}}_{i} = \widetilde{\mathcal{S}}_{i-1} - \frac{\widetilde{\mathcal{S}}_{i-1}[:,v_i]\widetilde{\mathcal{S}}_{i-1}[:,v_i]^T}{\widetilde{\mathcal{S}}_{i-1}[v_i,v_i]}$. This means we're subtracting a rank-1 matrix from $\widetilde{\mathcal{S}}_{i-1}$. Each elimination step removes exactly one rank-1 matrix of the form $\frac{\mathcal{S}[:,v_k]\mathcal{S}[:,v_k]^T}{\mathcal{S}[v_k,v_k]}$. Since we start with $\mathbf{L}_{\mathcal{U}_2\mathcal{U}_2}^{-1}$ and end with $\left[\begin{matrix} \mathbf{L}_{\mathcal{U}_{1}\mathcal{U}_{1}}^{-1} & 0 \\ 0 & 0 \end{matrix}\right]$ after eliminating all nodes in $\mathcal{U}_2 \setminus \mathcal{U}_1$, the difference between these matrices must be the sum of all the rank-1 matrices we subtracted during elimination. Therefore, $\mathbf{L}_{\mathcal{U}_{2}\mathcal{U}_{2}}^{-1}-\left[\begin{matrix} \mathbf{L}_{\mathcal{U}_{1}\mathcal{U}_{1}}^{-1} & 0 \\ 0 & 0 \end{matrix}\right]=\sum_{k=|\mathcal{U}_1|+1}^{|\mathcal{U}_2|}{\frac{\mathcal{S}[:,v_k]\mathcal{S}[:,v_k]^T}{\mathcal{S}[v_k,v_k]}}$.
	\end{proof}
\end{lemma}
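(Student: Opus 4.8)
The plan is to obtain the identity as a telescoping sum of the rank-$1$ downdates that Gaussian elimination performs, with the two endpoints of the telescope pinned down by the block-structure lemma just established (the one showing that eliminating all of $\mathcal{U}_2\setminus\mathcal{U}_1$ turns $\widetilde{\mathcal{S}}_0=\mathbf{L}_{\mathcal{U}_2\mathcal{U}_2}^{-1}$ into $\left[\begin{smallmatrix}\mathbf{L}_{\mathcal{U}_1\mathcal{U}_1}^{-1}&0\\0&0\end{smallmatrix}\right]$). First I would fix the elimination order so that nodes are removed in the sequence $v_{|\mathcal{U}_2|},v_{|\mathcal{U}_2|-1},\dots,v_2,v_1$; since the matrix produced after eliminating any fixed set of nodes is independent of the order of removal, this is without loss of generality. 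Under this order, for every $k$ the matrix obtained after the first $|\mathcal{U}_2|-k$ eliminations has surviving node set $\mathcal{U}_k=\{v_1,\dots,v_k\}$, and applying the established lemma with $\mathcal{U}_k$ in the role of $\mathcal{U}_1$ shows that this matrix, which I call $\mathcal{B}_k$, equals $\left[\begin{smallmatrix}\mathbf{L}_{\mathcal{U}_k\mathcal{U}_k}^{-1}&0\\0&0\end{smallmatrix}\right]$; in particular $\mathcal{B}_0=\mathbf{0}$. Note that $\mathcal{B}_k$ is precisely the running matrix at the instant $v_k$ is about to be eliminated, so by definition $\mathcal{S}[:,v_k]=\mathcal{B}_k[:,v_k]$.

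Next I would unwind one elimination step: the update rule gives $\mathcal{B}_{k-1}=\mathcal{B}_k-\frac{\mathcal{S}[:,v_k]\mathcal{S}[:,v_k]^T}{\mathcal{S}[v_k,v_k]}$, equivalently $\mathcal{B}_k-\mathcal{B}_{k-1}=\frac{\mathcal{S}[:,v_k]\mathcal{S}[:,v_k]^T}{\mathcal{S}[v_k,v_k]}$. Summing from $k=1$ to $k=i$ and telescoping yields
\[
\left[\begin{matrix}\mathbf{L}_{\mathcal{U}_i\mathcal{U}_i}^{-1}&0\\0&0\end{matrix}\right]=\mathcal{B}_i=\mathcal{B}_i-\mathcal{B}_0=\sum_{k=1}^{i}\left(\mathcal{B}_k-\mathcal{B}_{k-1}\right)=\sum_{k=1}^{i}\frac{\mathcal{S}[:,v_k]\mathcal{S}[:,v_k]^T}{\mathcal{S}[v_k,v_k]},
\]
which is the first claim. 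For the ``specifically'' part I would instantiate this identity at $i=|\mathcal{U}_2|$ (so $\mathcal{U}_i=\mathcal{U}_2$ and the left-hand side is $\mathbf{L}_{\mathcal{U}_2\mathcal{U}_2}^{-1}$) and at $i=|\mathcal{U}_1|$, then subtract; the first $|\mathcal{U}_1|$ summands cancel, leaving $\sum_{k=|\mathcal{U}_1|+1}^{|\mathcal{U}_2|}\frac{\mathcal{S}[:,v_k]\mathcal{S}[:,v_k]^T}{\mathcal{S}[v_k,v_k]}$.

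The argument is short, and the only places needing genuine care — hence where I expect the only friction — are two bookkeeping points. First, the rank-$1$ update is well defined only if every pivot $\mathcal{S}[v_k,v_k]$ is nonzero; I would justify this by noting that $\mathbf{L}_{\mathcal{U}_2\mathcal{U}_2}$ is symmetric positive definite (a principal submatrix of the Laplacian of a connected graph with at least the row and column of one vertex deleted), so $\mathbf{L}_{\mathcal{U}_2\mathcal{U}_2}^{-1}$ is positive definite, and by Lemma~\ref{lemma:schur-complement-inverse} every $\mathcal{B}_k$ is (the zero-padding of) the inverse of a principal submatrix of $\mathbf{L}_{\mathcal{U}_2\mathcal{U}_2}$, hence positive definite with strictly positive diagonal. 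Second, one must keep the convention uniform that $\mathcal{S}[:,v_k]$ is the column \emph{frozen at the step when $v_k$ is eliminated} — not a column of $\mathbf{L}_{\mathcal{U}_2\mathcal{U}_2}^{-1}$ — and that every $\mathcal{B}_k$ is zero-padded to the common ambient dimension, so the matrix sums in the telescope are between equally sized matrices; with the fixed elimination order chosen above, both are automatic. Beyond this, the proof is exactly the displayed telescoping identity, so there is no substantial obstacle.
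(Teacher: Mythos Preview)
Your proposal is correct and follows essentially the same approach as the paper: both arguments telescope the rank-$1$ downdates of Gaussian elimination, using the previously established block-structure lemma to identify the endpoints. Your version is more careful than the paper's---you explicitly fix a compatible elimination order, verify that the pivots $\mathcal{S}[v_k,v_k]$ are strictly positive via positive-definiteness, and handle the zero-padding convention---whereas the paper's proof simply asserts that the difference between start and end equals the sum of subtracted rank-$1$ terms; but the underlying idea is identical.
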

Combined the above results, we can derive a simplified version of the cut property of resistance distance.
\begin{lemma}[Cut property of resistance distance]\label{lemma:simplified-cut-property}
  Let $\mathcal{S}[v_i,s]$ be the $s$-th element of $\mathbf{L}_{\mathcal{U}_i\mathcal{U}_i}^{-1}$. We have:
  \begin{equation*}
      \small r(s,t)=r_{\mathcal{G}_{\mathcal{V}_1}}(s,\Delta_1)+r_{\mathcal{G}_{\mathcal{V}_2}}(t,\Delta_2)+\sum_{v_i\in\mathcal{V}_{cut}}\frac{(\mathcal{S}[v_i,s]-\mathcal{S}[v_i,t])^2}{\mathcal{S}[v_i,v_i]}.
  \end{equation*}
\end{lemma}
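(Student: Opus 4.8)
The plan is to derive the identity by composing two results already proved in this section: Lemma~\ref{lemma:matrix-difference}, which collapses the left-hand side (once the two contraction-graph resistances are moved to the other side) into a single quadratic form $(\mathbf{e}_s-\mathbf{e}_t)^T(\cdot)(\mathbf{e}_s-\mathbf{e}_t)$ of a matrix \emph{difference}, and Lemma~\ref{lemma:cholesky-decomposition}, which rewrites precisely that matrix difference as a sum of rank-one outer products, one for each eliminated cut vertex. Expanding the quadratic form over these rank-one pieces then reproduces the claimed sum, so no new structural idea is required beyond assembling these two facts.

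Concretely, I would first fix an arbitrary cut vertex $v\in\mathcal{V}_{cut}$ --- it exists and is distinct from $s,t$, since $s\in\mathcal{V}_1$ and $t\in\mathcal{V}_2$ lie outside $\mathcal{V}_{cut}$ --- and set $\mathcal{U}=\mathcal{V}\setminus\mathcal{V}_{cut}$. Lemma~\ref{lemma:matrix-difference} then gives
\[
r(s,t)-r_{\mathcal{G}_{\mathcal{V}_1}}(s,\Delta_1)-r_{\mathcal{G}_{\mathcal{V}_2}}(t,\Delta_2)=(\mathbf{e}_s-\mathbf{e}_t)^T\Big(\mathbf{L}_v^{-1}-\left[\begin{matrix}\mathbf{L}_{\mathcal{U}\mathcal{U}}^{-1} & 0\\ 0 & 0\end{matrix}\right]\Big)(\mathbf{e}_s-\mathbf{e}_t).
\]
Applying Lemma~\ref{lemma:cholesky-decomposition} with $\mathcal{U}_2=\mathcal{V}\setminus\{v\}$ (so $\mathbf{L}_{\mathcal{U}_2\mathcal{U}_2}^{-1}=\mathbf{L}_v^{-1}$) and $\mathcal{U}_1=\mathcal{U}$, for which $\mathcal{U}_2\setminus\mathcal{U}_1=\mathcal{V}_{cut}\setminus\{v\}$, yields
\[
\mathbf{L}_v^{-1}-\left[\begin{matrix}\mathbf{L}_{\mathcal{U}\mathcal{U}}^{-1} & 0\\ 0 & 0\end{matrix}\right]=\sum_{v_i\in\mathcal{V}_{cut}\setminus\{v\}}\frac{\mathcal{S}[:,v_i]\,\mathcal{S}[:,v_i]^T}{\mathcal{S}[v_i,v_i]},
\]
with $\mathcal{S}[:,v_i]$ the $v_i$-th column of the running Schur complement at the moment $v_i$ is eliminated (equivalently the $v_i$-th column of $\mathbf{L}_{\mathcal{U}_i\mathcal{U}_i}^{-1}$, $\mathcal{U}_i$ the uneliminated set). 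Substituting this into the displayed quadratic form and using bilinearity, the $v_i$-th summand contributes $\big((\mathbf{e}_s-\mathbf{e}_t)^T\mathcal{S}[:,v_i]\big)^2/\mathcal{S}[v_i,v_i]=(\mathcal{S}[s,v_i]-\mathcal{S}[t,v_i])^2/\mathcal{S}[v_i,v_i]$; since every running Schur complement of the symmetric matrix $\mathbf{L}_v^{-1}$ is again symmetric we may replace $\mathcal{S}[s,v_i]$ by $\mathcal{S}[v_i,s]$ (and likewise for $t$), and these entries are defined throughout because $s,t$, lying outside $\mathcal{V}_{cut}$, are never eliminated. Summing over $v_i$ and rearranging gives the stated formula. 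I would also record that $\mathcal{S}[v_i,v_i]>0$ so the divisions are legitimate: $\mathbf{L}_v^{-1}$ is positive definite for connected $\mathcal{G}$, hence so is each of its Schur complements, and the pivots never vanish.

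Since the substantive content is carried by the earlier lemmas, the one delicate point is matching the index set of the sum: the elimination above runs over $\mathcal{V}_{cut}\setminus\{v\}$, whereas the statement writes $\sum_{v_i\in\mathcal{V}_{cut}}$. To reconcile them I would either (i) check that, under a suitable convention for the grounded vertex $v$, its term is null and may be adjoined, or (ii) read the displayed summation as being over exactly the separator vertices that are eliminated after grounding; this is the single place where I would pause to state the convention precisely. Everything else is a routine expansion of quadratic forms over the rank-one updates supplied by Lemma~\ref{lemma:cholesky-decomposition}.
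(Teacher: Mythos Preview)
Your proposal is correct and follows essentially the same route as the paper: invoke Lemma~\ref{lemma:matrix-difference} to reduce to a quadratic form in the matrix difference $\mathbf{L}_v^{-1}-\bigl[\begin{smallmatrix}\mathbf{L}_{\mathcal{U}\mathcal{U}}^{-1}&0\\0&0\end{smallmatrix}\bigr]$, then expand that difference via the rank-one sum of Lemma~\ref{lemma:cholesky-decomposition}. You are in fact more careful than the paper about the index set of the sum: the paper's statement writes $\sum_{v_i\in\mathcal{V}_{cut}}$ but its own worked example (with $\mathcal{V}_{cut}=\{v_7,v_8,v_9\}$ and grounded vertex $v_9$) sums only over $\{v_7,v_8\}$, confirming your reading that the grounded vertex is excluded.
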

\begin{proof}
  By applying Lemma~\ref{lemma:cholesky-decomposition}, we can express the difference between the inverse Laplacian matrices. Since $s \in \mathcal{V}_1$ and $t \in \mathcal{V}_2$, and considering that the elements of the matrix are zero outside their respective blocks, we can decompose the resistance distance calculation as follows:
  \begin{equation*}
   \small \mathbf{e}_s^T\mathbf{L}_{\mathcal{U}\mathcal{U}}^{-1}\mathbf{e}_t=\mathbf{e}_s^T\mathbf{L}_{\mathcal{V}_1\mathcal{V}_1}^{-1}\mathbf{e}_t+\mathbf{e}_s^T\mathbf{L}_{\mathcal{V}_2\mathcal{V}_2}^{-1}\mathbf{e}_t+\sum_{v\in\mathcal{V}_{cut}}\frac{\mathcal{S}[v,s]\mathcal{S}[v,t]}{\mathcal{S}[v,v]}.
\end{equation*}
The Lemma is established since $r_{\mathcal{G}_{\mathcal{V}}}(s,t)=\mathbf{e}_s^T\mathbf{L}_{\mathcal{U}\mathcal{U}}^{-1}\mathbf{e}_s+\mathbf{e}_t^T\mathbf{L}_{\mathcal{U}\mathcal{U}}^{-1}\mathbf{e}_t-2\mathbf{e}_s^T\mathbf{L}_{\mathcal{U}\mathcal{U}}^{-1}\mathbf{e}_t$ and Lemma~\ref{lemma:matrix-difference}.
\end{proof}
\begin{example}
	Consider the graph $\mathcal{G}$ illustrated in Fig.~\ref{fig:example-resistance-distance}(a), which has a vertex cut $\mathcal{V}_{cut}=\{v_7,v_8,v_9\}$. From Fig.~\ref{fig:example-graph-matrix}(b), we observe that when vertex $v_8$ is eliminated, the resulting Schur complement values are $\mathcal{S}[v_8,v_2]=0.04$, $\mathcal{S}[v_8,v_4]=0.40$ and $\mathcal{S}[v_8,v_8]=0.54$. Similarly, Example~\ref{example:cholesky-decomposition} shows that when vertex $v_7$ is eliminated, we obtain $\mathcal{S}[v_7,v_2]=0.08$, $\mathcal{S}[v_7,v_4]=0$, and $\mathcal{S}[v_7,v_7]=0.38$. Applying Lemma~\ref{lemma:simplified-cut-property}, we compute:
$\small r(v_2,v_4) = r_{\mathcal{G}_1}(v_2,\Delta_1) + r_{\mathcal{G}_2}(v_4,\Delta_2) + \sum_{v_i \in \{v_7,v_8\}} \frac{(\mathcal{S}[v_i,v_2] - \mathcal{S}[v_i,v_4])^2}{\mathcal{S}[v_i,v_i]}=1.61$.
\end{example}
The advantage of the simplified cut property is twofold: (i) \textit{compact storage}. For each node $u$ in the sets separated by the vertex cut, we only need to store a single element $\mathcal{S}[v,u]$ for $v\in\mathcal{V}_{cut}$; (ii) \textit{efficient recovery}. When recovering the resistance distance, it suffices to compute the squared differences between the corresponding elements of $\mathcal{S}$ for the nodes in the vertex cut, which exhibits complexity similar to that of a min operation.
\subsection{Dependency Property of Resistance Distance}\label{subsec:tree-decomposition-vertex-hierarchy}
Tree decomposition~\cite{TreeWidth84} is a widely used technique in algorithm design that transforms any graph into a tree structure, thereby imposing a natural hierarchy among its nodes. While Lemma~\ref{lemma:simplified-cut-property} establishes the cut property of resistance for a single \textit{vertex cut}, in this subsection, we generalize this property to the entire graph by introducing the concepts of tree decomposition and vertex hierarchy. Formally, tree decomposition is defiend as:
\begin{definition}(Tree decomposition)
A tree decomposition of a graph $\mathcal{G} = (\mathcal{V}, \mathcal{E})$ consists of a set of subsets (called bags) $\mathcal{X}_{\mathcal{G}} = {\mathcal{X}_1, \mathcal{X}_2, \dots, \mathcal{X}_{|\mathcal{X}_{\mathcal{G}}|}}$ of the node set $\mathcal{V}$, and a tree $\mathcal{T}_{\mathcal{X}_{\mathcal{G}}}$ with node set $\mathcal{X}_{\mathcal{G}}$, satisfying the following three properties:
	(i)	Every node $v \in \mathcal{V}$ appears in at least one bag, i.e., $\forall v \in \mathcal{V}, \exists \mathcal{X}_i \in \mathcal{X}_{\mathcal{G}}$ such that $v \in \mathcal{X}_i$;
	(ii)	For every edge $(u, v) \in \mathcal{E}$, there exists a bag $\mathcal{X}_i \in \mathcal{X}_{\mathcal{G}}$ such that $u, v \in \mathcal{X}_i$;
	(iii)	For every node $v \in \mathcal{V}$, the bags containing $v$ form a connected subtree in $\mathcal{T}_{\mathcal{X}_{\mathcal{G}}}$.
\end{definition}
\comment{
\begin{algorithm}[t]
  \small
  \caption{\mde heuristic tree decomposition \cite{Treewidth06}}
  \label{algo:tree-decomposition}
  \LinesNumbered
  \KwIn{Graph $\mathcal{G} = (\mathcal{V}, \mathcal{E})$}
  \KwOut{An approximate tree decomposition $\mathcal{T}_{min}$}
  Initialize $\mathcal{T}_{min}$ as an empty tree;\\
  \For{$i=1$ \KwTo $|\mathcal{V}|$}{
    Select node $v_i$ with minimum degree in $\mathcal{G}$;\\
    Create a bag $\mathcal{X}_i = \{v_i\} \cup \mathcal{N}_{\mathcal{G}}(v_i)$ and add it to $\mathcal{T}_{min}$;\\
    Remove $v_i$ from $\mathcal{G}$;\\
    Add edges to form a clique among all neighbors $\mathcal{N}_{\mathcal{G}}(v_i)$ in $\mathcal{G}$;
  }
  \For{each node $v_i \in \mathcal{V}$}{
    Find $v_j\in\mathcal{X}_i\setminus\{v_i\}$ with the minimum index $j$;\\
    Set $\mathcal{X}_i.\Parent() \leftarrow \mathcal{X}_j$;
  }
  \Return $\mathcal{T}_{min}$;
\end{algorithm}}
The width of a tree decomposition is defined as $ \max_{i} |\mathcal{X}_i| - 1 $, and the tree height $h_{\mathcal{G}}$ is the maximum distance from each node to the root node of the tree decomposition. The tree-width $ \tw(\mathcal{G}) $ of a graph $ \mathcal{G} $ is the minimum width among all possible tree decompositions of $\mathcal{G}$. Computing an exact tree decomposition is known to be NP-complete~\cite{TreeWidth84}; however, many efficient heuristic algorithms have been developed. Following previous studies~\cite{HopLabeling2018hierarchy,LiJunChang2012exact,ProjectedVertexSeparator2021}, in this paper, we focus on a specific approximate tree decomposition constructed using the \mde (minimum degree) heuristic, which is introduced in \cite{Treewidth06} and performs exceptionally well on real-world networks, exhibiting stronger vertex cut properties. The \mde heuristic algorithm computes a tree decomposition $\mathcal{T}_{min}$ with each tree node corresponding to a node in the graph $\mathcal{G}$ in $O(n(\tw(\mathcal{G})^2+\log n))$ time \cite{LiJunChang2012exact}. Due to space limits, the details of the algorithm can be found in the full version of this paper \cite{fullversion}.

\comment{
Algorithm~\ref{algo:tree-decomposition} presents the pseudo-code of the \mde heuristic tree decomposition algorithm. It iteratively identifies and processes nodes with the minimum degree in $\mathcal{G}$. For each selected node, the algorithm eliminates it from the graph and updates the structure by forming a clique among its neighbors. Simultaneously, the node and its neighbors are added as a bag to the tree decomposition $\mathcal{T}_{min}$ (Lines 2-6). After all nodes have been processed, the algorithm constructs the hierarchical tree structure by assigning each node a parent, specifically selecting the tree node with the minimum index among its potential parents (Lines 7-9). Note that each tree node $\mathcal{X}_i$ corresponds to a node $v_i \in \mathcal{V}$, which is not necessarily the case in a general tree decomposition. In the remainder of this paper, we assume that the tree decomposition refers to $\mathcal{T}_{min}$ and refer to its node $\mathcal{X}_i$ simply as $v_i$. The time complexity of Algorithm~\ref{algo:tree-decomposition} is $O(n(\tw(\mathcal{G})^2+\log n))$ \cite{LiJunChang2012exact}.}
\begin{example}
	Fig.~\ref{fig:example-RDL}(a) illustrates an example of a tree decomposition of the graph $\mathcal{G}$ in Fig.~\ref{fig:example-resistance-distance}(a), constructed using the \mde heuristic. The treewidth of $\mathcal{T}_{min}$ is $2$, and the tree height $h_\mathcal{G}$ is $6$.
\end{example}
The \mde heuristic tree decomposition possesses stronger vertex hierarchy properties than a general tree decomposition.
\begin{lemma}[Vertex hierarchy property of tree decomposition \cite{LiJunChang2012exact}]\label{lemma:vertex-hierarchy-property}
	For a tree decomposition $\mathcal{T}_{min}$ obtained using the \mde heuristic, we can derive: For any bag $\mathcal{X}_u$ in $\mathcal{T}_{min}$, all nodes in $\mathcal{X}_u$ except $u$ itself are ancestors of $u$ in $\mathcal{T}_{min}$. Consequently, for any two nodes $s$ and $t$, their lowest common ancestor (\lca) and its ancestor nodes in $\mathcal{T}_{min}$ form a vertex cut that partitions the graph into distinct connected components containing $s$ and $t$, respectively.
\end{lemma}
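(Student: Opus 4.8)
The plan is to argue directly from the elimination-game description of $\mathcal{T}_{min}$ underlying the \mde heuristic: nodes are removed in an order $v_1,\dots,v_n$; eliminating $v_i$ records the bag $\mathcal{X}_i=\{v_i\}\cup\mathcal{N}(v_i)$ in the current (partially eliminated, fill-augmented) graph, deletes $v_i$, and turns $\mathcal{N}(v_i)$ into a clique; and the parent of $v_i$ is the node of $\mathcal{X}_i\setminus\{v_i\}$ with the smallest index. Only this combinatorial structure — essentially the elimination tree of the fill-in of $\mathcal{G}$ — is needed; the minimum-degree rule that chooses the order is irrelevant to the statement. I would first prove the bag assertion by strong induction from the root downward (i.e.\ on $n-i$), then read off the consequence that every edge of $\mathcal{G}$ joins a node to one of its ancestors, and finally derive the separator claim by a short contradiction argument along a hypothetical $s$--$t$ path.

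For the bag assertion, fix $v_i$ and let $v_j$ be its parent in $\mathcal{T}_{min}$, i.e.\ the minimum-index node of $\mathcal{X}_i\setminus\{v_i\}$. Every $v_k\in\mathcal{N}(v_i)$ is still uneliminated, so $k>i$, and $j$ is the least such index, hence $i<j<k$ for each $v_k\in\mathcal{X}_i\setminus\{v_i,v_j\}$. When $v_i$ is eliminated, the clique placed on $\mathcal{N}(v_i)$ contains (or creates) the edge $(v_j,v_k)$; since an elimination deletes only the edges incident to the removed vertex, $(v_j,v_k)$ survives until step $j$, so at $v_j$'s elimination $v_k$ is still a neighbour of $v_j$, i.e.\ $v_k\in\mathcal{X}_j\setminus\{v_j\}$. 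By the induction hypothesis applied to $v_j$ (legitimate because $j>i$), $v_k$ is an ancestor of $v_j$, and since $v_j$ is the parent of $v_i$, $v_k$ is an ancestor of $v_i$; trivially $v_j$ itself is too. Hence $\mathcal{X}_i\setminus\{v_i\}$ consists of ancestors of $v_i$. The base case is $v_n$, whose bag is $\{v_n\}$ (for connected $\mathcal{G}$; otherwise proceed per connected component), so the claim holds vacuously.

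Next, if $(x,y)\in\mathcal{E}$ and $x$ precedes $y$ in the elimination order, then $(x,y)$ is still present when $x$ is eliminated, so $y\in\mathcal{X}_x\setminus\{x\}$ and therefore $y$ is an ancestor of $x$: \emph{every original edge runs between a node and one of its ancestors}. Now take $w=\lca(s,t)$, let $C$ be the root-path of $w$ (the node $w$ together with all its ancestors), and consider the generic configuration in which $s$ and $t$ lie in the subtrees of two distinct children $c_s\neq c_t$ of $w$, so $s,t\notin C$. Suppose an $s$--$t$ path $s=x_0,x_1,\dots,x_\ell=t$ avoided $C$. If $x_p$ lies in the subtree of $c_s$, then all of its ancestors outside that subtree lie in $C$ while all of its descendants stay inside it; since $x_{p+1}$ is comparable to $x_p$ in the tree (by the edge/ancestor fact) and is not in $C$, it must also lie in the subtree of $c_s$. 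Inductively $t$ lies in the subtree of $c_s$, contradicting the disjointness of the subtrees of $c_s$ and $c_t$. Hence $C$ separates $s$ from $t$, so $C$ is a vertex cut with $s$ and $t$ in different connected components of $\mathcal{G}\setminus C$.

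The genuinely delicate step is the induction in the bag assertion — concretely, justifying that the fill edge $(v_j,v_k)$ produced while eliminating $v_i$ persists all the way to the elimination of $v_j$, which rests on the index chain $i<j<k$ together with the fact that an elimination removes only the edges at the eliminated vertex; everything downstream (the edge/ancestor corollary and the path argument) is then routine. I would also remark on the degenerate cases $s=w$ or $t=w$ (one of $s,t$ an ancestor of the other), where the statement should be read as saying that the root-path of the lower of $s$ and $t$ already separates them; these do not occur in the applications of this lemma, since there $s$ and $t$ always sit in distinct components below the chosen cut.
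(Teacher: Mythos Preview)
The paper does not actually prove this lemma; it is stated with a citation to \cite{LiJunChang2012exact} and used as a black box, so there is no ``paper's own proof'' to compare against. Your argument is correct and is essentially the standard elimination-tree proof: the crucial step---that the fill edge $(v_j,v_k)$ created when $v_i$ is eliminated survives until step $j$ because $i<j<k$ and eliminations only remove edges incident to the eliminated vertex---is exactly right, and it cleanly gives $\mathcal{X}_i\setminus\{v_i\}\subseteq\mathcal{X}_j$, after which the downward induction goes through. The edge/ancestor corollary and the path argument for the separator claim are routine consequences, as you say. Your remark on the degenerate case where one of $s,t$ equals the \lca{} is a fair caveat on the lemma's literal phrasing; the paper indeed only ever invokes the lemma in the nondegenerate configuration.
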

\begin{example}
	Consider the tree decomposition $\mathcal{T}_{min}$ illustrated in Fig.~\ref{fig:example-RDL}(a). It can be observed that the node $v_8$ and its ancestor $v_9$ together form a vertex cut that partitions $\{v_1, v_2, v_3, v_7\}$ and $\{v_4, v_5, v_6\}$.
\end{example}
Combined with the cut property of resistance distance, the vertex hierarchy property of tree decomposition provides a compact approach for storing distance labels. We formally define resistance distance labelling and illustrate the non-zero structure that arises when it is integrated with the tree decomposition.
\begin{figure}[t!]
	\vspace*{-0.2cm}
	\begin{center}
		\begin{tabular}[t]{c}
			\subfigure[Tree decomposition of $\mathcal{G}$]{
				\raisebox{0.8cm}{\includegraphics[width=0.48\columnwidth]{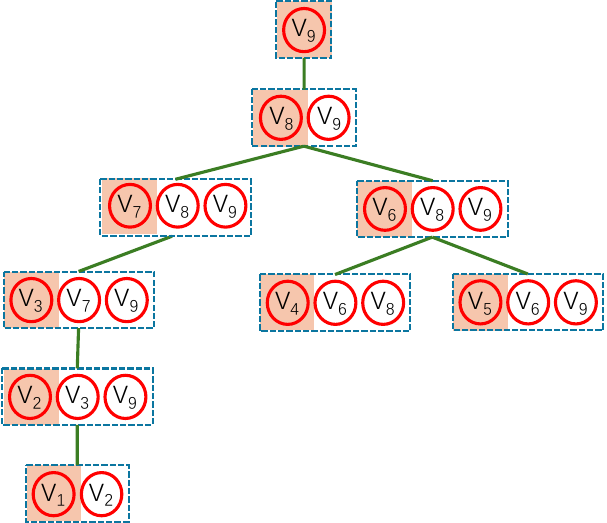}}
			}
			\subfigure[Resistance distance labelling]{
				\raisebox{0.1cm}{\includegraphics[width=0.48\columnwidth]{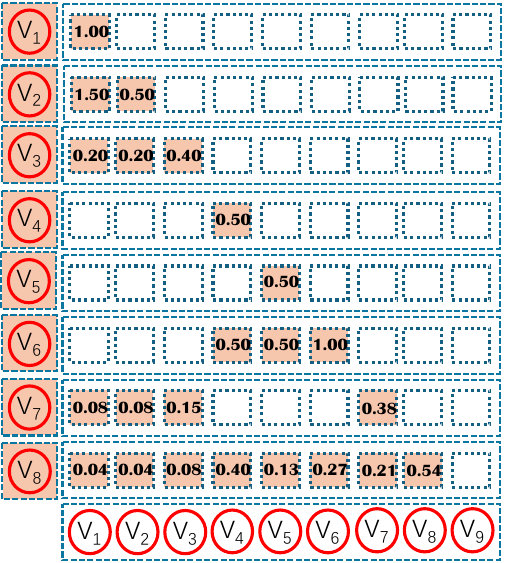}}
			}
		\end{tabular}
	\end{center}
	\vspace*{-0.6cm}
	\caption{An illustrative example of tree decomposition and resistance distance labelling}
	\vspace*{-0.2cm}
	\label{fig:example-RDL}
\end{figure}

\begin{definition}[Resistance distance labelling]\label{def:resistance-distance-labelling} Given a graph $\mathcal{G}$ and a tree decomposition $\mathcal{T}_{min}$, suppose that we apply \textit{Gaussian elimination} following the reverse ordering of the nodes processed in the \mde heuristic tree decomposition. The resistance distance labelling can be represented as $\mathcal{S}[v,u]$, which stores the $u$-th element of the $v$-th column of $\mathbf{L}^{-1}_{\mathcal{U}\mathcal{U}}$, $\mathcal{U}$ is the remaining set of nodes when $v$ is eliminated.
\end{definition}
\begin{lemma}[Non-zero structure of resistance distance labelling]
	The resistance distance labelling $\mathcal{S}[v,u]$ has the following non-zero structures: (i) For each node $v$, all nodes $u$ in the subtree of $v$ are non-zero; (ii) For each node $u$, all nodes $v$ in the path from $u$ to the root node are non-zero.
\begin{proof}
The non-zero structure of the resistance distance labelling can be derived from the vertex hierarchy property of the tree decomposition (Lemma~\ref{lemma:vertex-hierarchy-property}) and the cut property of resistance distance (Lemma~\ref{lemma:simplified-cut-property}).

For property (i), consider a node $v$ and any node $u$ in its subtree. When $v$ is eliminated during the \mde process, all nodes in its subtree (including $u$) are still present in the remaining graph. According to Definition~\ref{def:resistance-distance-labelling}, $\mathcal{S}[v,u]$ represents the $u$-th element of the $v$-th column of $\mathbf{L}^{-1}_{\mathcal{U}\mathcal{U}}$, where $\mathcal{U}$ is the set of remaining nodes after $v$ is eliminated. Since $u \in \mathcal{U}$, the corresponding entry in the inverse Laplacian is non-zero due to the connectivity between $v$ and nodes in its subtree. All other entries in the resistance distance labelling are zero because they correspond to pairs of nodes that are separated by vertex cut formed by ancestors of $v$ in $\mathcal{T}_{min}$, as established by the tree decomposition structure and the cut property of resistance distance.

For property (ii), it essentially provides the reverse perspective of property (i). Since $\mathcal{S}[v,u]$ is non-zero for all $u$ in the subtree of $v$, if $v$ lies on the path from $u$ to the root, we can conclude that $\mathcal{S}[v,u]$ is non-zero for all nodes $v$ that appear on the path from $u$ to the root in the tree decomposition. Conversely, for any node $v$ that is not on the path from $u$ to the root, $\mathcal{S}[v,u]$ will be zero. This includes two categories: (1) nodes in the subtree rooted at $u$ (i.e., children of $u$ and their descendants), and (2) nodes in other branches of the tree. For children of $u$ and their descendants, when these nodes are eliminated, $u$ has already been eliminated earlier according to the reverse ordering of the \mde process, so $u$ is not in the remaining set $\mathcal{U}$ when computing $\mathbf{L}^{-1}_{\mathcal{U}\mathcal{U}}$, resulting in zero entries. For nodes in other branches, the vertex hierarchy property ensures that the lowest common ancestor of $u$ and such nodes (along with its ancestors) forms a vertex cut that separates them in the original graph. By the cut property of resistance distance, this separation leads to zero entries in the resistance distance labelling.
\end{proof}
\end{lemma}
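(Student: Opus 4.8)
The plan is to reduce both claims to one sharper statement: for every vertex $v_i$ and every vertex $u$, the label $\mathcal{S}[v_i,u]$ is nonzero if and only if $u$ lies in the subtree of $v_i$ in $\mathcal{T}_{min}$; claim (ii) is then just the observation that $v$ lies on the $u$-to-root path exactly when $u$ lies in the subtree of $v$, so (i) and (ii) are the same assertion read from two sides, and the $\subseteq$ half of the characterization is what yields the $O(n\cdot h_{\mathcal{G}})$ size bound. I index the vertices $v_1,\dots,v_n$ by the order in which the \mde heuristic eliminates them, so that ancestors in $\mathcal{T}_{min}$ always carry larger indices and the root is $v_n$, and I assume $\mathcal{G}$ is connected (otherwise one argues component by component).

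The first step is to make $\mathcal{S}[v_i,\cdot]$ explicit. Since the Gaussian elimination of Definition~\ref{def:resistance-distance-labelling} runs in reverse \mde order starting from $\mathbf{L}_{v_n}^{-1}$, the vertices still alive when $v_i$ is about to be eliminated are exactly $\mathcal{U}_i=\{v_1,\dots,v_i\}$, and by Lemma~\ref{lemma:schur-complement-inverse} the active matrix restricted to $\mathcal{U}_i$ equals $\mathbf{L}_{\mathcal{U}_i\mathcal{U}_i}^{-1}$, the inverse of the submatrix of the \emph{original} Laplacian; hence $\mathcal{S}[v_i,u]=(\mathbf{L}_{\mathcal{U}_i\mathcal{U}_i}^{-1})_{v_i u}$ for $u\in\mathcal{U}_i$ and $\mathcal{S}[v_i,u]=0$ otherwise. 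Next I would pin down the sign pattern of $\mathbf{L}_{\mathcal{U}_i\mathcal{U}_i}^{-1}$: because $\mathcal{G}$ is connected and $v_n$ has been removed, every connected component of the induced subgraph $\mathcal{G}[\mathcal{U}_i]$ has an edge leaving $\mathcal{U}_i$, so $\mathbf{L}_{\mathcal{U}_i\mathcal{U}_i}$ is block diagonal along these components with each block a nonsingular irreducible symmetric M-matrix, whose inverse is therefore entrywise positive---equivalently, in the random-walk reading used in the proof of Lemma~\ref{lemma:cut-property-single}, $(\mathbf{L}_{\mathcal{U}_i\mathcal{U}_i}^{-1})_{v_i u}$ is $d_u^{-1}$ times the expected number of visits to $u$ of a walk from $v_i$ absorbed upon hitting $\mathcal{V}\setminus\mathcal{U}_i$. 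Either way this entry is strictly positive when $u$ lies in the same component of $\mathcal{G}[\mathcal{U}_i]$ as $v_i$ and vanishes otherwise, with no accidental cancellations; thus the nonzero set of $\mathcal{S}[v_i,\cdot]$ is exactly the connected component of $v_i$ in $\mathcal{G}[\mathcal{U}_i]$.

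It then remains to prove the purely combinatorial identity: for $u\in\mathcal{U}_i$, $u$ lies in the same component of $\mathcal{G}[\{v_1,\dots,v_i\}]$ as $v_i$ if and only if $u$ lies in the subtree of $v_i$ in $\mathcal{T}_{min}$. For the ``only if'' direction I would use Lemma~\ref{lemma:vertex-hierarchy-property}: if $u$ is not in the subtree of $v_i$, then the least common ancestor $w$ of $u$ and $v_i$ is a \emph{proper} ancestor of $v_i$, so $w$ and all of its ancestors have index $>i$ and form a vertex cut of $\mathcal{G}$ separating $u$ from $v_i$; being disjoint from $\{v_1,\dots,v_i\}$, this cut already disconnects $u$ from $v_i$ inside $\mathcal{G}[\{v_1,\dots,v_i\}]$, and so $\mathcal{S}[v_i,u]=0$. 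For the ``if'' direction I would climb the tree path $u=w_0,\, w_1=\mathrm{parent}(w_0),\, \dots,\, w_\ell=v_i$: each $w_{k+1}$ lies in the bag $\mathcal{X}_{w_k}$ and is therefore a neighbour of $w_k$ in the \mde-filled graph at the instant $w_k$ is eliminated, and a standard fill-in argument---each fill edge created during the \mde elimination joins its two endpoints by a path in $\mathcal{G}$ all of whose internal vertices are eliminated before both endpoints, proved by induction on the elimination step---upgrades this to a path in $\mathcal{G}$ from $w_k$ to $w_{k+1}$ using only vertices of index $\le i$; concatenating these paths yields a $u$--$v_i$ path inside $\mathcal{G}[\{v_1,\dots,v_i\}]$, hence $\mathcal{S}[v_i,u]>0$.

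I expect the ``if'' direction to be the only real obstacle: it is the sole place where the \mde fill-in structure genuinely intervenes, and where the weak statement ``$w_k$ and $w_{k+1}$ share a bag'' must be strengthened to ``$w_k$ and $w_{k+1}$ are connected through lower-indexed vertices''; by contrast the M-matrix / sign-pattern step and the cut direction are routine given the earlier lemmas. A minor point to dispose of is that the root $v_n$ is not eliminated by the Gaussian process---it is the grounded pivot---so (i)--(ii) are read for the non-root vertices, or with $\mathcal{S}[v_n,\cdot]$ fixed by grounding at some other vertex; either way the subtree of $v_n$ is all of $\mathcal{V}$ and the statement is immediate.
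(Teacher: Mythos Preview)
Your argument is correct and follows the same skeleton as the paper's proof---use the vertex-hierarchy property (\lca\ ancestors form a cut) for the zero direction, and connectivity inside $\mathcal{G}[\mathcal{U}_i]$ for the nonzero direction---but you supply two pieces the paper leaves implicit. First, the paper asserts that the entry is ``non-zero due to the connectivity between $v$ and nodes in its subtree'' without saying why no cancellation can occur; your M-matrix / absorbed-walk observation makes this rigorous by showing the relevant block of $\mathbf{L}_{\mathcal{U}_i\mathcal{U}_i}^{-1}$ is entrywise \emph{positive}. Second, the paper never verifies that $v_i$ is actually connected to each subtree node \emph{within} $\mathcal{G}[\{v_1,\dots,v_i\}]$; your fill-in argument (every fill edge is backed by a path through earlier-eliminated vertices, then climb the parent chain) is exactly what is needed here and is not present in the paper's proof. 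So the route is the same, but your version closes real gaps rather than merely restating them.
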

\comment{
\begin{proof}[Proof sketch]
	\textcolor{red}{The proof relies on the vertex hierarchy property and the cut property of resistance distance. For property (i), when a node $v$ is eliminated, any node $u$ in its subtree remains connected in the graph, resulting in a non-zero $\mathcal{S}[v,u]$. Property (ii) is the dual of (i): if $v$ is an ancestor of $u$, then $u$ is in $v$'s subtree, making $\mathcal{S}[v,u]$ non-zero. All other entries are zero because ancestral nodes form vertex cuts that separate unrelated node pairs.}
\end{proof}}
\begin{example}
Fig.~\ref{fig:example-RDL}(b) illustrates the non-zero structure of resistance distance labelling corresponding to the tree decomposition in Fig.~\ref{fig:example-RDL}(a). For node $v_2$, $\mathcal{S}[v_2,u]$ is non-zero for nodes in its subtree $\{v_1,v_2\}$, while $\mathcal{S}[v,v_2]$ is non-zero for nodes on the path to root $\{v_2,v_3,v_7,v_8\}$.
\end{example}
Given a tree decomposition and the corresponding resistance distance labelling, we can now demonstrate that the resistance distance $r(s,t)$ depends solely on the labels stored along the paths from nodes $s$ and $t$ to the root of the tree.
\begin{lemma}[Dependence property of resistance distance]\label{lemma:dependence-property}
	In $\mathcal{T}_{min}$ obtained from the \mde heuristic tree decomposition, the resistance distance $r(s,t)$ depends only on the labels stored along the paths from nodes $s$ and $t$ to the root of $\mathcal{T}_{min}$. Specifically,
  \begin{align*}
	  \tiny r(s,t) &= \sum_{v\in\mathcal{P}_{s\leadsto\lca(s,t)}}\frac{(\mathcal{S}[v,s])^2}{\mathcal{S}[v,v]}+\sum_{v\in\mathcal{P}_{t\leadsto\lca(s,t)}}\frac{(\mathcal{S}[v,t])^2}{\mathcal{S}[v,v]}\\
		&+\sum_{v\in\mathcal{P}_{\lca(s,t)\leadsto root}}\frac{(\mathcal{S}[v,s]-\mathcal{S}[v,t])^2}{\mathcal{S}[v,v]}.
  \end{align*}
  \begin{proof}
According to Lemma~\ref{lemma:cholesky-decomposition}, we have $\mathbf{L}_v^{-1} = \sum_{v}{\frac{\mathcal{S}[:,v]\mathcal{S}[:,v]^T}{\mathcal{S}[v,v]}}$. Substituting this into the resistance distance formula, we obtain:
\begin{align*}
    \small r(s,t) &= (\mathbf{e}_s-\mathbf{e}_t)^T\left(\sum_{v}{\frac{\mathcal{S}[:,v]\mathcal{S}[:,v]^T}{\mathcal{S}[v,v]}}\right)(\mathbf{e}_s-\mathbf{e}_t) \\
    &= \sum_{v}{\frac{(\mathbf{e}_s-\mathbf{e}_t)^T\mathcal{S}[:,v]\mathcal{S}[:,v]^T(\mathbf{e}_s-\mathbf{e}_t)}{\mathcal{S}[v,v]}} \\
    &= \sum_{v}{\frac{(\mathcal{S}[v,s]-\mathcal{S}[v,t])^2}{\mathcal{S}[v,v]}}.
\end{align*}

Based on the sparsity structure of the resistance distance labelling and the properties of the tree decomposition, we can partition this sum into three parts: (i) For $v \in \mathcal{P}_{s\leadsto\lca(s,t)}$, $\mathcal{S}[v,t] = 0$, so $\frac{(\mathcal{S}[v,s]-\mathcal{S}[v,t])^2}{\mathcal{S}[v,v]} = \frac{(\mathcal{S}[v,s])^2}{\mathcal{S}[v,v]}$.
(ii) For $v \in \mathcal{P}_{t\leadsto\lca(s,t)}$, $\mathcal{S}[v,s] = 0$, so $\frac{(\mathcal{S}[v,s]-\mathcal{S}[v,t])^2}{\mathcal{S}[v,v]} = \frac{(\mathcal{S}[v,t])^2}{\mathcal{S}[v,v]}$.
(iii) For $v \in \mathcal{P}_{\lca(s,t)\leadsto \text{root}}$, both $\mathcal{S}[v,s]$ and $\mathcal{S}[v,t]$ may be nonzero. Therefore, the sum of the three parts is equal to the resistance distance $r(s,t)$.
\end{proof}
  \end{lemma}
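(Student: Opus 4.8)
The plan is to reduce $r(s,t)$ to a sum of scalar rank-$1$ contributions indexed by the tree nodes, and then use the non-zero structure of the labelling to discard every contribution that lies outside the three relevant root-paths. Concretely, I would start from the matrix identity $r(s,t)=(\mathbf{e}_s-\mathbf{e}_t)^T\mathbf{L}_v^{-1}(\mathbf{e}_s-\mathbf{e}_t)$ (using instead $r(s,v)=\mathbf{e}_s^T\mathbf{L}_v^{-1}\mathbf{e}_s$ in the degenerate case that an endpoint is the removed reference node $v$), and substitute the Cholesky / Gaussian-elimination identity of Lemma~\ref{lemma:cholesky-decomposition}, namely $\mathbf{L}_v^{-1}=\sum_{w}\frac{\mathcal{S}[:,w]\mathcal{S}[:,w]^T}{\mathcal{S}[w,w]}$, where $w$ ranges over all tree nodes other than $v$ and $\mathcal{S}[:,w]$ is precisely the stored column of Definition~\ref{def:resistance-distance-labelling} under the reverse \mde elimination order.

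Next I would expand the quadratic form. By linearity and the identity $\mathcal{S}[:,w]^T(\mathbf{e}_s-\mathbf{e}_t)=\mathcal{S}[w,s]-\mathcal{S}[w,t]$, the contribution of node $w$ is $\frac{(\mathcal{S}[w,s]-\mathcal{S}[w,t])^2}{\mathcal{S}[w,w]}$, so $r(s,t)=\sum_{w}\frac{(\mathcal{S}[w,s]-\mathcal{S}[w,t])^2}{\mathcal{S}[w,w]}$; the denominators are the positive pivots, so the expression is well-defined. This step is purely algebraic once the decomposition is in place.

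The heart of the argument is the sparsity-based partition of this sum. By the non-zero structure lemma, $\mathcal{S}[w,u]\neq 0$ exactly when $u$ lies in the subtree rooted at $w$, equivalently when $w$ lies on the path from $u$ to the root (with $u$ itself included); hence a term is nonzero only when $w$ is an ancestor of $s$ or of $t$. I would split this ancestor set into three pairwise-disjoint pieces: (i) $\mathcal{P}_{s\leadsto\lca(s,t)}$, the part of $s$'s root-path strictly below $\lca(s,t)$, on which $w$ is not an ancestor of $t$ so $\mathcal{S}[w,t]=0$ and the term reduces to $\frac{(\mathcal{S}[w,s])^2}{\mathcal{S}[w,w]}$; (ii) $\mathcal{P}_{t\leadsto\lca(s,t)}$, the symmetric part for $t$, on which $\mathcal{S}[w,s]=0$; and (iii) $\mathcal{P}_{\lca(s,t)\leadsto root}$, where $w$ is an ancestor of both $s$ and $t$ and the full term $\frac{(\mathcal{S}[w,s]-\mathcal{S}[w,t])^2}{\mathcal{S}[w,w]}$ is retained. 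For every other $w$ neither endpoint lies in its subtree, so $\mathcal{S}[w,s]=\mathcal{S}[w,t]=0$ and the term vanishes. Collecting the three surviving groups yields exactly the stated formula.

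I expect the main obstacle to be this third step: carefully verifying that the three path segments are disjoint and jointly exhaust all nonzero terms, that the node $\lca(s,t)$ (an ancestor of \emph{both} endpoints) is attributed to the third sum rather than the first two, and handling boundary cases where one of the first two sums is empty (when $s$ or $t$ equals $\lca(s,t)$), where $\lca(s,t)$ coincides with the root, or where an endpoint is the removed reference node $v$. A secondary point, to be dispatched quickly, is confirming that the labels of Definition~\ref{def:resistance-distance-labelling} are literally the column data of Lemma~\ref{lemma:cholesky-decomposition} under the reverse \mde order, so the substitution in the first step is legitimate.
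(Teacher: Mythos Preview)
Your proposal is correct and follows essentially the same approach as the paper's own proof: substitute the Cholesky-type decomposition $\mathbf{L}_v^{-1}=\sum_w \mathcal{S}[:,w]\mathcal{S}[:,w]^T/\mathcal{S}[w,w]$ into the quadratic-form expression for $r(s,t)$, reduce each term to $(\mathcal{S}[w,s]-\mathcal{S}[w,t])^2/\mathcal{S}[w,w]$, and then invoke the non-zero structure of the labelling to partition the surviving terms into the three path segments. If anything, your version is more careful than the paper's about the disjointness of the segments, the placement of $\lca(s,t)$, and the boundary cases.
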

\begin{example}
    Consider computing $r(v_2,v_4)$ in Fig.~\ref{fig:example-RDL}. The path from $v_2$ to the root is $(v_2,v_3,v_7,v_8)$, and from $v_4$ to the root is $(v_4,v_6,v_8)$, with $\lca(v_2,v_4)=v_8$. Using Lemma~\ref{lemma:dependence-property}, we have: 
    \begin{align*}
    \small r(v_2,v_4) &= \sum_{v\in\{v_2,v_3,v_7\}}\frac{(\mathcal{S}[v,v_2])^2}{\mathcal{S}[v,v]} + \sum_{v\in\{v_4,v_6\}}\frac{(\mathcal{S}[v,v_4])^2}{\mathcal{S}[v,v]} \\
    &+ \sum_{v\in\{v_8\}}\frac{(\mathcal{S}[v,v_2]-\mathcal{S}[v,v_4])^2}{\mathcal{S}[v,v]}=1.61.
    \end{align*}
\end{example}

\section{The Proposed Labelling Scheme}\label{sec:resistance-distance-labelling}
Building upon the resistance distance labelling introduced in Section~\ref{sec:resistance-distance-property}, we now address the challenge of efficient implementation. Two key questions remain: (i) how can the resistance distance labelling be stored in a space-efficient manner, and (ii) how can this labelling be computed efficiently? In this section, we propose a resistance labelling scheme, \treeindex. We first describe the structure of \treeindex, followed by efficient algorithms for label construction. Finally, we present efficient query processing algorithms for both single-pair and single-source queries based on \treeindex.
\subsection{Labelling Structure}
To efficiently store the resistance distance labelling described in Definition~\ref{def:resistance-distance-labelling}, it is essential to first identify the non-zero structure of the labelling. This is achieved by performing a \dfs traversal on $\mathcal{T}_{min}$. The resulting labelling framework comprises two main components: (i) the tree decomposition $\mathcal{T}_{min}$, and (ii) the resistance distance labelling $\mathcal{S}$. For the first part, the tree decomposition $\mathcal{T}_{min}$ is stored as a tree structure. $\mathcal{T}_{min}.\Parent()$ maintains the parent for each node in the tree. We then perform a \dfs traversal on $\mathcal{T}_{min}$ to determine the position of each node $u$ in the subtree structure, where $\mathcal{T}_{min}.\dfsorder[u]$ records the order of each node in the traversal sequence. For the second part, for each node $u\in\mathcal{V}$, the resistance distance labelling $\mathcal{S}[u].\res$ stores the label values for all nodes in the subtree rooted at $u$, organized according to the \dfs ordering scheme. Specifically, we have:
\begin{lemma}
	$\mathcal{S}[u].\res$ contains exactly $|\mathcal{T}_{min}.\subtree[u]|$ elements, where $\mathcal{T}_{min}.\subtree[u]$ denotes the subtree rooted at $u$ in $\mathcal{T}_{min}$. $\mathcal{S}[v,u]$ can be visited via $\mathcal{S}[u].\res[\mathcal{T}_{min}.\dfsorder[u] - \mathcal{T}_{min}.\dfsorder[v]]$.
	\begin{proof}
		This result follows from the properties of \dfs traversal on trees. During a \dfs traversal, all nodes in the subtree rooted at any node $u$ are visited consecutively before the traversal backtracks to nodes outside this subtree. When performing a \dfs traversal on $\mathcal{T}_{min}$, each node $v$ is assigned a position $\mathcal{T}_{min}.\dfsorder[v]$ in the traversal sequence. For any node $v$ in the subtree of $u$, the values $\mathcal{T}_{min}.\dfsorder[v]$ form a contiguous range starting from $\mathcal{T}_{min}.\dfsorder[u]$. $\mathcal{T}_{min}.\dfsorder[u] - \mathcal{T}_{min}.\dfsorder[v]$ indicates the relative position of node $u$ within the subtree rooted at $v$, and is unique for each node $v$ in this subtree. In our index structure, $\mathcal{S}[v].\res$ stores the resistance distance labelling for all nodes in the subtree of $u$, arranged according to their relative positions in the \dfs ordering. Therefore, $\mathcal{S}[v].\res[\mathcal{T}_{min}.\dfsorder[u] - \mathcal{T}_{min}.\dfsorder[v]]$ directly retrieves the value of $\mathcal{S}[v,u]$ for any node $u$ in the subtree of $v$.
	\end{proof}
\end{lemma}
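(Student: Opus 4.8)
The plan is to prove the two assertions separately; both reduce to standard facts about the pre-order depth-first layout of $\mathcal{T}_{min}$, combined with the non-zero structure lemma. For the size claim, I would first appeal to that lemma: for a fixed first index $v$, the only entries $\mathcal{S}[v,u]$ that can be non-zero are exactly those with $u$ in the subtree of $v$ in $\mathcal{T}_{min}$ (property (i)), and all of these are indeed non-zero. Hence it is both sufficient and necessary to reserve one slot per node of $\mathcal{T}_{min}.\subtree[v]$, so $\mathcal{S}[v].\res$ contains exactly $|\mathcal{T}_{min}.\subtree[v]|$ elements.

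The core of the indexing argument is the claim that a pre-order \dfs of $\mathcal{T}_{min}$ assigns to every subtree a contiguous block of positions, with the subtree root receiving the smallest position in its block; precisely, $\{\mathcal{T}_{min}.\dfsorder[w] : w \in \mathcal{T}_{min}.\subtree[v]\} = \{\mathcal{T}_{min}.\dfsorder[v],\, \mathcal{T}_{min}.\dfsorder[v]+1,\, \ldots,\, \mathcal{T}_{min}.\dfsorder[v] + |\mathcal{T}_{min}.\subtree[v]| - 1\}$. I would establish this by induction on $|\mathcal{T}_{min}.\subtree[v]|$. The base case, a leaf, is immediate since the subtree is the singleton $\{v\}$. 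For the inductive step, upon entering $v$ the traversal records $\mathcal{T}_{min}.\dfsorder[v]$ and then recurses into the children $c_1,\ldots,c_d$ consecutively; by the induction hypothesis each child subtree occupies a contiguous block of length $|\mathcal{T}_{min}.\subtree[c_j]|$, and since these recursions run back-to-back the blocks are concatenated immediately after position $\mathcal{T}_{min}.\dfsorder[v]$. Because $\sum_{j=1}^{d} |\mathcal{T}_{min}.\subtree[c_j]| = |\mathcal{T}_{min}.\subtree[v]| - 1$, the union of $\{\mathcal{T}_{min}.\dfsorder[v]\}$ with these blocks is exactly the stated interval.

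Given contiguity, the map $w \mapsto \mathcal{T}_{min}.\dfsorder[w] - \mathcal{T}_{min}.\dfsorder[v]$ is a bijection from $\mathcal{T}_{min}.\subtree[v]$ onto $\{0, 1, \ldots, |\mathcal{T}_{min}.\subtree[v]| - 1\}$: injectivity follows from injectivity of $\mathcal{T}_{min}.\dfsorder$, and surjectivity from the interval characterization. Whenever $\mathcal{S}[v,u]$ can be non-zero, the non-zero structure lemma puts $v$ on the path from $u$ to the root, i.e.\ $u \in \mathcal{T}_{min}.\subtree[v]$, so $\mathcal{T}_{min}.\dfsorder[u] - \mathcal{T}_{min}.\dfsorder[v]$ is a legal, non-negative index into an array of length $|\mathcal{T}_{min}.\subtree[v]|$. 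Consequently, storing $\mathcal{S}[v,w]$ at position $\mathcal{T}_{min}.\dfsorder[w] - \mathcal{T}_{min}.\dfsorder[v]$ of $\mathcal{S}[v].\res$ records each non-zero entry exactly once with no collisions, and reading $\mathcal{S}[v].\res[\mathcal{T}_{min}.\dfsorder[u] - \mathcal{T}_{min}.\dfsorder[v]]$ returns $\mathcal{S}[v,u]$. The only step needing genuine care is the \dfs-contiguity induction, together with the bookkeeping point that the traversal must be pre-order so that an ancestor always receives the smaller position; the remainder is routine.
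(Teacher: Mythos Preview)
Your proposal is correct and follows essentially the same approach as the paper: both arguments rest on the standard fact that a pre-order \dfs visits any subtree as a contiguous block starting at the root, so the offset $\mathcal{T}_{min}.\dfsorder[u] - \mathcal{T}_{min}.\dfsorder[v]$ bijects $\mathcal{T}_{min}.\subtree[v]$ onto $\{0,\ldots,|\mathcal{T}_{min}.\subtree[v]|-1\}$. Your version is more careful---you supply an explicit induction for the contiguity claim and invoke the non-zero structure lemma to justify the size, whereas the paper simply asserts the \dfs property---but the substance is identical.
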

\begin{figure}[t!]
	\vspace*{-0.2cm}
	\begin{center}
		\begin{tabular}[t]{c}
			\subfigure[Tree decomposition processed by a \dfs ordering]{
				\raisebox{0.4cm}{\includegraphics[width=0.3\columnwidth, height=3.2cm]{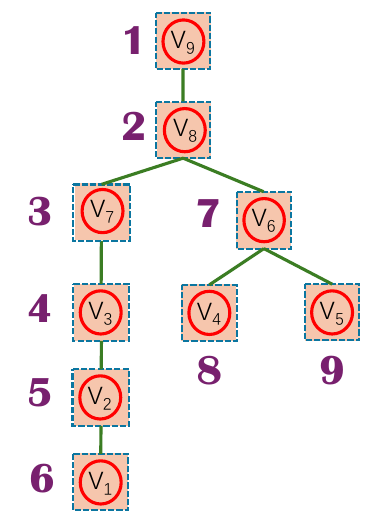}}
			}
			\subfigure[Resistance distance labelling]{
				\raisebox{0.1cm}{\includegraphics[width=0.52\columnwidth, height=3.6cm]{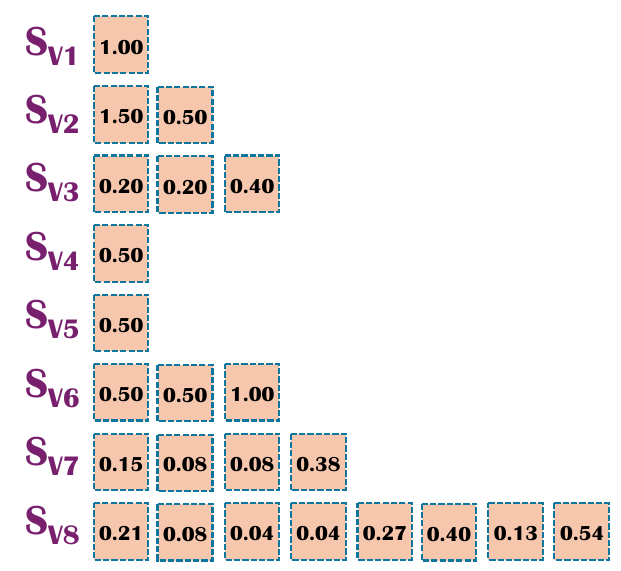}}
			}
		\end{tabular}
	\end{center}
	\vspace*{-0.6cm}
	\caption{An example of the labelling structure}
	\vspace*{-0.2cm}
	\label{fig:example-index-structure}
\end{figure}
\begin{example}
	Fig.~\ref{fig:example-index-structure} illustrates the labelling structure of \treeindex. Fig.~\ref{fig:example-index-structure}(a) depicts a \dfs ordering of $\mathcal{T}_{min}$ starting from the root node $v_9$, while Fig.~\ref{fig:example-index-structure}(b) presents the resistance distance labelling $\mathcal{S}$ rearranged according to this \dfs ordering. To access $\mathcal{S}[v_8, v_4]$, we compute $\mathcal{T}_{min}.\dfsorder[v_4] - \mathcal{T}_{min}.\dfsorder[v_8] = 8 - 2 = 6$, and then retrieve $\mathcal{S}[v_8].\res[6]$, which equals $0.40$ as shown in Fig.~\ref{fig:example-index-structure}(b).
\end{example}
Given the labelling structure, the following lemma gives an upper bound on the label size:
\begin{lemma}\label{lemma:index-size}
  The label size of \treeindex is $O(n\cdot h_{\mathcal{G}})$.
  \begin{proof}
    For each node $u$ in the graph, we need to store:
    (i) The parent pointer in $\mathcal{T}_{min}$, which requires $O(1)$ space per node;
    (ii) The \dfs ordering information, which also requires $O(1)$ space per node;
    (iii) The resistance distance labelling $\mathcal{S}[u].\res$, which stores values for all nodes in the path from $u$ to the root of $\mathcal{T}_{min}$. Since the height of $\mathcal{T}_{min}$ is at most $h_{\mathcal{G}}$, the path from any node to the root contains at most $h_{\mathcal{G}}$ nodes. Therefore, the size of $\mathcal{S}[u].\res$ is bounded by $O(h_{\mathcal{G}})$ for each node $u$. With $n$ nodes in total, the overall space complexity is $O(n) + O(n) + O(n \cdot h_{\mathcal{G}}) = O(n \cdot h_{\mathcal{G}})$.
  \end{proof}
\end{lemma}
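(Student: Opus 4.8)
The plan is to bound each of the three components of \treeindex{} separately and then add them. The tree decomposition $\mathcal{T}_{min}$ is stored only through $\mathcal{T}_{min}.\Parent()$ and the traversal positions $\mathcal{T}_{min}.\dfsorder$, each of which is a single integer per node, so these two pieces occupy $O(n)$ in total. The only component that can grow with the tree height is the resistance distance labelling $\mathcal{S}$, and for it I would invoke the non-zero structure lemma: by part~(i), the entries that must actually be stored in $\mathcal{S}[v].\res$ are exactly the $\mathcal{S}[v,u]$ with $u$ in $\mathcal{T}_{min}.\subtree[v]$, so $|\mathcal{S}[v].\res| = |\mathcal{T}_{min}.\subtree[v]|$; equivalently, by part~(ii), for a fixed $u$ the stored entries $\mathcal{S}[v,u]$ are precisely those with $v$ on the path from $u$ to the root of $\mathcal{T}_{min}$.

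Next I would convert the per-node counts into a single global count by swapping the order of summation. The total number of stored label values is
\[
\sum_{v\in\mathcal{V}} |\mathcal{S}[v].\res| \;=\; \sum_{v\in\mathcal{V}} |\mathcal{T}_{min}.\subtree[v]| \;=\; \sum_{u\in\mathcal{V}} \bigl|\{\, v : u\in\mathcal{T}_{min}.\subtree[v] \,\}\bigr|,
\]
and the inner set is exactly the set of ancestors of $u$ in $\mathcal{T}_{min}$ together with $u$ itself. Since $h_{\mathcal{G}}$ is defined as the maximum distance from a node to the root, every node has at most $h_{\mathcal{G}}+1$ such ancestors, hence the sum is at most $n(h_{\mathcal{G}}+1)$. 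Adding the $O(n)$ for the parent pointers and the $O(n)$ for the \dfsorder{} positions gives a total of $O(n) + O(n) + O(n\cdot h_{\mathcal{G}}) = O(n\cdot h_{\mathcal{G}})$.

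This is essentially a routine double-counting argument, so I do not anticipate a deep obstacle. The one point that deserves care is justifying that the \dfsorder-indexed array $\mathcal{S}[v].\res$ genuinely stores no entries outside the subtree of $v$ (so that its length really is $|\mathcal{T}_{min}.\subtree[v]|$ and not more): this rests on the non-zero structure lemma together with the fact that, under a \dfs{} ordering of $\mathcal{T}_{min}$, the nodes of any subtree occupy a contiguous range of positions, which is precisely what makes the addressing $\mathcal{S}[v,u] = \mathcal{S}[v].\res[\mathcal{T}_{min}.\dfsorder[u]-\mathcal{T}_{min}.\dfsorder[v]]$ well defined. Once that sparsity and contiguity are established, the bound follows immediately.
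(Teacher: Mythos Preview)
Your proposal is correct and reaches the same bound as the paper through essentially the same accounting, but you take a slightly different route for component~(iii). The paper's proof bounds each $\mathcal{S}[u].\res$ \emph{directly} by $O(h_{\mathcal{G}})$, using the path-to-root interpretation (part~(ii) of the non-zero structure lemma) as a per-node bound. You instead start from the subtree interpretation (part~(i)), so that $|\mathcal{S}[v].\res|=|\mathcal{T}_{min}.\subtree[v]|$, and only then recover the $h_{\mathcal{G}}$ factor globally by the double-counting identity $\sum_v |\subtree[v]| = \sum_u |\{\text{ancestors of }u\}|$. Your version is arguably the more faithful one, since the array $\mathcal{S}[v].\res$ is in fact indexed by the subtree of $v$ (and can individually have length up to $n$, not $h_{\mathcal{G}}$); the paper's per-node statement is really shorthand for the same dual count. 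Either way, the argument is routine and your justification of the sparsity and contiguous-\dfs-range point is exactly what is needed.
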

\comment{
\begin{proof}[Proof sketch]
	\textcolor{red}{The total label size is the sum of storage required for all nodes. For each node $u$, due to its non-zero structure, only needs to store values for the path from $u$ to the root. Since this path has a length at most the tree height $h_{\mathcal{G}}$, the total size is $O(n\cdot h_{\mathcal{G}})$.}
\end{proof}}
In practice, the tree height $h_{\mathcal{G}}$ is typically not very large, and the actual label size is often much smaller than the theoretical upper bound (see Section~\ref{sec:experiment}, Table~\ref{tab:datasets}). For example, in the full USA road network, the tree height is $3,976$. The actual number of non-zero labels is approximately $2,268 \times n$, corresponding to less than $405$~GB, which can be entirely loaded into memory for a commodity server..

\subsection{Label Construction Algorithm}
Based on our analysis, the first component $\mathcal{T}_{min}$ of the labelling structure can be computed during the tree decomposition process and a simple \dfs traversal. The main challenge lies in computing the second component, $\mathcal{S}$, which consists of elements from the inverse Laplacian submatrix $\mathbf{L}_{\mathcal{U}\mathcal{U}}^{-1}$ for various sets $\mathcal{U}$. A straightforward approach is to solve a linear system, resulting in a time complexity of $\widetilde{O}(m)$ (nearly linear in the number of edges), according to the state-of-the-art Laplacian solver~\cite{LaplacianSolver}. However, this method incurs a significant hidden constant factor in the $\widetilde{O}()$ notation and is therefore not efficient in practice. To overcome this limitation, we propose an efficient incremental algorithm that iteratively applies rank-$1$ updates to the inverse Laplacian matrix. This method enables us to compute the resistance distance labelling in a bottom-up manner, following the reverse \dfs order of $\mathcal{T}_{min}$, thereby significantly improving computational efficiency.

Recall that $\mathcal{S}[v, u]$ denotes the $(u, v)$-th element of $\mathbf{L}_{\mathcal{U}_i\mathcal{U}_i}^{-1}$ when node $v$ is eliminated. When $\mathcal{U}_{i+1}$ differs from $\mathcal{U}_i$ by only a single node, the matrix can be efficiently updated using a rank-1 update. The key question is as follows: suppose $\mathcal{U}_{i+1} = \mathcal{U}_i \cup \{v_{i+1}\}$; given $\mathbf{L}_{\mathcal{U}_i\mathcal{U}_i}^{-1}$, how can we compute a column of $\mathbf{L}_{\mathcal{U}_{i+1}\mathcal{U}_{i+1}}^{-1}$? We address this by leveraging the rank-$1$ update formula.
\begin{lemma}\label{lemma:rank-1-update}
Suppose $\mathcal{U}_{i+1} = \mathcal{U}_i \cup \{v_{i+1}\}$, the inverse Laplacian submatrix $\mathbf{L}_{\mathcal{U}_{i+1}\mathcal{U}_{i+1}}^{-1}$ and $\mathbf{L}_{\mathcal{U}_i\mathcal{U}_i}^{-1}$ satisfy:
\begin{equation*}
	\mathbf{e}_{v_{i+1}}\mathbf{L}_{\mathcal{U}_{i+1}\mathcal{U}_{i+1}}^{-1}\mathbf{e}_{v_{i+1}}=\frac{1}{d_{v_{i+1}}-\mathbf{1}^T_{\mathcal{N}(v_{i+1})\cap\mathcal{U}_i}\mathbf{L}_{\mathcal{U}_i\mathcal{U}_i}^{-1}\mathbf{1}_{\mathcal{N}(v_{i+1})\cap\mathcal{U}_i}},
\end{equation*}
\begin{equation*}
	\mathbf{L}_{\mathcal{U}_{i+1}\mathcal{U}_{i+1}}^{-1}\mathbf{e}_{v_{i+1}}=\frac{\mathbf{L}_{\mathcal{U}_i\mathcal{U}_i}^{-1}\mathbf{1}_{\mathcal{N}(v_{i+1})\cap\mathcal{U}_i}}{d_{v_{i+1}}-\mathbf{1}^T_{\mathcal{N}(v_{i+1})\cap\mathcal{U}_i}\mathbf{L}_{\mathcal{U}_i\mathcal{U}_i}^{-1}\mathbf{1}_{\mathcal{N}(v_{i+1})\cap\mathcal{U}_i}}.
\end{equation*}
\begin{proof}
	According to the block matrix inverse formula, we have:
	\begin{equation*}
		\left[\begin{matrix}
			A & B \\ C & D
		\end{matrix}\right]^{-1}= \left[\begin{matrix}
			A^{-1}+A^{-1}BS^{-1}CA^{-1} & -A^{-1}BS^{-1} \\ -S^{-1}CA^{-1} & S^{-1}
		\end{matrix}\right]
	\end{equation*}
	where $S = D - CA^{-1}B$ is the Schur complement. Applying this formula to our case with $A=\mathbf{L}_{\mathcal{U}_i\mathcal{U}_i}$ and $B=-\mathbf{1}_{\mathcal{N}(v_{i+1})\cap\mathcal{U}_i}$ yields the desired result.
\end{proof}
\end{lemma}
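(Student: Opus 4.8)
The plan is to read off both identities from the standard $2\times 2$ block inversion formula once the block structure of $\mathbf{L}_{\mathcal{U}_{i+1}\mathcal{U}_{i+1}}$ has been pinned down. First I would note that, after a symmetric permutation placing $v_{i+1}$ in the last coordinate, the principal submatrix $\mathbf{L}_{\mathcal{U}_{i+1}\mathcal{U}_{i+1}}$ of $\mathbf{L}=\mathbf{D}-\mathbf{A}$ has the form
\begin{equation*}
\mathbf{L}_{\mathcal{U}_{i+1}\mathcal{U}_{i+1}}=\begin{bmatrix}\mathbf{L}_{\mathcal{U}_i\mathcal{U}_i} & \mathbf{b}\\[2pt] \mathbf{b}^T & d_{v_{i+1}}\end{bmatrix},
\end{equation*}
where the top-left block is exactly $\mathbf{L}_{\mathcal{U}_i\mathcal{U}_i}$ (a principal submatrix of a principal submatrix is itself a principal submatrix), the lower-right scalar is $\mathbf{L}_{v_{i+1}v_{i+1}}=d_{v_{i+1}}$, and the coupling column $\mathbf{b}$ over $\mathcal{U}_i$ has $\mathbf{b}(u)=\mathbf{L}_{uv_{i+1}}$, which equals $-1$ when $(u,v_{i+1})\in\mathcal{E}$ and $0$ otherwise; that is, $\mathbf{b}=-\mathbf{1}_{\mathcal{N}(v_{i+1})\cap\mathcal{U}_i}$. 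The one subtlety here is to observe that edges from $v_{i+1}$ to vertices \emph{outside} $\mathcal{U}_{i+1}$ contribute only to the diagonal $d_{v_{i+1}}$ and not to $\mathbf{b}$, so the support of $\mathbf{b}$ is genuinely $\mathcal{N}(v_{i+1})\cap\mathcal{U}_i$ rather than all of $\mathcal{N}(v_{i+1})$.

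Second, I would dispatch well-posedness. Every node set arising in this construction is a proper subset of $\mathcal{V}$ (it omits at least the root of $\mathcal{T}_{min}$), so $\mathbf{L}_{\mathcal{U}_i\mathcal{U}_i}$ and $\mathbf{L}_{\mathcal{U}_{i+1}\mathcal{U}_{i+1}}$ are principal submatrices of the Laplacian missing at least one vertex from every connected component, hence positive definite and invertible; moreover the Schur complement $S:=d_{v_{i+1}}-\mathbf{b}^T\mathbf{L}_{\mathcal{U}_i\mathcal{U}_i}^{-1}\mathbf{b}$ of the positive definite matrix $\mathbf{L}_{\mathcal{U}_{i+1}\mathcal{U}_{i+1}}$ is a strictly positive scalar, so the denominators in the claimed identities are nonzero.

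Third, I would apply the block inversion formula with $A=\mathbf{L}_{\mathcal{U}_i\mathcal{U}_i}$, $B=C^T=\mathbf{b}$, $D=d_{v_{i+1}}$, and Schur complement $S$ of $A$: the $(v_{i+1},v_{i+1})$ entry of $\mathbf{L}_{\mathcal{U}_{i+1}\mathcal{U}_{i+1}}^{-1}$ equals $S^{-1}$, and the restriction of the $v_{i+1}$-th column of $\mathbf{L}_{\mathcal{U}_{i+1}\mathcal{U}_{i+1}}^{-1}$ to the $\mathcal{U}_i$ coordinates equals $-A^{-1}BS^{-1}$. Substituting $\mathbf{b}=-\mathbf{1}_{\mathcal{N}(v_{i+1})\cap\mathcal{U}_i}$ turns $\mathbf{b}^T\mathbf{L}_{\mathcal{U}_i\mathcal{U}_i}^{-1}\mathbf{b}$ into $\mathbf{1}^T_{\mathcal{N}(v_{i+1})\cap\mathcal{U}_i}\mathbf{L}_{\mathcal{U}_i\mathcal{U}_i}^{-1}\mathbf{1}_{\mathcal{N}(v_{i+1})\cap\mathcal{U}_i}$ and turns $-A^{-1}B$ into $\mathbf{L}_{\mathcal{U}_i\mathcal{U}_i}^{-1}\mathbf{1}_{\mathcal{N}(v_{i+1})\cap\mathcal{U}_i}$, which is exactly the pair of displayed equations (the $v_{i+1}$-entry of the column in the second equation being supplied by the scalar in the first).

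Since the computation is purely mechanical, there is no genuine obstacle; the only points warranting care are (a) confirming the coupling vector is precisely $-\mathbf{1}_{\mathcal{N}(v_{i+1})\cap\mathcal{U}_i}$ (not something involving neighbours outside $\mathcal{U}_{i+1}$), and (b) reading the second displayed equation correctly as an identity between the $\mathcal{U}_i$-block of the $v_{i+1}$-th column of $\mathbf{L}_{\mathcal{U}_{i+1}\mathcal{U}_{i+1}}^{-1}$ and the right-hand vector, which lives over $\mathcal{U}_i$.
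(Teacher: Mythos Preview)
Your proposal is correct and follows essentially the same approach as the paper: both identify the block structure of $\mathbf{L}_{\mathcal{U}_{i+1}\mathcal{U}_{i+1}}$ with $A=\mathbf{L}_{\mathcal{U}_i\mathcal{U}_i}$ and $B=-\mathbf{1}_{\mathcal{N}(v_{i+1})\cap\mathcal{U}_i}$, apply the standard $2\times 2$ block inversion formula, and read off the claimed identities. Your write-up is in fact more careful than the paper's, spelling out the support of the coupling vector, the positive-definiteness ensuring invertibility and a nonzero Schur complement, and the interpretation of the second identity as an equality on the $\mathcal{U}_i$ coordinates.
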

\begin{algorithm}[t]
	\caption{Label construction algorithm}
	\label{algo:index-building}
	\LinesNumbered
	\KwIn{Graph $\mathcal{G}$, tree decomposition $\mathcal{T}_{min}$}
	\KwOut{Resistance distance labelling $\mathcal{S}$}
	Initialize $\mathcal{S}$, $\mathcal{U} \leftarrow \emptyset$;\\
	$\order\leftarrow$ reverse of $\mathcal{T}_{min}.\dfsorder()$;\\
	\For{$i=1:n$}{
	    $v_i \leftarrow \order[i]$, $\mathcal{U} \leftarrow \mathcal{U} \cup \{v_i\}$;\\
	    \For{each $w\in\mathcal{N}(v_i)\cap\mathcal{U}$}{
	        $v_k\leftarrow w$;\\
	        \While{$v_k\neq v_i$}{
	          $\ratio \leftarrow \frac{\mathcal{S}[v_k, v_w]}{\mathcal{S}[v_k, v_k]}$;\\
	          \For{each node $u$ in $\mathcal{T}_{min}.\subtree(v_k)$}{
	            Update $\mathcal{S}[v_i, u]$ by adding $\mathcal{S}[v_k, u] \times \ratio$;
	          }
	          $v_k \leftarrow \mathcal{T}_{min}.\Parent(v_k)$;
	        }
	    }
	    $\mathcal{S}[v_i,v_i] \leftarrow \frac{1}{d_{v_i}-\sum_{w\in\mathcal{N}(v_i)\cap\mathcal{U}}\mathcal{S}[v_i,w]}$;\\
    	\For{each node $u$ in $\mathcal{T}_{min}.\subtree(v_i)$}{
    	    $\mathcal{S}[v_i,u] \leftarrow \frac{\mathcal{S}[v_i,u]}{d_{v_i}-\sum_{w\in\mathcal{N}(v_i)\cap\mathcal{U}}\mathcal{S}[v_i,w]}$;\\
			$\mathcal{S}.\diagonal[u] \leftarrow \left(\mathcal{S}[v_i,u]\right)^2$;
    	}
	}
	\Return $\mathcal{S}$;
\end{algorithm}
Notice that during construction, each column of $\mathbf{L}_{\mathcal{U}_i\mathcal{U}_i}^{-1}$ can be computed based on the current state. The pseudo-code of the algorithm is presented in Algorithm~\ref{algo:index-building}. The algorithm computes the resistance distance labelling by processing nodes according to the reverse \dfs order of the tree decomposition (line~2). For each node $v_i$ in this order (lines~3-4), the algorithm first adds $v_i$ to the processed set $\mathcal{U}$. Then, for each neighbor $w$ of $v_i$ that has already been processed (line~5), it traverses up the tree from $w$ to $v_i$ (lines~6-12). For each node $v_k$ in this path, the algorithm computes a ratio based on the current labelling (line~8) and uses it to update the labelling for all nodes in the subtree rooted at $v_k$ (lines~9-11). After processing all neighbors, the algorithm computes the diagonal element for $v_i$ (line~13) and normalizes the entries for all nodes in the subtree rooted at $v_i$ (line~14). Furthermore, during the index-building process, we can additionally include the diagonal of $\mathbf{L}_v^{-1}$ to support single-source queries (line~15). During the algorithm, $\mathcal{S}[v_k, u]$ is accessed as $\mathcal{S}.\res[\mathcal{T}_{min}.\dfsorder[u] - \mathcal{T}_{min}.\dfsorder[v_k]]$. This approach efficiently propagates information through the tree structure to compute the complete resistance distance labelling.

We now analyze the correctness and time complexity of the algorithm. We first need the following lemma to ensure the while loop in Algorithm~\ref{algo:index-building} terminates.
\begin{lemma}\label{while-loop-termination}
  For each node $w\in\mathcal{N}(v_i)\cup\mathcal{U}$, $v_i$ must be the parent of $w$ in $\mathcal{T}_{min}$.
  \begin{proof}
  Consider any node $w \in \mathcal{N}(v_i) \cap \mathcal{U}$. By the algorithm's processing order, $w$ has been processed before $v_i$ since $w \in \mathcal{U}$ when we process $v_i$. In a tree decomposition, for any edge $(u,v)$ in the original graph, there must exist at least one tree node containing both $u$ and $v$. Since $w$ and $v_i$ are neighbors in $\mathcal{G}$, they must share at least one tree node in $\mathcal{T}_{min}$. Given that we process nodes according to the reverse \dfs order, when we process $v_i$, all nodes in its subtree (including $w$) have already been processed. The tree structure ensures that $v_i$ is the parent of $w$ in $\mathcal{T}_{min}$, as this is the only arrangement that maintains the required connectivity property while preserving the properties of the tree decomposition. Therefore, for each node $w \in \mathcal{N}(v_i) \cap \mathcal{U}$, $v_i$ must be the parent of $w$ in $\mathcal{T}_{min}$.
  \end{proof}
\end{lemma}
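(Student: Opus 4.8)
The lemma's purpose is to certify that the inner \textbf{while} loop of Algorithm~\ref{algo:index-building} (lines~7--12) --- which starts at $v_k = w$ and repeatedly sets $v_k \leftarrow \mathcal{T}_{min}.\Parent(v_k)$ until $v_k = v_i$ --- halts; this happens exactly when $v_i$ lies on the root-path of $w$ in $\mathcal{T}_{min}$, i.e. when $v_i$ is an ancestor of $w$. So the plan is to prove: for every $w \in \mathcal{N}(v_i) \cap \mathcal{U}$, $v_i$ is an ancestor of $w$ in $\mathcal{T}_{min}$. (The statement phrases this as ``$v_i$ is the parent of $w$''; since $\mathcal{N}(v_i)$ here is adjacency in $\mathcal{G}$, a graph-neighbour of $v_i$ processed earlier need not be its tree-parent, but the ancestor relation --- all that loop termination requires --- is what the argument delivers.)

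First I would pin down $\mathcal{U}$ at the iteration that processes $v_i$. Since the outer loop follows the reverse of $\mathcal{T}_{min}.\dfsorder()$ and $\dfsorder$ is a pre-order numbering (the root is numbered first, and the subtree of any node occupies a contiguous block of numbers beginning at that node), the set $\mathcal{U}$ just after line~4 is exactly $\{u : \mathcal{T}_{min}.\dfsorder[u] \ge \mathcal{T}_{min}.\dfsorder[v_i]\}$. In tree terms, $\mathcal{U}$ is the subtree rooted at $v_i$ together with all subtrees hanging to the ``right'' of $v_i$ (later siblings of $v_i$, later siblings of ancestors of $v_i$, and their descendants); in particular $\mathcal{U}$ contains no proper ancestor of $v_i$, and every proper ancestor of $v_i$ has a strictly smaller dfsorder.

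Next I would invoke the structure of the \mde decomposition. Because $(v_i, w) \in \mathcal{E}$, the second tree-decomposition axiom gives a bag $\mathcal{X}_b$ with $v_i, w \in \mathcal{X}_b$, and by the vertex hierarchy property (Lemma~\ref{lemma:vertex-hierarchy-property}) every node of $\mathcal{X}_b$ other than $b$ is an ancestor of $b$ in $\mathcal{T}_{min}$. I then case-split. If $w = b$, then $v_i$ is an ancestor of $w$. If $v_i = b$, then $w$ is an ancestor of $v_i$, hence $\mathcal{T}_{min}.\dfsorder[w] < \mathcal{T}_{min}.\dfsorder[v_i]$, contradicting $w \in \mathcal{U}$. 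Otherwise $v_i$ and $w$ are both ancestors of $b$, so they lie on one root-to-$b$ path and are comparable, and $\mathcal{T}_{min}.\dfsorder[w] \ge \mathcal{T}_{min}.\dfsorder[v_i]$ forces $v_i$ to be the ancestor of $w$. In every admissible case $v_i$ is an ancestor of $w$, so iterating $\Parent$ from $w$ reaches $v_i$ and the loop terminates.

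The main obstacle is this last step: ruling out that $w$ lies in a sibling subtree of $v_i$, i.e. that $v_i$ and $w$ are incomparable in $\mathcal{T}_{min}$ (which would send the loop past the root). This is precisely where one must use that $w$ is a \emph{graph} neighbour of $v_i$ --- producing a shared bag --- together with the vertex hierarchy property specific to the \mde decomposition; for a generic tree decomposition the conclusion can fail. A minor preliminary that nonetheless must be gotten right is the exact meaning of ``reverse of $\mathcal{T}_{min}.\dfsorder()$'' and the resulting description of $\mathcal{U}$ in the first step, on which every dfsorder comparison above depends.
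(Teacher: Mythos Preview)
Your proposal is correct and follows essentially the same approach as the paper --- both invoke the tree-decomposition edge axiom to place $v_i$ and $w$ in a common bag and then use the vertex-hierarchy property (Lemma~\ref{lemma:vertex-hierarchy-property}) together with the reverse-\dfs processing order to force $v_i$ above $w$. Your version is in fact more careful than the paper's: you correctly observe that the argument yields only ``ancestor'' rather than ``direct parent'' (which is all the while loop needs), and your explicit case-split on which node owns the shared bag $\mathcal{X}_b$ fills in a step the paper leaves as a hand-wave.
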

\begin{lemma}[Correctness of Algorithm~\ref{algo:index-building}]
  Algorithm~\ref{algo:index-building} computes the resistance distance labelling of the graph $\mathcal{G}$.
  \begin{proof}
The correctness follows from the rank-1 update formula presented in Lemma~\ref{lemma:rank-1-update}. For each node $v_i$, the algorithm implements this formula by:
(i) Computing $\mathbf{L}_{\mathcal{U}_i\mathcal{U}_i}^{-1}\mathbf{1}_{\mathcal{N}(v_{i+1})\cap\mathcal{U}_i}$ through the traversal and update process (Lines 5-12);
(ii) Computing the denominator $d_{v_i}-\mathbf{1}^T_{\mathcal{N}(v_{i+1})\cap\mathcal{U}_i}\mathbf{L}_{\mathcal{U}_i\mathcal{U}_i}^{-1}\mathbf{1}_{\mathcal{N}(v_{i+1})\cap\mathcal{U}_i}$ (Line 13);
(iii) Normalizing by this denominator (Lines 14-16). Line 15 specifically stores the diagonal elements of $\mathbf{L}_v^{-1}$ to support single-source queries. According to Lemma~\ref{while-loop-termination}, the while loop will terminate as $v_i$ is guaranteed to be an ancestor of each $w \in \mathcal{N}(v_i) \cap \mathcal{U}$ in the tree. By processing nodes according to the \dfs order of the tree decomposition, the algorithm correctly builds the complete resistance distance labelling.
\end{proof}
\end{lemma}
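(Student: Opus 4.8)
The plan is to argue by induction on the outer-loop counter $i$, maintaining the invariant that after iteration $i$ the partial labelling $\{\mathcal{S}[v_j,\cdot]:j\le i\}$ coincides with the resistance distance labelling of Definition~\ref{def:resistance-distance-labelling} restricted to $v_1,\dots,v_i$; concretely, $\mathcal{S}[v_j,u]=(\mathbf{L}_{\mathcal{U}_j\mathcal{U}_j}^{-1})_{uv_j}$ for every $u$ in the subtree of $v_j$ in $\mathcal{T}_{min}$ and $\mathcal{S}[v_j,u]=0$ otherwise, where $\mathcal{U}_j=\{v_1,\dots,v_j\}$ is the set of nodes processed in the first $j$ steps of the reverse \dfs order of line~2. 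That this matrix is the one named in Definition~\ref{def:resistance-distance-labelling} follows from Lemma~\ref{lemma:schur-complement-inverse} together with the fact that the reverse \dfs order processes every descendant of $v_j$ before $v_j$ itself, so it is an admissible reversal of the Gaussian-elimination order used there; the earlier non-zero-structure lemma then pins down exactly which entries can be nonzero. I would also note at the outset that iteration $i$ only ever writes to $\mathcal{S}[v_i,\cdot]$ (lines~9--16) and only reads labels $\mathcal{S}[v_k,\cdot]$ of already-processed nodes, so previously finalized columns stay frozen and the induction reduces to verifying the $i$-th column at step $i$.

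For the base case $i=1$: the node with the largest \dfs index is a leaf, so $v_1$ has no already-processed neighbour, the neighbour loop is empty, line~13 yields $\mathcal{S}[v_1,v_1]=1/d_{v_1}$, and indeed $\mathbf{L}_{\{v_1\}\{v_1\}}^{-1}=1/d_{v_1}$. For the inductive step, assume the invariant through step $i-1$ and apply Lemma~\ref{lemma:rank-1-update} with $\mathcal{U}_i=\mathcal{U}_{i-1}\cup\{v_i\}$: it suffices to reproduce the vector $\mathbf{q}:=\mathbf{L}_{\mathcal{U}_{i-1}\mathcal{U}_{i-1}}^{-1}\mathbf{1}_{\mathcal{N}(v_i)\cap\mathcal{U}_{i-1}}$ and the scalar $\delta:=d_{v_i}-\mathbf{1}_{\mathcal{N}(v_i)\cap\mathcal{U}_{i-1}}^T\mathbf{q}$, since then $(\mathbf{L}_{\mathcal{U}_i\mathcal{U}_i}^{-1})_{v_iv_i}=1/\delta$ and $(\mathbf{L}_{\mathcal{U}_i\mathcal{U}_i}^{-1})_{uv_i}=\mathbf{q}_u/\delta$. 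By the induction hypothesis the stored columns $\mathcal{S}[v_1,\cdot],\dots,\mathcal{S}[v_{i-1},\cdot]$ are exactly the Gaussian-elimination columns, so Lemma~\ref{lemma:cholesky-decomposition} gives entry-wise $(\mathbf{L}_{\mathcal{U}_{i-1}\mathcal{U}_{i-1}}^{-1})_{uw}=\sum_{k<i}\mathcal{S}[v_k,u]\mathcal{S}[v_k,w]/\mathcal{S}[v_k,v_k]$, where a summand is nonzero only when $v_k$ is an ancestor of both $u$ and $w$. Since Lemma~\ref{while-loop-termination} guarantees $v_i$ is an ancestor of every processed neighbour $w$, the ancestors of such a $w$ that lie in $\mathcal{U}_{i-1}$ are precisely the nodes of the tree path $\mathcal{P}_{w\leadsto v_i}$ other than $v_i$ --- which is exactly the set the while loop of lines~6--12 visits, with $\ratio=\mathcal{S}[v_k,w]/\mathcal{S}[v_k,v_k]$. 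Hence, for each $w$, lines~6--12 add $(\mathbf{L}_{\mathcal{U}_{i-1}\mathcal{U}_{i-1}}^{-1}\mathbf{e}_w)_u$ into $\mathcal{S}[v_i,u]$; summing the neighbour loop leaves $\mathcal{S}[v_i,u]=\mathbf{q}_u$. Line~13 then produces $\delta$ (the sum there equals $\mathbf{1}^T\mathbf{q}$ over $\mathcal{N}(v_i)\cap\mathcal{U}_{i-1}$, by the rank-$1$ formula the reciprocal of $(\mathbf{L}_{\mathcal{U}_i\mathcal{U}_i}^{-1})_{v_iv_i}$), and line~14 performs the normalization $\mathcal{S}[v_i,u]\leftarrow\mathbf{q}_u/\delta$, so $\mathcal{S}[v_i,\cdot]$ equals the $v_i$-column of $\mathbf{L}_{\mathcal{U}_i\mathcal{U}_i}^{-1}$, with every write confined to the subtree of $v_i$ as the non-zero structure requires. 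I would also observe that line~15 accumulates $(\mathcal{S}[v_k,u])^2/\mathcal{S}[v_k,v_k]$ over the ancestors $v_k$ of $u$, which by Lemma~\ref{lemma:dependence-property} equals the diagonal entry $(\mathbf{L}_v^{-1})_{uu}$ needed for single-source queries.

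Two routine points are left to check along the way: termination of the while loop, which is immediate from Lemma~\ref{while-loop-termination} because $v_k$ walks strictly upward along a finite tree path from $w$ to $v_i$; and that the access $\mathcal{S}[v_k,u]$ is correctly realized as $\mathcal{S}[v_k].\res[\mathcal{T}_{min}.\dfsorder[u]-\mathcal{T}_{min}.\dfsorder[v_k]]$, which is legitimate precisely because each $u$ touched lies in the subtree of the corresponding $v_k$ and the \dfs layout makes those subtrees contiguous. The step I expect to be the main obstacle is the inductive identification itself --- showing that the while-loop's upward walk from a processed neighbour $w$ ranges over exactly the ancestor set needed to rebuild the column $\mathbf{L}_{\mathcal{U}_{i-1}\mathcal{U}_{i-1}}^{-1}\mathbf{e}_w$ from already-stored labels via Lemma~\ref{lemma:cholesky-decomposition}; this is the place where the non-zero structure of the labelling, the vertex-hierarchy property of $\mathcal{T}_{min}$, and the reverse-\dfs processing order must all be combined correctly. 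Once that is settled, substituting $\mathbf{q}$ and $\delta$ into the rank-$1$ update of Lemma~\ref{lemma:rank-1-update} is purely mechanical.
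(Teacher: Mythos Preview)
Your proposal is correct and follows essentially the same approach as the paper's proof: both invoke Lemma~\ref{lemma:rank-1-update} to identify lines~5--12 with the numerator vector $\mathbf{L}_{\mathcal{U}_{i-1}\mathcal{U}_{i-1}}^{-1}\mathbf{1}_{\mathcal{N}(v_i)\cap\mathcal{U}_{i-1}}$, line~13 with the denominator, lines~14--16 with the normalization, and appeal to Lemma~\ref{while-loop-termination} for termination. Your version is considerably more detailed --- you make the induction explicit, use Lemma~\ref{lemma:cholesky-decomposition} to justify that the upward walk from each processed neighbour $w$ reconstructs precisely the column $\mathbf{L}_{\mathcal{U}_{i-1}\mathcal{U}_{i-1}}^{-1}\mathbf{e}_w$ from already-stored labels, and address the match with Definition~\ref{def:resistance-distance-labelling} --- but the underlying argument is the same.
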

\comment{
\begin{proof}[Proof sketch]
	\textcolor{red}{The correctness hinges on the iterative application of the rank-1 update formula for the inverse Laplacian from Lemma~\ref{lemma:rank-1-update}. For each node $v_i$, the nested loops correctly compute the vector and scalar components of this formula. By processing nodes in the specified \dfs order, these incremental updates correctly build the final resistance distance labeling for the entire graph.}
\end{proof}}
\begin{lemma}[Time Complexity Analysis]
  Algorithm~\ref{algo:index-building} has a time complexity of $O(n\cdot d_{max}\cdot h_{\mathcal{G}}^2)$.
  \begin{proof}
    We analyze the time complexity by examining each component of the algorithm: First, initialization and DFS order computation (lines 1-2) require $O(n)$ time. The main outer loop (line 3) executes exactly $n$ iterations, processing each node $v_i$ sequentially. For each node $v_i$, the inner loop (line 5) iterates through its already processed neighbors, bounded by $|\mathcal{N}(v_i)\cap\mathcal{U}| \leq d_{v_i} \leq d_{max}$. Within this loop, the while loop (lines 7-12) traverses from each processed neighbor $v_k$ up to $v_i$ in the tree, updating the labelling for all nodes in the subtree rooted at $v_k$. This operation takes at most $O(|\mathcal{T}_{min}.\subtree(v_k)|)$ steps for each $v_k$. Since Lemma~\ref{while-loop-termination} establishes that $v_i$ is an ancestor of each $v_k$, we can bound the total work: for each node $v_i$, the combined size of all subtrees rooted at its processed neighbors cannot exceed $v_i$'s own subtree size multiplied by $d_{max}$. This gives us an upper bound of $O(d_{max} \cdot n \cdot h_{\mathcal{G}}^2)$ operations across all nodes, since $\sum_{i=1}^n |\mathcal{T}_{min}.\subtree(v_i)| = O(n \cdot h_{\mathcal{G}})$. The final update operations (lines 13-16) require $O(|\mathcal{T}_{min}.\subtree(v_i)|)$ time for each node, summing to $O(n \cdot h_{\mathcal{G}})$ across all nodes. Combining these analyses, we conclude that the overall time complexity of Algorithm~\ref{algo:index-building} is $O(n \cdot d_{max} \cdot h_{\mathcal{G}}^2)$.
  \end{proof}
\end{lemma}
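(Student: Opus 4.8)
The plan is to charge the running time term by term against two structural quantities of $\mathcal{T}_{min}$: the length of any root‑to‑node path, which is at most the tree height $h_{\mathcal{G}}$, and the aggregate subtree size $\sum_{i=1}^n |\mathcal{T}_{min}.\subtree(v_i)|$. First I would dispose of the cheap parts. Initialization and computing the reverse \dfs order (lines~1--2) take $O(n)$. For each $v_i$, enumerating $\mathcal{N}(v_i)\cap\mathcal{U}$ and evaluating the scalar $d_{v_i}-\sum_{w\in\mathcal{N}(v_i)\cap\mathcal{U}}\mathcal{S}[v_i,w]$ in line~13 costs $O(d_{v_i})=O(d_{\max})$, summing to $O(n\,d_{\max})$ overall; and the normalization/diagonal passes of lines~14--16 cost $O(|\mathcal{T}_{min}.\subtree(v_i)|)$ per node. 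Every access $\mathcal{S}[v_k,u]$ reduces to $\mathcal{S}[v_k].\res[\mathcal{T}_{min}.\dfsorder[u]-\mathcal{T}_{min}.\dfsorder[v_k]]$ and is therefore $O(1)$, so only the nested loop of lines~5--12 needs real care.

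Next I would bound that nested loop. Fix $v_i$. By Lemma~\ref{while-loop-termination}, every $w\in\mathcal{N}(v_i)\cap\mathcal{U}$ lies in the subtree of $v_i$ in $\mathcal{T}_{min}$ (equivalently, $v_i$ is an ancestor of $w$), so the \textbf{while} loop of lines~7--12 walks the path from $w$ up to $v_i$; this path visits at most $h_{\mathcal{G}}$ nodes, bounding the number of iterations. In each iteration the current node $v_k$ is a (proper) descendant of $v_i$, hence $\mathcal{T}_{min}.\subtree(v_k)\subseteq\mathcal{T}_{min}.\subtree(v_i)$, so the inner \textbf{for} loop of lines~9--11 costs $O(|\mathcal{T}_{min}.\subtree(v_k)|)\le O(|\mathcal{T}_{min}.\subtree(v_i)|)$. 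Since $|\mathcal{N}(v_i)\cap\mathcal{U}|\le d_{\max}$, the total work attributed to $v_i$ by lines~5--12 is $O\!\bigl(d_{\max}\cdot h_{\mathcal{G}}\cdot|\mathcal{T}_{min}.\subtree(v_i)|\bigr)$.

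Finally I would sum over $i$ using the double‑counting identity $\sum_{i=1}^n |\mathcal{T}_{min}.\subtree(v_i)| = \sum_{u\in\mathcal{V}} |\{\,v : u\in\mathcal{T}_{min}.\subtree(v)\,\}| = \sum_{u\in\mathcal{V}} \bigl(\mathrm{depth}(u)+1\bigr) = O(n\,h_{\mathcal{G}})$, where for each $u$ the set $\{v : u\in\mathcal{T}_{min}.\subtree(v)\}$ is exactly the chain of ancestors of $u$ (including $u$). Combining, the cost of lines~5--12 over all $i$ is $O(d_{\max}\,h_{\mathcal{G}})\cdot\sum_i|\mathcal{T}_{min}.\subtree(v_i)| = O(n\,d_{\max}\,h_{\mathcal{G}}^2)$, which dominates the $O(n)$, $O(n\,d_{\max})$, and $\sum_i O(|\mathcal{T}_{min}.\subtree(v_i)|)=O(n\,h_{\mathcal{G}})$ contributions, giving the claimed $O(n\cdot d_{\max}\cdot h_{\mathcal{G}}^2)$ bound.

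The only genuinely delicate step I anticipate is the amortization: establishing $\sum_i|\mathcal{T}_{min}.\subtree(v_i)|=O(n\,h_{\mathcal{G}})$ and justifying that it is safe to over‑count each inner loop by $|\mathcal{T}_{min}.\subtree(v_i)|$ rather than by $|\mathcal{T}_{min}.\subtree(v_k)|$. Everything else is a routine charging argument that hinges on Lemma~\ref{while-loop-termination} to guarantee that all nodes touched while processing $v_i$ lie on a single root‑path of length at most $h_{\mathcal{G}}$ inside $\mathcal{T}_{min}.\subtree(v_i)$.
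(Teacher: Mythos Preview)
Your proposal is correct and follows essentially the same approach as the paper: bound the per-node cost of lines~5--12 by $O(d_{\max}\cdot h_{\mathcal{G}}\cdot|\mathcal{T}_{min}.\subtree(v_i)|)$ using Lemma~\ref{while-loop-termination}, then sum via $\sum_i|\mathcal{T}_{min}.\subtree(v_i)|=O(n\,h_{\mathcal{G}})$. Your write-up is in fact cleaner than the paper's on two points: you make the $h_{\mathcal{G}}$ factor from the while-loop path length explicit (the paper's phrasing about ``the combined size of all subtrees rooted at its processed neighbors'' leaves this step somewhat obscured), and you supply the double-counting proof of $\sum_i|\mathcal{T}_{min}.\subtree(v_i)|=O(n\,h_{\mathcal{G}})$, which the paper merely asserts.
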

\comment{
\begin{proof}[Proof sketch]
	\textcolor{red}{The algorithm's time complexity is dominated by the main loop iterating through $n$ nodes. For each node $v_i$, the algorithm processes its neighbors (at most $d_{max}$) and performs updates on their respective subtrees. The total work for these updates, when summed across all nodes, is proportional to the sum of all subtree sizes in the decomposition tree, which is bounded by $O(n\cdot h_{\mathcal{G}})$, resulting in a total complexity of $O(n\cdot d_{max}\cdot h_{\mathcal{G}})$.}
\end{proof}}
In practice, we observe that the maximum degree $d_{max}$ is typically small in large road networks (generally less than 10 across most datasets). Therefore, our proposed algorithm demonstrates high efficiency when applied to large-scale road network analysis.
\begin{algorithm}[t]
	\caption{Single-pair query processing algorithm}
	\label{algo:single-pair-query-processing}
	\LinesNumbered
	\KwIn{Graph $\mathcal{G}$, resistance labelling $\mathcal{S}$, query nodes $s$ and $t$}
	\KwOut{Resistance distance $r(s,t)$}
	Initialize $r(s,t) \leftarrow 0$;\\
	$\lca(s,t) \leftarrow$ least common ancestor of $s$ and $t$ in $\mathcal{T}_{min}$;\\
	$w \leftarrow s$;\\
	\While{$w \neq \lca(s,t)$}{
		$r(s,t) \leftarrow r(s,t) + \frac{(\mathcal{S}[w, s])^2}{\mathcal{S}[w, w]}$;\\
		$w \leftarrow \mathcal{T}_{min}[w].\Parent()$;\\
	}
	$w \leftarrow t$;\\
	\While{$w \neq \lca(s,t)$}{
		$r(s,t) \leftarrow r(s,t) + \frac{(\mathcal{S}[w, t])^2}{\mathcal{S}[w, w]}$;\\
		$w \leftarrow \mathcal{T}_{min}[w].\Parent()$;\\
	}
	$w \leftarrow \lca(s,t)$;\\
	\While{$w \neq \mathcal{T}_{min}.\rootvector()$}{
		$r(s,t) \leftarrow r(s,t) + \frac{(\mathcal{S}[w, s] - \mathcal{S}[w, t])^2}{\mathcal{S}[w, w]}$;\\
		$w \leftarrow \mathcal{T}_{min}[w].\Parent()$;\\
	}
	\Return $r(s,t)$;
\end{algorithm}
\subsection{Query Processing Algorithms}
Based on the constructed resistance distance labelling, we present efficient algorithms for processing both single-pair and single-source resistance distance queries.

\stitle{Single-pair query processing.} For a query $r(s,t)$, we compute the resistance distance between nodes $s$ and $t$ by leveraging the tree structure of $\mathcal{T}_{\min}$ and the precomputed resistance distance labelling. Algorithm~\ref{algo:single-pair-query-processing} presents our approach, which consists of three main phases. First, we identify the least common ancestor (\lca) of nodes $s$ and $t$ in the tree decomposition (Line 2). Next, we traverse upward from node $s$ to the \lca, accumulating resistance contributions along this path (Lines 3-6). Similarly, we traverse from node $t$ to the \lca (Lines 7-10). Finally, we continue from the \lca to the root of the tree, adding the squared differences of the resistance labels (Lines 11-14). This approach correctly computes $r(s,t)$ by exploiting the dependency property of resistance distances, which allows us to decompose the calculation into contributions from each ancestor in the tree decomposition.
\begin{lemma}[Time Complexity of Single-Pair Queries]
	Algorithm~\ref{algo:single-pair-query-processing} has a time complexity of $|\mathcal{P}_{s\leadsto root}|+|\mathcal{P}_{t\leadsto root}|$, where $\mathcal{P}_{s\leadsto root}$ and $\mathcal{P}_{t\leadsto root}$ are the paths from $s$ and $t$ to the root in the tree decomposition, respectively. This complexity can further be bounded by $O(h_{\mathcal{G}})$.
  \begin{proof}
    The time complexity of Algorithm~\ref{algo:single-pair-query-processing} is determined by the number of nodes visited during the three traversal phases. In the first phase, we traverse from node $s$ to $\lca(s,t)$, which requires $|\mathcal{P}_{s\leadsto \lca(s,t)}|$ steps. Similarly, the second phase traverses from $t$ to $\lca(s,t)$, requiring $|\mathcal{P}_{t\leadsto \lca(s,t)}|$ steps. The final phase traverses from $\lca(s,t)$ to the root, taking $|\mathcal{P}_{\lca(s,t)\leadsto root}|$ steps. Since each step performs only constant-time operations (arithmetic calculations using precomputed labels), the total time complexity is $|\mathcal{P}_{s\leadsto root}| + |\mathcal{P}_{t\leadsto root}|$. Since the length of any path from a node to the root is bounded by the height of the tree $h_{\mathcal{G}}$, the overall time complexity is $O(h_{\mathcal{G}})$.
  \end{proof}
\end{lemma}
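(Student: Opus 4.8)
The plan is to account separately for the cost of the three \textbf{while} loops in Algorithm~\ref{algo:single-pair-query-processing} together with the one-time \lca computation on Line~2, and then to recombine the path lengths through $\lca(s,t)$. First I would argue that every single iteration of each loop costs $O(1)$: in each iteration we read $\mathcal{S}[w,s]$ (or $\mathcal{S}[w,t]$) and the diagonal $\mathcal{S}[w,w]$, perform a constant number of arithmetic operations, and advance $w$ to $\mathcal{T}_{min}[w].\Parent()$. The only point needing a line of justification is that the label accesses are genuinely constant-time: by the storage scheme, $\mathcal{S}[w,s]$ is retrieved as $\mathcal{S}[w].\res[\mathcal{T}_{min}.\dfsorder[s]-\mathcal{T}_{min}.\dfsorder[w]]$, an $O(1)$ array indexing once the \dfsorder array is precomputed, and likewise $\mathcal{S}[w,w]$ (or the precomputed $\mathcal{S}.\diagonal$) is an $O(1)$ lookup.

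Next I would count iterations. The first loop advances $w$ from $s$ up the tree until $w=\lca(s,t)$, so it runs $|\mathcal{P}_{s\leadsto\lca(s,t)}|$ times; the second runs $|\mathcal{P}_{t\leadsto\lca(s,t)}|$ times; the third advances from $\lca(s,t)$ to the root, running $|\mathcal{P}_{\lca(s,t)\leadsto root}|$ times. Summing, the total loop cost is $O\bigl(|\mathcal{P}_{s\leadsto\lca(s,t)}|+|\mathcal{P}_{t\leadsto\lca(s,t)}|+|\mathcal{P}_{\lca(s,t)\leadsto root}|\bigr)$. I would then invoke the obvious path decomposition in a rooted tree, namely $|\mathcal{P}_{s\leadsto root}|=|\mathcal{P}_{s\leadsto\lca(s,t)}|+|\mathcal{P}_{\lca(s,t)\leadsto root}|$ and symmetrically for $t$, to rewrite this bound as $O(|\mathcal{P}_{s\leadsto root}|+|\mathcal{P}_{t\leadsto root}|)$, matching the claimed figure (up to the constant hidden in $O(\cdot)$; the statement omits it for brevity).

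Finally I would handle Line~2. Computing $\lca(s,t)$ can be done by walking $s$ and $t$ up to the root and intersecting the two ancestor lists, which costs $O(|\mathcal{P}_{s\leadsto root}|+|\mathcal{P}_{t\leadsto root}|)$ and is therefore absorbed into the stated bound; alternatively an $O(1)$-per-query \lca structure is built in the preprocessing phase. Either way the \lca step does not dominate. To conclude the $O(h_{\mathcal{G}})$ bound, I would note that by definition of the tree height $h_{\mathcal{G}}$ every root path satisfies $|\mathcal{P}_{s\leadsto root}|,|\mathcal{P}_{t\leadsto root}|\le h_{\mathcal{G}}$, so the overall running time is $O(h_{\mathcal{G}})$.

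I do not anticipate a genuine obstacle here; the only mildly delicate point is confirming that the \dfsorder-based label retrieval is $O(1)$ rather than, say, a search, and that the \lca computation is charged correctly—both are one-sentence checks rather than real difficulties. The argument is otherwise a direct iteration count plus the path-through-\lca identity.
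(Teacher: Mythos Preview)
Your proposal is correct and follows essentially the same approach as the paper's proof: count the iterations of the three while-loops as $|\mathcal{P}_{s\leadsto\lca(s,t)}|$, $|\mathcal{P}_{t\leadsto\lca(s,t)}|$, and $|\mathcal{P}_{\lca(s,t)\leadsto root}|$, observe that each iteration is $O(1)$, combine via the path-through-\lca identity, and bound by $h_{\mathcal{G}}$. If anything, you are more careful than the paper, which does not explicitly justify the $O(1)$ label access via \dfsorder indexing or account for the Line~2 \lca computation.
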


\begin{algorithm}[t]
	\caption{Single-source query processing algorithm}
	\label{algo:single-source-query-processing}
	\LinesNumbered
	\KwIn{Graph $\mathcal{G}$, resistance labelling $\mathcal{S}$, query node $s$}
	\KwOut{Resistance distance $r(s,u)$ for all $u\in\mathcal{V}$}
	$\column[u] \leftarrow 0$ for all $u\in\mathcal{V}$;\\
	$w \leftarrow s$;\\
	\While{$w \neq \mathcal{T}_{min}.\rootvector()$}{
		$\ratio \leftarrow \frac{\mathcal{S}[w, s]}{\mathcal{S}[w, w]}$;\\
		\For{each node $u\in\mathcal{T}_{min}.\subtree(w)$}{
			$\column[u] \leftarrow \column[u] + \mathcal{S}[w, u] \cdot \ratio$;\\
		}
		$w \leftarrow \mathcal{T}_{min}[w].\Parent()$;\\
	}
	\For{each node $u\in\mathcal{V}$}{
		$r(s,u) \leftarrow \mathcal{S}.\diagonal[s] + \mathcal{S}.\diagonal[u] - 2 \cdot \column[u]$;\\
	}
	\Return $r(s,u)$ for all $u\in\mathcal{V}$;
\end{algorithm}
\stitle{Single-source query processing.} Given a query source node $s$, a remarkable advantage of our method is its ability to efficiently handle exact single-source resistance distance queries. In contrast, existing exact methods \cite{LaplacianSolver} can only accomplish this task by solving $n-1$ separate linear systems, which is computationally expensive.

Algorithm~\ref{algo:single-source-query-processing} details our efficient approach for single-source queries. The key insight is that we only need to compute the $s$-th column of $\mathbf{L}_v^{-1}$ to determine resistance distances from source node $s$ to all other nodes in the graph. This approach leverages the tree decomposition structure to efficiently traverse from the source node to the root (lines 3-7), accumulating resistance contributions along the way. After computing the column vector, we calculate the final resistance distances for all nodes (lines 8-9) using the diagonal values and the computed column. By computing this single column using the distance labelling, we dramatically reduce the computational complexity compared to traditional methods.
\begin{lemma}[Time Complexity of Single-Source Queries]
	Algorithm~\ref{algo:single-source-query-processing} has a time complexity of $n+\sum_{u\in\mathcal{P}_{s\leadsto root}}|\subtree(u)|$, where $\mathcal{P}_{s\leadsto root}$ is the path from $s$ to the root and $|\subtree(u)|$ is the size of the subtree rooted at $u$ in the tree decomposition. This complexity can further be bounded by $O(n\cdot h_{\mathcal{G}})$.
  \begin{proof}
    The time complexity of Algorithm~\ref{algo:single-source-query-processing} can be derived by analyzing the algorithm's operations: First, the algorithm traverses the path from node $s$ to the root, which has length at most $h_{\mathcal{G}}$. At each node $w$ along this path, it updates values for all nodes in $\subtree(w)$, requiring $|\subtree(w)|$ operations. Thus, the total cost for these updates is $\sum_{u\in\mathcal{P}_{s\leadsto root}}|\subtree(u)|$. Additionally, the final loop (Lines 8-9) computes resistance distances for all $n$ nodes, contributing an $O(n)$ term. In the worst case, when the tree is highly unbalanced, each subtree could contain up to $O(n)$ nodes, and the path length could be $O(h_{\mathcal{G}})$, resulting in an upper bound of $O(n \cdot h_{\mathcal{G}})$ for the time complexity.
  \end{proof}
\end{lemma}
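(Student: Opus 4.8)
The plan is to bound the running time of Algorithm~\ref{algo:single-source-query-processing} by charging, separately, the cost of the upward traversal (lines~3--7) and the cost of the final aggregation loop (lines~8--9), after first checking that every primitive operation runs in $O(1)$ time. For the traversal: the variable $w$ is initialised to $s$ and replaced by $\mathcal{T}_{min}[w].\Parent()$ at the end of each iteration, so the loop executes exactly $|\mathcal{P}_{s\leadsto root}|$ times before $w$ becomes $\mathcal{T}_{min}.\rootvector()$. In the iteration handling a node $w$, line~4 computes $\ratio$ with two label lookups and a division in $O(1)$ time, and lines~5--6 perform one multiply--add per node $u\in\mathcal{T}_{min}.\subtree(w)$. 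Each access $\mathcal{S}[w,u]$ is served in $O(1)$ via $\mathcal{S}[w].\res[\mathcal{T}_{min}.\dfsorder[u]-\mathcal{T}_{min}.\dfsorder[w]]$, since the subtree of $w$ occupies a contiguous block of the \dfs order. Hence the iteration for $w$ costs $\Theta(|\mathcal{T}_{min}.\subtree(w)|)$, and summing over all $w$ along the path gives $\sum_{u\in\mathcal{P}_{s\leadsto root}}|\subtree(u)|$ for the whole loop.

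It remains to account for the rest of the algorithm and to derive the coarse bound. Line~1 zeroes $n$ entries in $O(n)$ time, and the loop in lines~8--9 visits every $u\in\mathcal{V}$ exactly once, each step reading $\mathcal{S}.\diagonal[s]$, $\mathcal{S}.\diagonal[u]$ and $\column[u]$ and performing a constant number of arithmetic operations, hence $O(n)$ overall. Adding these contributions to the traversal cost yields the stated bound $n+\sum_{u\in\mathcal{P}_{s\leadsto root}}|\subtree(u)|$. For the coarser estimate, note that $|\subtree(u)|\le n$ for every node and that $\mathcal{P}_{s\leadsto root}$ contains at most $h_{\mathcal{G}}$ nodes by the definition of tree height; therefore $\sum_{u\in\mathcal{P}_{s\leadsto root}}|\subtree(u)|\le n\cdot h_{\mathcal{G}}$, and combined with the $O(n)$ term the total running time is $O(n\cdot h_{\mathcal{G}})$.

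The accounting is largely routine, so there is no deep obstacle; the one point demanding care is the claim that each label access runs in $O(1)$, which rests on the contiguity of subtree blocks under the \dfs ordering (the same fact underpinning the correctness identity $\column[u]=\mathbf{e}_u^T\mathbf{L}_v^{-1}\mathbf{e}_s$ obtained from the dependency property of Lemma~\ref{lemma:dependence-property}). I also expect the $O(n\cdot h_{\mathcal{G}})$ bound to be essentially tight: on a path-shaped $\mathcal{T}_{min}$ with $s$ near the root, every subtree along $\mathcal{P}_{s\leadsto root}$ has size $\Theta(n)$ while the path has length $\Theta(h_{\mathcal{G}})$, so $\sum_{u\in\mathcal{P}_{s\leadsto root}}|\subtree(u)|=\Theta(n\cdot h_{\mathcal{G}})$; flagging this tightness is the subtlest aspect of an otherwise direct argument.
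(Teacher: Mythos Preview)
Your proposal is correct and follows essentially the same approach as the paper: both decompose the work into the upward traversal over $\mathcal{P}_{s\leadsto root}$ (charging $|\subtree(w)|$ per step) and the final $O(n)$ aggregation, then bound the sum by $n\cdot h_{\mathcal{G}}$ via $|\subtree(u)|\le n$ and $|\mathcal{P}_{s\leadsto root}|\le h_{\mathcal{G}}$. Your version is more explicit about the $O(1)$ cost of individual label accesses and the tightness example, but the core argument is identical.
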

Note that although invoking the single-pair query algorithm (Algorithm~\ref{algo:single-pair-query-processing}) for each node in $\mathcal{V}$ to answer a single-source query also yields a complexity of $O(n \cdot h_{\mathcal{G}})$, its precise query complexity is $\sum_{u\in\mathcal{V}}|\mathcal{P}_{s\leadsto root}| = \sum_{u\in\mathcal{V}}|\subtree(u)|$ (due to variations in counting subtree sizes within a tree decomposition). This is strictly greater than the complexity $\sum_{u\in\mathcal{P}_{s\leadsto root}}|\subtree(u)|$ of Algorithm~\ref{algo:single-source-query-processing}. Experimental results demonstrate that the actual query time of Algorithm~\ref{algo:single-source-query-processing} is at least an order of magnitude faster than executing Algorithm~\ref{algo:single-pair-query-processing} for $n$ times (see Section~\ref{sec:experiment}).
\begin{table*}[]
	\vspace{-0.4cm}
	  \caption{Comparison of the complexities with existing methods for resistance distance computation}
	  \vspace{-0.3cm}
	  \centering
	  \scalebox{0.8}{
	  \begin{tabular}{|l|l|l|l|l|l|l|}
	  \hline
	  \textbf{Method} & \textbf{Category} & \textbf{Quality} & \textbf{Index building time} & \textbf{Index space} & \textbf{Single-pair query time} & \textbf{Single-source query time} \\
	  \hline
	  \lapsolver \cite{LaplacianSolver} & online & exact & - & - & $\widetilde{O}(m)$ & $\widetilde{O}(n\cdot m)$ \\
	  \hline
	  \geer \cite{ResistanceYang} & online & absolute error $\epsilon$ & - & - & $O(\frac{1}{\epsilon^2}\sigma^3)$, $\sigma=\log\left(\frac{1/(\epsilon(1-\lambda))}{1/\lambda}\right)$ & $O(\frac{n}{\epsilon^2}\sigma^3)$ \\
	  \hline
	  \bipush \cite{22resistance} & online & absolute error $\epsilon$ & - & - & $O(\frac{1}{\epsilon^2}\sigma_v^3)$, $\sigma_v=\log\left(\frac{1/(\epsilon(1-\lambda_v))}{1/\lambda_v}\right)$ & $O(\frac{n}{\epsilon^2}\sigma_v^3)$ \\
	  \hline
	  \leindex \cite{23resistance} & index-based & absolute error $\epsilon$ & $\widetilde{O}(\frac{1}{\epsilon^2}n)$ with assumptions & $O(n\cdot|\mathcal{V}_l|)$ & $O(\frac{1}{\epsilon^2}\sigma_{\mathcal{V}_l}^3)$, $\sigma_{\mathcal{V}_l}=\log\left(\frac{1/(\epsilon(1-\lambda_{\mathcal{V}_l}))}{1/\lambda_{\mathcal{V}_l}}\right)$ & $O(n+\frac{1}{\epsilon^2}\sigma_{\mathcal{V}_l}^3)$ \\
	  \hline
	  \treeindex (ours) & index-based & exact & $O(n\cdot h_{\mathcal{G}}^2\cdot d_{max})$ & $O(n \cdot h_{\mathcal{G}})$ & $O(h_{\mathcal{G}})$ & $O(n\cdot h_{\mathcal{G}})$ \\
	  \hline
	  \end{tabular}}
	  \label{tab:complexity}
	\end{table*}
	  
	\begin{table*}[]
	\vspace{-0.1cm}
	  \caption{Comparison of the complexities with tree decomposition-based shortest path distance computation method}
	  \vspace{-0.3cm}
	  \centering
	  \scalebox{0.8}{
	  \begin{tabular}{|l|l|l|l|l|}
	  \hline
	  \textbf{Problem} & \textbf{Method} & \textbf{Index building time} & \textbf{Index size} & \textbf{Query time} \\
	  \hline
	  \multirow{3}{*}{shortest path distance} & \tedi \cite{TEDISIGMOD10} & $O(n^2+n\cdot m)$ & $O(n\cdot(tw(\mathcal{G}))^2)$ & $O((tw(\mathcal{G}))^2\cdot h_{\mathcal{G}})$ \\
	  \cline{2-5}
	  & \multihop \cite{LiJunChang2012exact} & $O(n^2+n\cdot m)$ & $O(\sum_{X_i\in\mathcal{X}}|X_i|)$, $|X_i|$ the size of $i$-th tree node & $O(tw(\mathcal{G})\cdot h_{\mathcal{G}})$ \\
	  \cline{2-5}
	  & \htwoh \cite{HopLabeling2018hierarchy} & $O(n\cdot h_{\mathcal{G}}\cdot tw(\mathcal{G}))$ & $O(n\cdot h_{\mathcal{G}})$ & $O(tw(\mathcal{G}))$ \\
	  \hline
	  resistance distance & \treeindex (ours) & $O(n\cdot h_{\mathcal{G}}^2\cdot d_{max})$ & $O(n\cdot h_{\mathcal{G}})$ & $O(h_{\mathcal{G}})$ \\
	  \hline
	  \end{tabular}}
	  \vspace{-0.3cm}
	  \label{tab:sp_vs_resistance}
	\end{table*}

\stitle{Extension \treeindex to dynamic graphs.} In this paper, we focus on static computation of resistance distance. Extending the proposed \treeindex to support graph updates is a challenging problem. Like many existing solutions for dynamic tree decomposition-based shortest path distance computation, such as \cite{RelativeSubboundednessSIGMOD22,DynamicHubLabelingICDE21}, for the graph structure, it is possible to recognize the nodes infected by the update and only update the labels of the infected nodes. However, unlike the labels for shortest path distance which are weights of edges, the labels of \treeindex are elements of the matrix $\mathbf{L}_{\mathcal{U}_i\mathcal{U}_i}^{-1}$ for different node sets $\mathcal{U}_i$. It is non-trivial to update such labels. Carefully designed matrix-based update formulas should be provided for efficiency, which is a promising direction for future work.

\subsection{Comparison with Existing methods}
\stitle{Comparison with resistance distance computation methods.} Here, we first compare the complexities of existing solutions and the proposed \treeindex for the problem of resistance distance computation, which are listed in Table \ref{tab:complexity}. The complexities of the existing methods are obtained from the original paper \cite{LaplacianSolver,ResistanceYang,22resistance,23resistance}. It can be seen that except the Laplacian solver-based methods which have a large query time that is near-linear to the number of edges, the query time of all other methods \geer, \bipush and \leindex are bounded by different parameters that characterize the property of graphs. Specifically, $\lambda$ ($\lambda_v$, $\lambda_{\mathcal{V}_l})$ is the spectral radius of the probability transition matrix $\mathbf{P}=\mathbf{D}^{-1}\mathbf{A}$ ($\mathbf{P}_v=\mathbf{D}_v^{-1}\mathbf{A}_v$, $\mathbf{P}_{\mathcal{V}_l}=\mathbf{D}_{\mathcal{V}_l}^{-1}\mathbf{A}_{\mathcal{V}_l}$), $\mathbf{A}_v$, $\mathbf{D}_v$ ($\mathbf{A}_{\mathcal{V}_l}$, $\mathbf{D}_{\mathcal{V}_l}$) denote the matrix obtained by removing the $v$-th row and $v$-th column (rows and columns corresponding to $\mathcal{V}_l$) from $\mathbf{A}$, $\mathbf{D}$. To the best of our knowledge, there is no exact way to formulate the relationship between treewidth and the functions of eigenvalues. Intuitively, they are opposite to each other on the same graph. The graphs with small treewidth are easily separable, meaning that the spectral radius becomes small (according to the well-known Cheeger inequality \cite{chung1997spectral}). Consequently, the random walk will mix slowly ($\sigma$ is large) as it takes longer to overcome the bottlenecks and become uniformly distributed. As a result, existing random walk-based methods, which often rely on fast mixing assumptions (the near-linear index building time of \leindex is also obtained under such assumptions), perform poorly on graphs with small treewidth. This is also observed in previous study \cite{23resistance}. The proposed approach \treeindex, whose complexities are characterized by tree height $h_{\mathcal{G}}$ (which are observed small in real-life road networks), performs significantly faster on such graphs.

\stitle{Comparison with shortest path distance computation methods.} As \treeindex applies the tree decomposition technique for the problem of resistance distance computation, we also summarize the complexities of existing tree decomposition-based shortest path computation methods in Table \ref{tab:sp_vs_resistance}, followed by \treeindex. We select three representative tree decomposition-based shortest path computation methods \tedi \cite{TEDISIGMOD10}, \multihop \cite{LiJunChang2012exact}, and \htwoh \cite{HopLabeling2018hierarchy}. A short introduction of these methods can be found in the related work (Section~\ref{sec:related-work}). It can be seen that their complexities are also bounded by the functions of $h_{\mathcal{G}}$, a constant that is observed small on real-life small treewidth graphs. \treeindex has similar complexities with the \sota tree decomposition-based shortest path computation method \htwoh, with a remarkable difference being the query time--$tw(\mathcal{G})$ is strictly smaller than $h_{\mathcal{G}}$. This gap in query complexity is due to the different properties of resistance distance and shortest path distance. For shortest path distance, $d(s,t)$ is only related to the distances between $s,t$ and the cut vertices. However, resistance distance $r(s,t)$ is related to all nodes on the paths from $s,t$ to the root in the tree decomposition. As a result, the query process must traverse from $s$ and $t$ to the root (bounded by $O(h_{\mathcal{G}})$), while \htwoh only needs to query labels stored in the tree node $\lca(s,t)$ (bounded by $O(tw(\mathcal{G}))$). As a result, whether there is a $O(tw(\mathcal{G}))$ query algorithm for resistance distance is a challenging open problem.

\stitle{Contributions of this paper compared to existing methods.} Compared to the existing methods for resistance distance computation and tree decomposition-based shortest path distance computation, the main novelty of this paper is summarized as follows:

(1) To the best of our knowledge, this work is the first to successfully leverage tree decomposition for efficient resistance distance computation, in contrast to existing solutions that primarily rely on Laplacian solvers or random walk-based approaches.

(2) This work presents a novel cut property for resistance distance with combination of Schur complement and Cholesky decomposition. Non-zero structure of the inverse Cholesky decomposition is connected with a specific tree decomposition. The proposed dependency property of resistance distance (a new closed form formula of $r(s,t)$) shows that $r(s,t)$ is only dependent on the labels stored in the paths from $s$ and $t$ to the root of the tree decomposition. Efficient label construction algorithm is proposed by integrating the rank-1 update formula and the tree decomposition structure.

(3) \treeindex is the first study capable of computing single-source resistance distance exactly on the \fullusa dataset (see Section~\ref{sec:experiment}). Such application is not possible with existing methods.

\section{Application: Robust Routing on Road Networks	}\label{sec:application}
The proposed \treeindex approach demonstrates significant performance advantages on graphs with small treewidth, particularly road networks. In this section, we illustrate an important concrete application in geo-spatial database: robust routing on road networks. Given two query nodes $s$ and $t$, robust routing aims to provide $k$ alternative paths that satisfy two criteria: (i) each alternative path should be relatively short, and (ii) the set of alternative paths should be robust, meaning that the inaccessibility of any single edge should not disrupt all provided paths. For example, when an accident occurs on a highway segment along the shortest path, navigation systems should offer viable alternative routes. Electrical flow has been shown to perform effectively in this context \cite{RobustRouting21}. The \resistancedistance (resistance distance)-based robust routing method first computes the electrical flow between a source node $s$ and a target node $t$, and then generates alternative paths according to the electrical flow.

We first show that how the proposed \treeindex can efficiently compute the electrical flow between a source node $s$ and a target node $t$. Recall that Algorithm~\ref{algo:single-source-query-processing} addresses single-source queries originating from node $s$ by computing the $s$-th column of $\mathbf{L}_v^{-1}$ using the resistance distance labelling. By a straightforward modification of Algorithm~\ref{algo:single-source-query-processing}, we can compute the electrical flow between a source node $s$ and a target node $t$ by calculating both the $s$-th and $t$-th columns of $\mathbf{L}_v^{-1}$, as formalized in the following lemma:
\begin{lemma}
	The eletrical flow, when a unit current comes in through node $s$ and comes out through $t$, can be represented as:
	\begin{equation}
		\nabla \mathbf{f}=\mathbf{L}^\dagger(\mathbf{e}_s-\mathbf{e}_t)=\left[\begin{matrix}
		  \mathbf{L}_v^{-1} & 0 \\ 0 & 0
		\end{matrix}\right](\mathbf{e}_s-\mathbf{e}_t),
	\end{equation}
	where the eletrical flow along edge $(e_1,e_2)$ is $\nabla\mathbf{f}_{e_1}-\nabla\mathbf{f}_{e_2}$.
	\begin{proof}
	According to \cite{22resistance}, $\left[\begin{matrix}
		\mathbf{L}_v^{-1} & 0 \\ 0 & 0
	  \end{matrix}\right]$ is a g-inverse of $\mathbf{L}$. The lemma can be established according to the property of g-inverse \cite{22resistance} and given that $\mathbf{e}_s-\mathbf{e}_t$ is orthogonal to $\vec{1}$.
	\end{proof}
\end{lemma}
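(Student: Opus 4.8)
The plan is to show that \emph{both} vectors $\mathbf{L}^\dagger(\mathbf{e}_s-\mathbf{e}_t)$ and $M(\mathbf{e}_s-\mathbf{e}_t)$, where $M:=\left[\begin{smallmatrix}\mathbf{L}_v^{-1} & 0\\ 0 & 0\end{smallmatrix}\right]$, are solutions of the Kirchhoff system $\mathbf{L}\mathbf{x}=\mathbf{e}_s-\mathbf{e}_t$, and then to observe that any two solutions induce the same electrical flow, namely the one from the Preliminaries. First I would recall from the Preliminaries that the voltage vector is characterized by $\mathbf{L}\mathbf{x}=\mathbf{e}_s-\mathbf{e}_t$, that the flow along edge $(e_1,e_2)$ is the voltage drop $\mathbf{x}(e_1)-\mathbf{x}(e_2)$, and that $\mathbf{L}$ is symmetric positive semidefinite with $\ker\mathbf{L}=\mathrm{span}\{\vec{1}\}$, hence $\mathrm{range}(\mathbf{L})=\{\vec{1}\}^\perp$. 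Since $(\mathbf{e}_s-\mathbf{e}_t)^T\vec{1}=0$, the right-hand side lies in $\mathrm{range}(\mathbf{L})$, so the system is consistent; the Preliminaries already record that $\mathbf{x}=\mathbf{L}^\dagger(\mathbf{e}_s-\mathbf{e}_t)$ is a solution (the minimum-norm one).

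Next I would verify that $M$ is a generalized inverse of $\mathbf{L}$, i.e. $\mathbf{L}M\mathbf{L}=\mathbf{L}$. After a symmetric permutation placing $v$ last, write $\mathbf{L}=\left[\begin{smallmatrix}\mathbf{L}_v & \mathbf{c}\\ \mathbf{c}^T & d\end{smallmatrix}\right]$. The zero-column-sum property of $\mathbf{L}$ gives $\mathbf{c}=-\mathbf{L}_v\vec{1}$ and $d=-\mathbf{c}^T\vec{1}$, so $\mathbf{L}_v^{-1}\mathbf{c}=-\vec{1}$; a short block computation then yields $M\mathbf{L}=\left[\begin{smallmatrix}\mathbf{I} & -\vec{1}\\ 0 & 0\end{smallmatrix}\right]$ and, multiplying by $\mathbf{L}$ on the left, $\mathbf{L}M\mathbf{L}=\left[\begin{smallmatrix}\mathbf{L}_v & -\mathbf{L}_v\vec{1}\\ \mathbf{c}^T & -\mathbf{c}^T\vec{1}\end{smallmatrix}\right]=\mathbf{L}$. (This is precisely the g-inverse property of this block matrix established in \cite{22resistance}, which may simply be cited.) From $\mathbf{L}M\mathbf{L}=\mathbf{L}$ it follows that for every $\mathbf{b}\in\mathrm{range}(\mathbf{L})$, writing $\mathbf{b}=\mathbf{L}\mathbf{y}$, we have $\mathbf{L}(M\mathbf{b})=\mathbf{L}M\mathbf{L}\mathbf{y}=\mathbf{L}\mathbf{y}=\mathbf{b}$; applying this with $\mathbf{b}=\mathbf{e}_s-\mathbf{e}_t$ shows $M(\mathbf{e}_s-\mathbf{e}_t)$ solves the Kirchhoff system as well.

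Finally I would conclude by uniqueness modulo the kernel. Since $\ker\mathbf{L}=\mathrm{span}\{\vec{1}\}$, the two solutions differ by a scalar multiple of $\vec{1}$: $M(\mathbf{e}_s-\mathbf{e}_t)=\mathbf{L}^\dagger(\mathbf{e}_s-\mathbf{e}_t)+c\,\vec{1}$ for some $c\in\mathbb{R}$. The induced electrical flow on an edge $(e_1,e_2)$ is the difference of coordinates $e_1$ and $e_2$, which annihilates the term $c\,\vec{1}$; hence the two vectors induce the same flow, equal to the electrical flow defined in the Preliminaries, and this flow along $(e_1,e_2)$ is exactly $\nabla\mathbf{f}_{e_1}-\nabla\mathbf{f}_{e_2}$. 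For the application it also suffices to note that $\mathbf{e}_s-\mathbf{e}_t\perp\vec{1}$, so the computation may use whichever of the two representations is more convenient.

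I expect the only real subtlety — the ``main obstacle'' in a proof that is otherwise routine linear algebra — to be the reading of the claimed equality: $\mathbf{L}^\dagger(\mathbf{e}_s-\mathbf{e}_t)$ and $M(\mathbf{e}_s-\mathbf{e}_t)$ are not literally equal as vectors (the latter has a forced zero in coordinate $v$ and need not be orthogonal to $\vec{1}$), so the statement must be interpreted as equality of the induced edge-difference vectors, and one must be explicit that the electrical flow is well defined as a function of the voltage vector modulo $\mathrm{span}\{\vec{1}\}$.
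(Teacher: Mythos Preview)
Your proposal is correct and follows essentially the same approach as the paper: cite (or verify) that $M=\left[\begin{smallmatrix}\mathbf{L}_v^{-1}&0\\0&0\end{smallmatrix}\right]$ is a g-inverse of $\mathbf{L}$, and use $(\mathbf{e}_s-\mathbf{e}_t)\perp\vec{1}$ to conclude. Your version is simply more explicit---you spell out the block computation for $\mathbf{L}M\mathbf{L}=\mathbf{L}$ and articulate the subtlety that the two voltage vectors differ by a constant multiple of $\vec{1}$ (hence are equal only as edge-difference vectors), whereas the paper defers both points to the cited reference.
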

\begin{figure}[t!]
	\vspace*{0cm}
	\begin{center}
		\includegraphics[width=0.72\columnwidth, height=3.6cm]{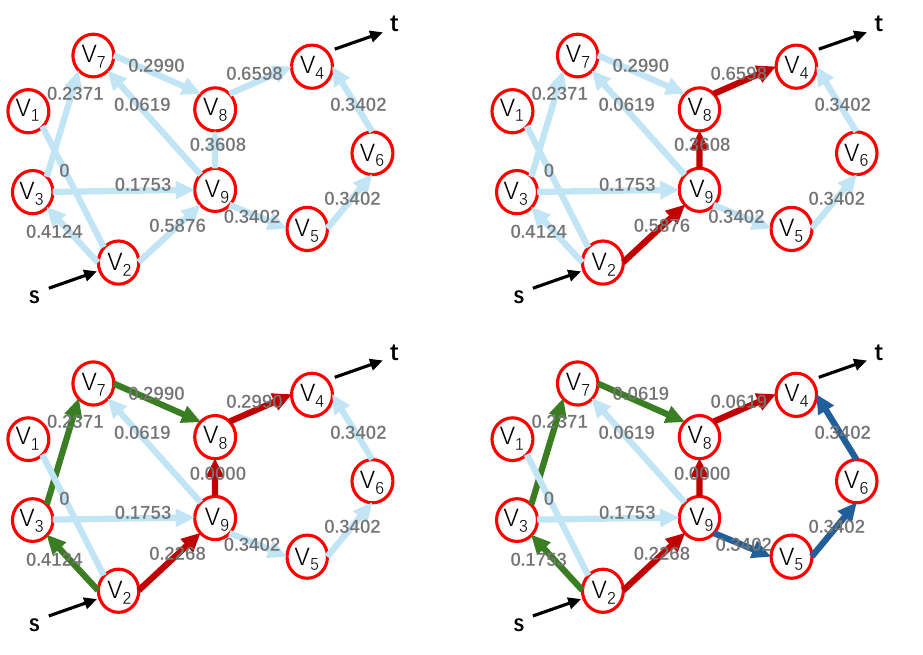}
	\end{center}
	\vspace*{-0.2cm}
	\caption{An illustrative example of robust routing on road networks using \resistancedistance-based method}
	\vspace*{-0.2cm}
	\label{fig:example-robust-routing}
\end{figure}
After computing the electrical flow, we generate $k$ alternative paths between nodes $s$ and $t$ as follows: we iteratively identify a path that maximizes the minimum flow from $s$ to $t$, remove the corresponding flow weights from the graph, and repeat this process to obtain additional paths. Fig.~\ref{fig:example-robust-routing} illustrates this process with an example. Initially, we identify the red path, which has the minimum flow value of $0.2608$, maximizing the flow across all possible paths from $s$ to $t$. Subsequently, we select the green and blue paths.

The \resistancedistance-based routing method can outperform the state-of-the-art robust routing methods \cite{PenaltyMethod,PlateauMethod} in terms of both routing time and path quality, as illustrated in our experiments (see Section~\ref{sec:experiment}). Thus, \treeindex has the potential to be applied to real-life applications.

\begin{table}[t]\vspace*{-0cm}
	\small
  \centering
  \caption{\small Datasets ($n$: number of nodes; $m$: number of edges; $d_{max}$: maximum degree; $h_{\mathcal{G}}$: tree height; $tw(\mathcal{G})$: tree width; $\frac{\# nnz}{n}$: average number of non-zero labels per node)} \label{tab:datasets}
  \vspace{-0.2cm}
  \resizebox{\columnwidth}{!}{
  \begin{tabular}{cccccccc}
	\toprule
	\bf Type & \bf Dataset & $n$ & $m$ & $d_{max}$ & $h_{\mathcal{G}}$ & $tw(\mathcal{G})$ & $\frac{\# nnz}{n}$ \cr\midrule
	\multirow{3}{*}{Social} & \emailenron & 33,696 & 180,811 & 1,383 & 2,397 & 2,258 & 1,895 \cr
	& \amazon & 334,863 & 925,872 & 549 & 21,394 & 18,462 & 18,770 \cr
	& \dblp & 317,080 & 1,049,866 & 343 & 20,109 & 19,322 & 17,582 \cr\midrule
	\multirow{7}{*}{Road} & \newyork & 264,346 & 365,050 & 8 & 767 & 113 & 346 \cr
	& \roadpa & 1,090,920 & 1,541,898 & 9 & 2,038 & 396 & 1,085 \cr
	& \roadtx & 1,393,383 & 1,921,660 & 12 & 1,389 & 308 & 760 \cr
	& \roadca & 1,971,281 & 2,766,607 & 12 & 1,821 & 470 & 1,237 \cr
	& \western & 6,262,104 & 7,559,642 & 9 & 1,489 & 263 & 1,016 \cr
	& \roadctr & 14,081,816 & 16,933,413 & 8 & 2,982 & 602 & 1,898 \cr
	& \fullusa & 23,947,348 & 28,854,312 & 9 & 3,976 & 642 & 2,268 \cr
	\bottomrule
  \end{tabular}}
  \vspace*{-0.4cm}
  \end{table}
\section{Experiments}\label{sec:experiment}
\subsection{Experimental Setup}
\stitle{Datasets.} We carefully selected 10 large-scale networks with diverse properties. Among them, 3 are social or collaboration networks exhibiting fast mixing properties, while the remaining 7 are road networks characterized by small treewidth. All datasets are publicly available from the \snap repository \cite{snapnets} and the \dimacs road network challenge \cite{dimacs}. We remove the node weights and the duplicate edges to build undirected, unweighted graphs, since previous methods \cite{ResistanceYang,22resistance,23resistance} are designed for such graphs. However, our methods also support weighted graphs, as we show in the case study in Section~\ref{sec:application}. Table~\ref{tab:datasets} presents comprehensive statistics of these datasets, including the number of nodes $n$, edges $m$, maximum degree $d_{max}$, tree height $h_{\mathcal{G}}$, treewidth $\tw(\mathcal{G})$ (computed using the minimum degree heuristic), and the average number of non-zero labels per node $\frac{\# nnz}{n}$. It can be seen that, on road networks with $20$ million edges, the tree height is relatively small (within $4,000$) and the real number of non-zero labels per node is even smaller. However, on social netwroks, $h_{\mathcal{G}}$ can exceed $20,000$ in a graph with around $1$ million edges. For our experimental evaluation, we generated $1,000$ random node pairs for single-pair query. Similarly, for single-source queries, we selected $100$ random source nodes per dataset. For all query sets, we establish the ``ground truth'' using the proposed \treeindex method. It should be noted that, while our method is theoretically exact, minor discrepancies may arise due to floating-point precision errors during computation. We assess the precision-related issues of \treeindex in Section~\ref{subsec:performance} (see \textbf{Exp III}).

\stitle{Comparison Methods.} We implement Algorithm~\ref{algo:index-building} to build our proposed \treeindex, which supports both single-pair (Algorithm~\ref{algo:single-pair-query-processing}) and single-source (Algorithm~\ref{algo:single-source-query-processing}) queries. We compare \treeindex against state-of-the-art exact and approximate approaches. For \textit{exact methods}, we benchmark against \lapsolver \cite{LaplacianSolver}, which employs advanced Laplacian solver techniques to compute resistance distances. This method first constructs an approximate Cholesky decomposition of the Laplacian matrix $\mathbf{L}$ as a preconditioner, then applies preconditioned conjugate gradient methods to solve $\mathbf{L}\mathbf{x}=\mathbf{e}_s-\mathbf{e}_t$, after which the resistance distance is computed as $\mathbf{x}_s-\mathbf{x}_t$. We set a tolerance of $10^{-9}$ to ensure precise results. For exact single-source queries, existing methods require solving $n$ separate linear systems. We also include a baseline \singlepairn by invoking our single-pair query algorithm for $n$ times. For \textit{approximate methods}, we compare with leading \textit{online} approaches \bipush \cite{22resistance} and \geer \cite{ResistanceYang} for single-pair queries and \lewalk \cite{22resistance} for single-source queries, as well as the state-of-the-art \textit{index-based} method \leindex \cite{23resistance}. Both \bipush and \geer utilize random walk sampling, while \lewalk employs loop-erased walk sampling. \leindex constructs an index by using spanning forest sampling to create Schur complement approximations with respect to a landmark node set $\mathcal{V}_l$, and employs \bipush (\push) to compute $\mathbf{L}_{\mathcal{U}\mathcal{U}}^{-1}$ during single-pair (single-source) query processing. Following previous work \cite{23resistance}, we set $|\mathcal{V}_l|=100$ and select landmarks using the highest degree heuristic. For all approximate methods, we use parameter $\epsilon=0.1$ to control accuracy by default, following previous studies \cite{23resistance,ResistanceYang,22resistance}.

\stitle{Experimental Environment.} All experiments are conducted on a Linux server with Intel Xeon E5-2680 v4 CPU and 512GB memory. All the algorithms are implemented in C++ and compiled with GCC 7.5.0. For comparison methods, we use the original C++ implementations provided by the authors \cite{ResistanceYang,RCHOL21,22resistance,23resistance}.
\subsection{Performance Evaluation}\label{subsec:performance}
\begin{figure}[t]
	\vspace*{0cm}
	\begin{center}
	\includegraphics[width=0.95\columnwidth]{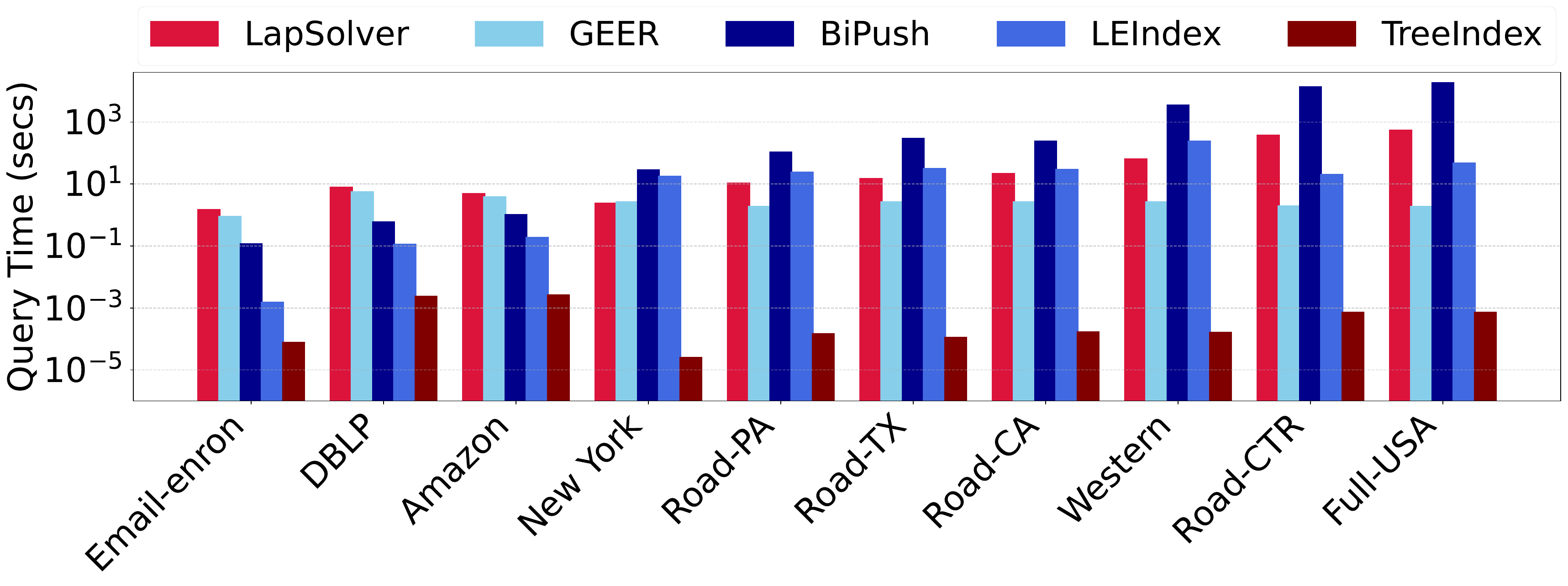}
	\end{center}
	\vspace*{-0.5cm}
	\caption{Processing time of single-pair query}
	\vspace*{-0.4cm}
	\label{fig:single-pair-query-time}
\end{figure}
\begin{figure}[t]
	\vspace*{0cm}
	\begin{center}
	\includegraphics[width=0.95\columnwidth]{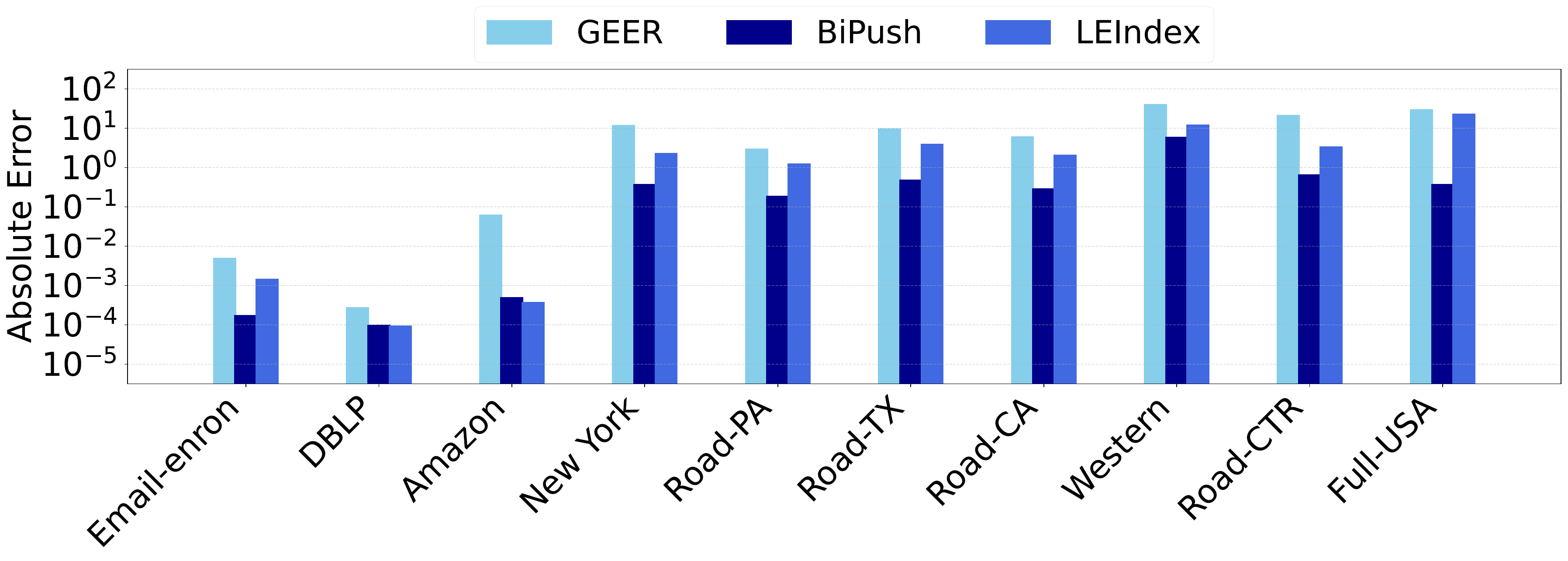}
	\end{center}
	\vspace*{-0.5cm}
	\caption{Absolute error of single-pair query for approximate methods}
	\vspace*{-0.4cm}
	\label{fig:single-pair-query-absolute-error}
\end{figure}
\stitle{Exp I: Overall Query Processing Performance.} We first evaluate the query processing performance of our method in comparison with state-of-the-art exact and approximate methods. The results for single-pair query processing are illustrated in Fig.~\ref{fig:single-pair-query-time}. Additionally, for the approximate methods \geer, \bipush, and \leindex, we present the average absolute error results in Fig.~\ref{fig:single-pair-query-absolute-error}. Letting $\hat{r}(s,t)$ denote the query result from approximate methods, we define the absolute error as $|\hat{r}(s,t)-r(s,t)|$ to measure query accuracy. The results clearly show that \treeindex achieves the fastest query times across all datasets. For the \emailenron, \amazon, and \dblp networks, which exhibit relatively large tree widths, the query efficiency order from fastest to slowest is: \treeindex, followed by the index-based approximate method \leindex, the online approximate method \bipush, \geer, and the exact method \lapsolver. Remarkably, despite providing exact results, \treeindex is at least one order of magnitude faster than \leindex. On road networks, random walk-based methods (\geer, \bipush, \leindex) exhibit even poorer performance than the exact method \lapsolver, characterized by prolonged query times or high relative errors due to the significant mixing times inherent to road networks. In contrast, \treeindex remains more than $3$ orders of magnitude faster than \lapsolver. Notably, on the largest road network, \fullusa, \treeindex achieves query times of approximately $7\times 10^{-4}$ seconds, while \lapsolver requires $525$ seconds. Furthermore, approximate methods require over $5,405$ seconds to achieve an absolute error of $10^{-1}$. This superior performance is primarily attributable to the small tree height of road networks, which enables our method to effectively leverage tree width for highly efficient query processing.

For single-source queries, we employ the average relative error, defined as $\frac{1}{n}\sum_{u\in\mathcal{V}}{|\hat{r}(s,u)-r(s,u)|}$, to measure the query accuracy of the approximate methods \lewalk and \leindex. We exclude any queries exceeding $10$ hours. The corresponding results are illustrated in Fig.~\ref{fig:single-source-query-time} and Fig.~\ref{fig:single-source-query-error}. Across all datasets, \lapsolver can compute results only for \emailenron within $10$ hours. \treeindex is at least an order of magnitude faster than \singlepairn. As observed, on road networks, \treeindex remains the fastest method, followed by the index-based approximate method \leindex, the online approximate method \lewalk, and finally the exact method \lapsolver. Notably, the average absolute error of \leindex is significantly high, primarily due to the substantial variance inherent in loop-erased walk sampling on road networks. Despite this, \treeindex still achieves query times at least an order of magnitude faster than \leindex and \lewalk. For example, on \fullusa, \treeindex has an average query time of approximately $190$ seconds, whereas \leindex requires around $3,776$ seconds to achieve an absolute error of only $4.2$.

\begin{figure}[t]
	\vspace*{-0.1cm}
	\begin{center}
	\includegraphics[width=0.95\columnwidth]{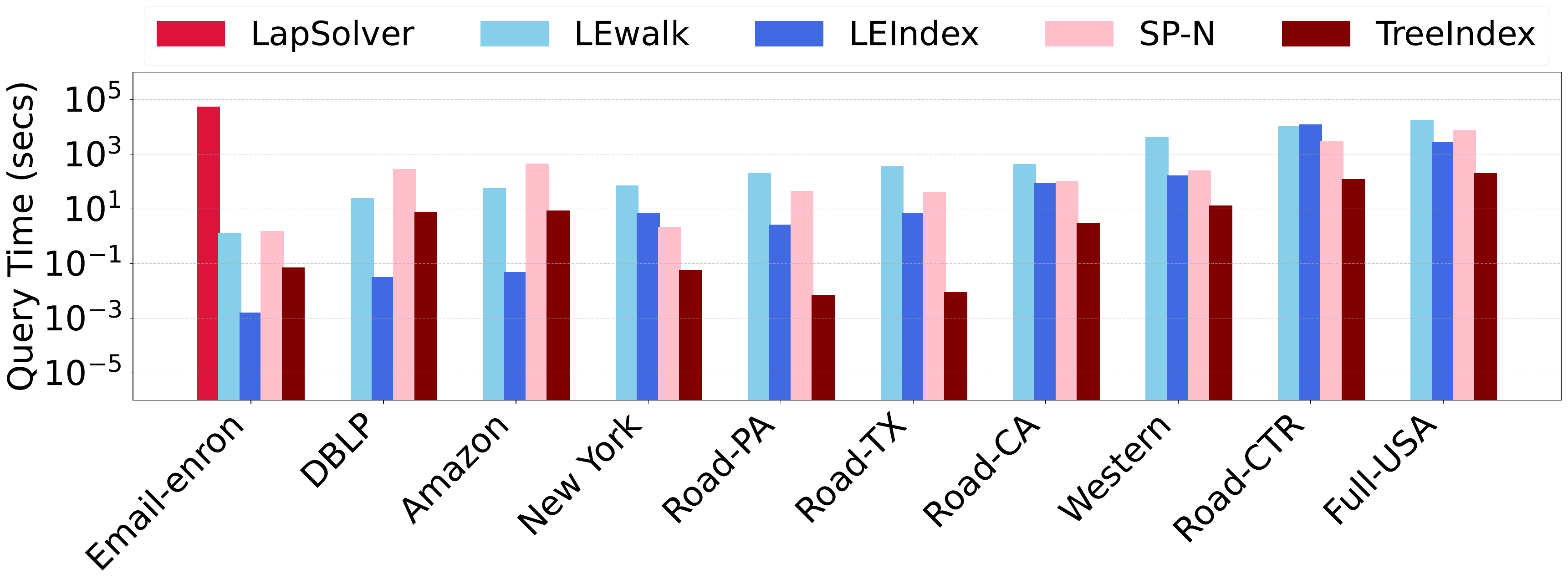}
	\end{center}
	\vspace*{-0.5cm}
	\caption{Processing time of single-source query}
	\vspace*{-0.4cm}
	\label{fig:single-source-query-time}
\end{figure}
\begin{figure}[t]
	\vspace*{-0.1cm}
	\begin{center}
	\includegraphics[width=0.95\columnwidth]{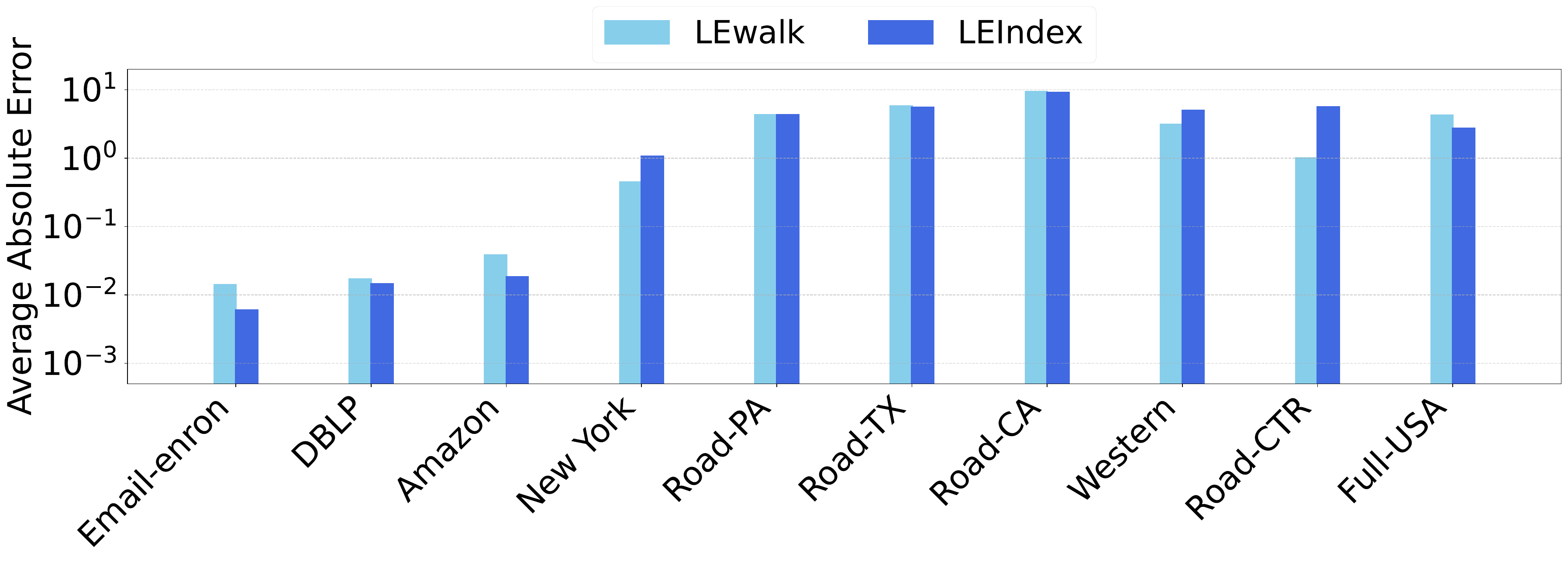}
	\end{center}
	\vspace*{-0.5cm}
	\caption{Average absolute error of single-source query for approximate methods}
	\vspace*{-0.2cm}
	\label{fig:single-source-query-error}
\end{figure}
\begin{table}[t]
	\centering
	\caption{Indexing performance analysis on all datasets}
	\label{tab:index-analysis}
	\vspace*{-0.2cm}
	\resizebox{\columnwidth}{!}{
	\begin{tabular}{lccccc}
	\toprule
	\multirow{2}{*}{\textbf{Dataset}} & \multicolumn{3}{c}{\textbf{Index Size (MB)}} & \multicolumn{2}{c}{\textbf{Construction Time (secs)}} \\
	\cmidrule(lr){2-4} \cmidrule(lr){5-6}
	& Graph size & \textbf{\treeindex} & \textbf{\leindex} & \textbf{\treeindex} & \textbf{\leindex} \\
	\midrule
	\emailenron & 2 & 487 $(244\times)$ & 5 $(3\times)$ & 1,136 & 2 \\
	\amazon & 12 & 47,952 $(3,996\times)$ & 51 $(4\times)$ & 363,099 & 243 \\
	\dblp & 13 & 42,536 $(3,272\times)$ & 48 $(4\times)$ & 521,969 & 154 \\
	\newyork & 5 & 698 $(140\times)$ & 40 $(8\times)$ & 7 & 515 \\
	\roadpa & 20 & 9,505 $(475\times)$ & 166 $(8\times)$ & 452 & 3,323 \\
	\roadtx & 26 & 8,929 $(343\times)$ & 206 $(8\times)$ & 257 & 4,960 \\
	\roadca & 39 & 17,418 $(447\times)$ & 299 $(8\times)$ & 814 & 7,583 \\
	\western & 113 & 48,538 $(430\times)$ & 955 $(8\times)$ & 1,567 & 65,392 \\
	\roadctr & 266 & 203,858 $(766\times)$ & 2,176 $(8\times)$ & 44,614 & 68,390 \\
	\fullusa & 470 & 414,392 $(882\times)$ & 3,735 $(8\times)$ & 77,323 & 115,523 \\
	\bottomrule
	\end{tabular}
	}
	\vspace*{-0.3cm}
\end{table}
\stitle{Exp II: Indexing Performance Anaylsis.} For the two index-based methods, \leindex and \treeindex, we analyze both their index sizes and index construction times. Note that both methods can support single-source queries by additionally storing diagonal entries of $\mathbf{L}_v^{-1}$ during index construction. We compare the index sizes of these methods, measured both in absolute terms and relative to the graph size. The results are presented in Table~\ref{tab:index-analysis}. It can be observed that the index size of \treeindex depends significantly on the graph's structure, specifically on its tree height $h_{\mathcal{G}}$. In contrast, the index size of \leindex is independent of graph structure, as it explicitly stores an $n\times|\mathcal{V}_l|$ matrix, where $\mathcal{V}_l$ is the landmark node set. Consequently, the index size of \leindex is approximately $8\times$ the graph size, whereas the index size of \treeindex can scale up to $4000\times$ the graph size on social networks and several hundred times the graph size on road networks. Despite this significant difference, the index sizes remain manageable in practice. For example, on the largest road network \fullusa, the index size reaches around $405$ GB, which can still be efficiently loaded onto a commodity server.

We also evaluate and compare the index construction times, with the results summarized in Table~\ref{tab:index-analysis}. The results reveal a significant performance variation between graphs with relatively large tree width and road networks. Specifically, the index construction time of \treeindex is substantially longer compared to \leindex on networks with large tree width. For example, constructing the resistance distance labelling on \dblp takes approximately $145$ hours (over $6$ days). Conversely, the index construction process is notably faster on road networks, even surpassing the performance of \leindex, which provides only approximate solutions. For instance, \treeindex constructs the index for \fullusa within $7$ hours, whereas \leindex requires more than $32$ hours.
\comment{
\begin{figure}[t]
	\vspace*{-0cm}
	\begin{center}
 \begin{tabular}[t]{c}
   \subfigure[\dblp]{
				\includegraphics[width=0.40\columnwidth]{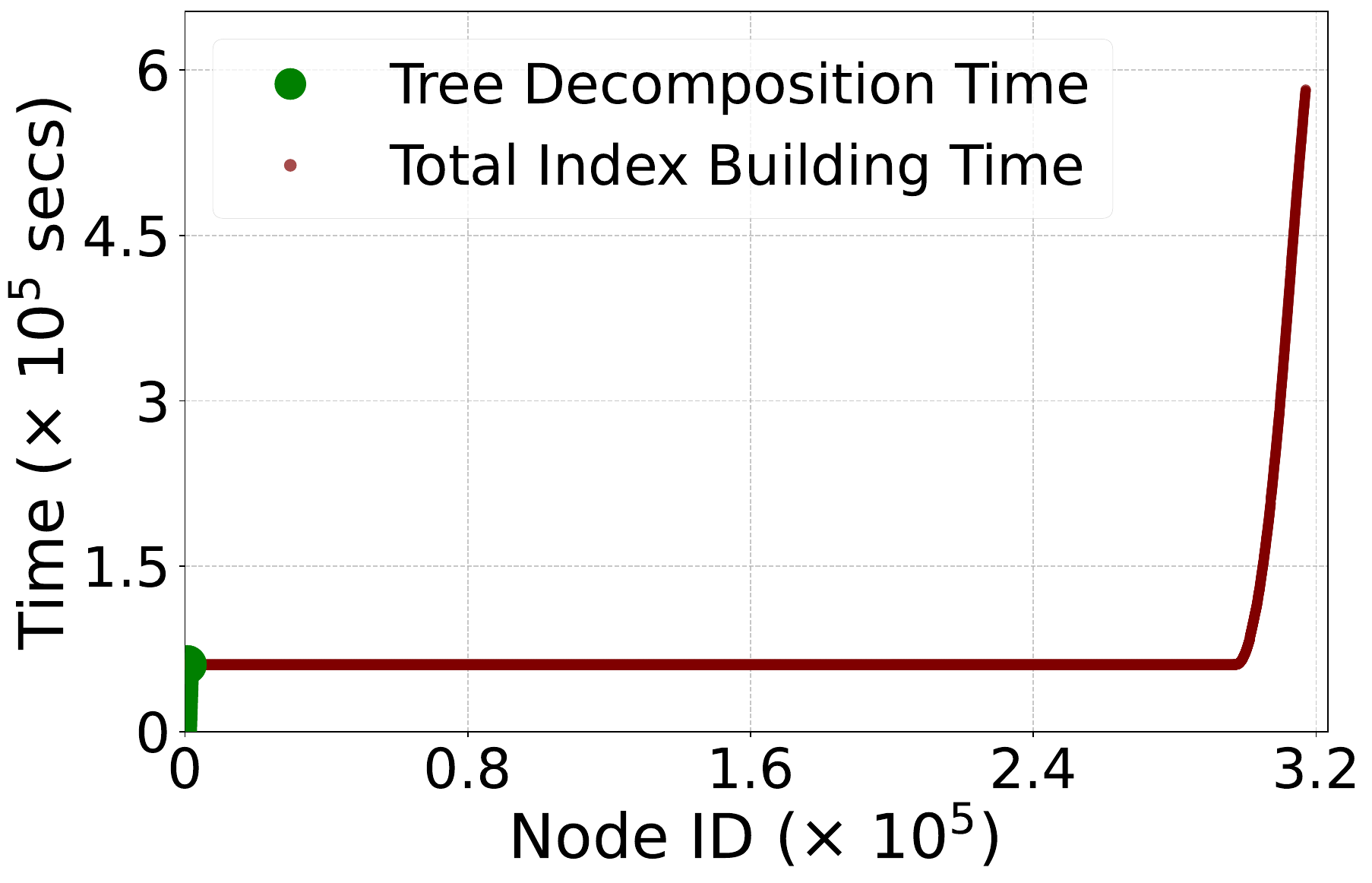}
			}
   \subfigure[\roadpa]{
				\includegraphics[width=0.38\columnwidth]{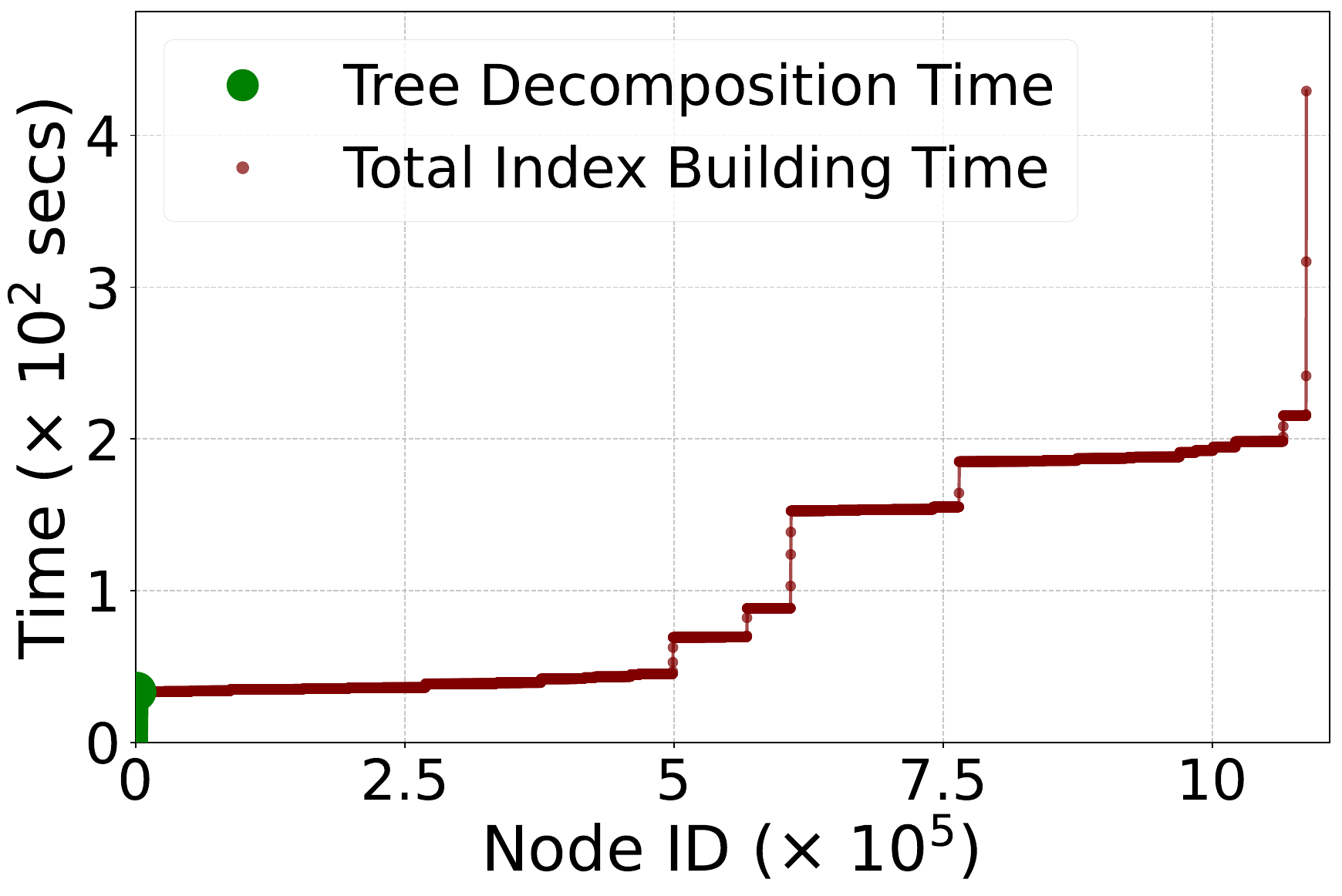}
			}
		\end{tabular}
	\end{center}
	\vspace*{-0.6cm}
	\caption{Distribution of index building time of \treeindex}
	\vspace*{-0.2cm}
	\label{fig:index-building-process}
\end{figure}}
\comment{
\stitle{Exp III: Distribution of Index Construction Time.} Recall that the index construction process (Algorithm~\ref{algo:index-building}) of \treeindex comprises two stages: (i) tree decomposition and preprocessing, and (ii) iterative bottom-up construction of the index. In this experiment, we further analyze the distribution of index construction time in \treeindex. We plot the cumulative index construction time against the number of nodes processed according to the \dfs ordering. The initial vertical rise observed in the plots is attributed to tree decomposition and preprocessing. The results for datasets \dblp and \roadpa are shown in Fig.~\ref{fig:index-building-process}; similar trends are observed on other datasets. We can observe that the time required for tree decomposition and preprocessing is relatively small compared to the total index construction time. Moreover, an interesting phenomenon can be observed: the index construction is initially very fast but dramatically slows for the last set of nodes. This sharp increase is particularly pronounced on social networks. Notably, although the entire construction process for \dblp requires approximately $145$ hours, constructing the first $90\%$ of nodes takes only about $1$ hour. This behavior arises from the bottom-up nature of our construction approach, wherein nodes with higher degrees are processed later.}

\begin{figure}[t]
	\vspace*{-0cm}
	\begin{center}
 \begin{tabular}[t]{c}
   \subfigure[\newyork]{
				\includegraphics[width=0.38\columnwidth]{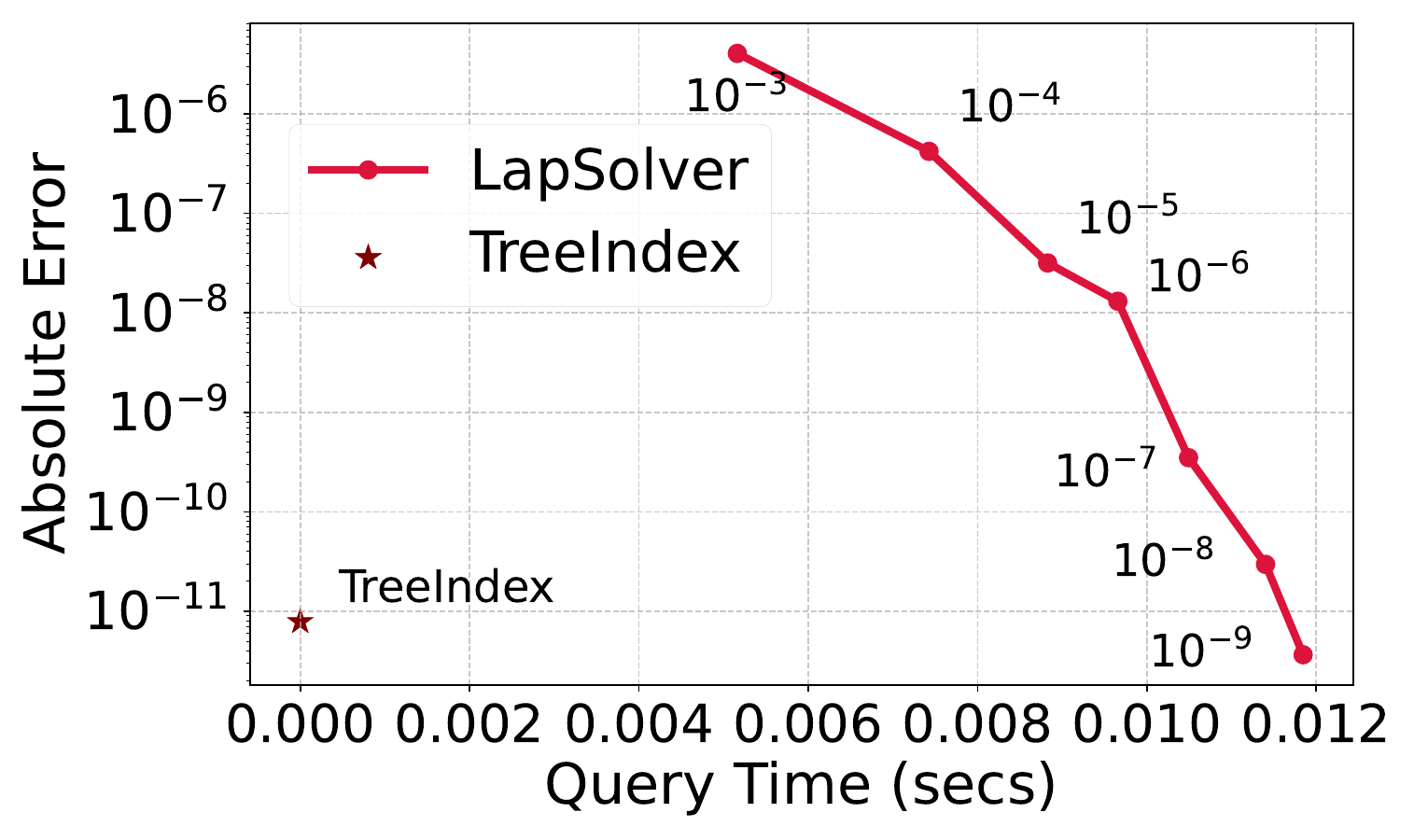}
			}
   \subfigure[\roadpa]{
				\includegraphics[width=0.38\columnwidth]{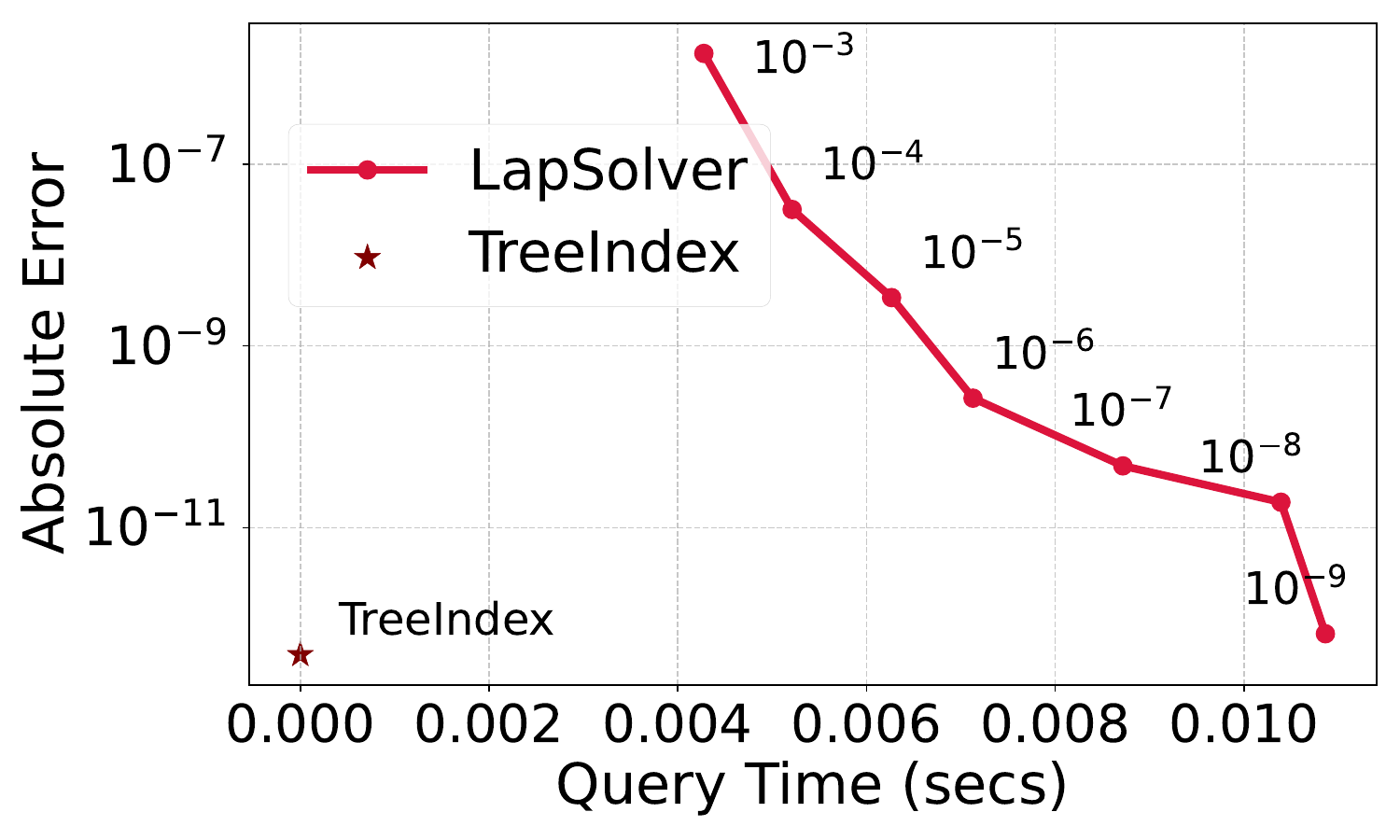}
			}
		\end{tabular}
	\end{center}
	\vspace*{-0.6cm}
	\caption{Precision analysis of \treeindex and \lapsolver}
	\vspace*{-0.2cm}
	\label{fig:precision-evaluation}
\end{figure}
\stitle{Exp III: Precision Analysis.} It should be noted that although \treeindex is theoretically exact, practical computations can still introduce minor errors due to floating-point precision limitations. In this experiment, we further analyze the numerical precision of \treeindex compared to \lapsolver. Specifically, we use the results obtained by \lapsolver with $\epsilon=10^{-19}$ as the ground truth, and vary the parameter $\epsilon$ in \lapsolver to compare the absolute errors against those of \treeindex. The results are presented in Fig.~\ref{fig:precision-evaluation}. As illustrated, \treeindex consistently achieves an absolute error smaller than $10^{-11}$, which is negligible in practical applications.

\begin{table}[t]
	\centering
	\caption{Performance comparison between \treeindex and Shortest Path Distance method \htwoh \cite{HopLabeling2018hierarchy}}
	\label{tab:comparison-shortest-path}
	\resizebox{\columnwidth}{!}{
	\begin{tabular}{llccc}
	\toprule
	\textbf{Dataset} & \textbf{Method} & \textbf{Query Time} & \textbf{Index Size} & \textbf{Construction Time} \\
	\midrule
	\multirow{2}{*}{\newyork} & \treeindex & $2.6\times10^{-5}$secs & $0.68$GB & 6.7 secs \\
	 & \htwoh & $1.3\times10^{-6}$secs & $0.38$GB & 2.3 secs \\
	\midrule
	\multirow{2}{*}{\roadpa} & \treeindex & $1.4\times10^{-4}$secs & $9.28$GB & 452.5 secs \\
	 & \htwoh & $4.3\times10^{-6}$secs & $4.93$GB & 21.9 secs \\
	\bottomrule
	\end{tabular}
	}
	\vspace{-0.4cm}
\end{table}
\stitle{Exp IV: Comparison with Shortest Path Distance Index.} Although resistance distance computation and shortest path distance computation are fundamentally different problems, we also compare \treeindex with the state-of-the-art shortest path distance labelling approach, \htwoh, which likewise utilizes tree decomposition and vertex hierarchy. We compare these methods in terms of query time, index size, and construction time, with the results summarized in Table~\ref{tab:comparison-shortest-path}. Results for \newyork and \roadpa are presented, and consistent patterns are observed across other datasets. It can be seen that our method achieves performance comparable to \htwoh, though slightly inferior. This minor performance gap primarily arises from the more complicated index construction process in our method, which involves numerical computations. Nevertheless, our method dramatically improves the query efficiency for resistance distance, making its computation practical on large road networks. Notably, before this study, the most advanced method required approximately $5,000$ seconds per query on \fullusa, as demonstrated in our experiments. Thus, our method effectively brings resistance distance computation into a practical realm for large road networks.

\begin{figure}[t!]
	\vspace*{-0.2cm}
	\begin{center}
 \begin{tabular}[t]{c}
  \vspace{-0.1cm}\subfigure[query time, road networks]{
				\includegraphics[width=0.42\columnwidth, height=2.4cm]{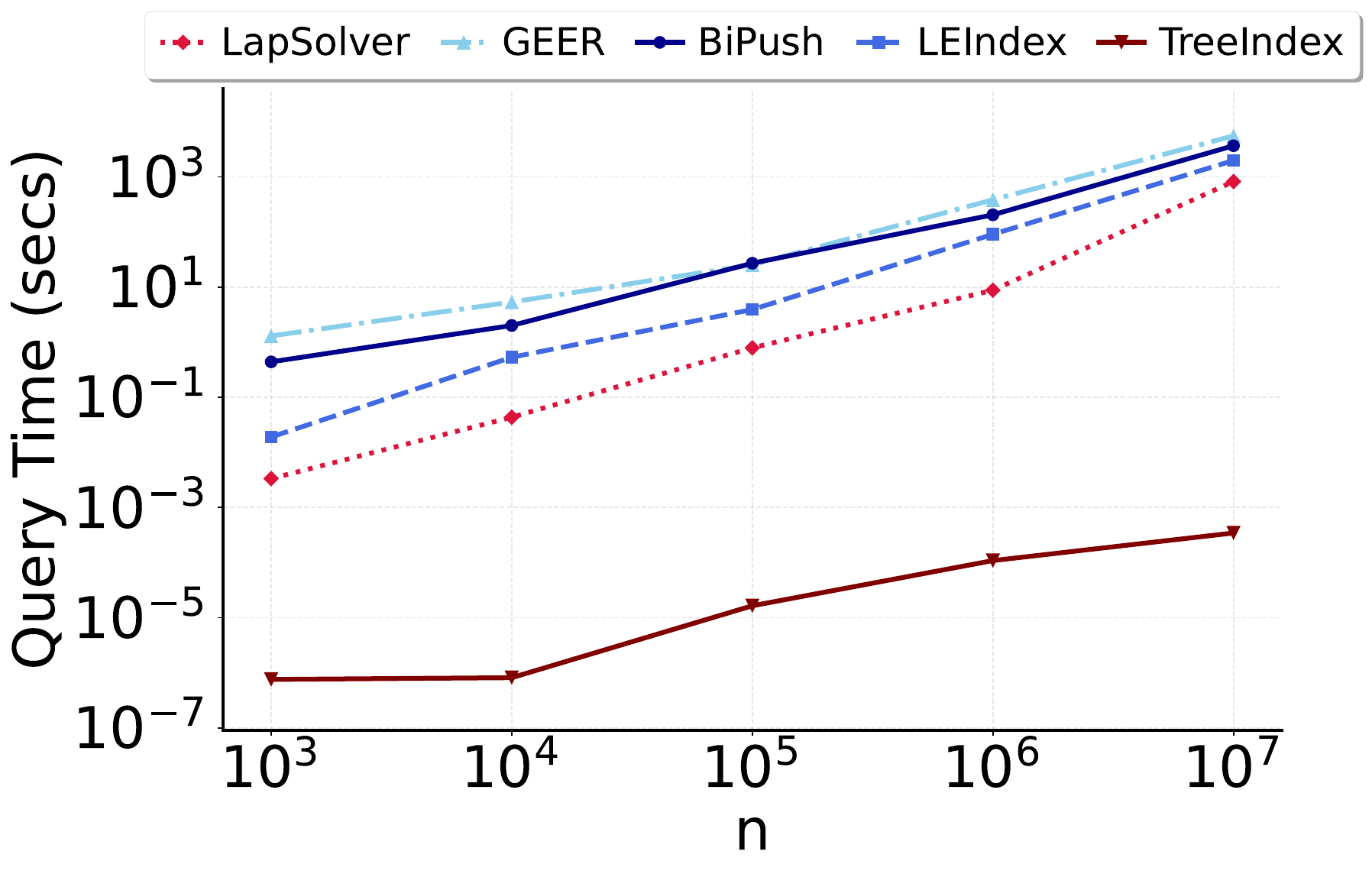}
      }\hspace{0.06\columnwidth}
      \vspace{-0.1cm}\subfigure[label construction time, road networks]{
				\includegraphics[width=0.42\columnwidth, height=2.4cm]{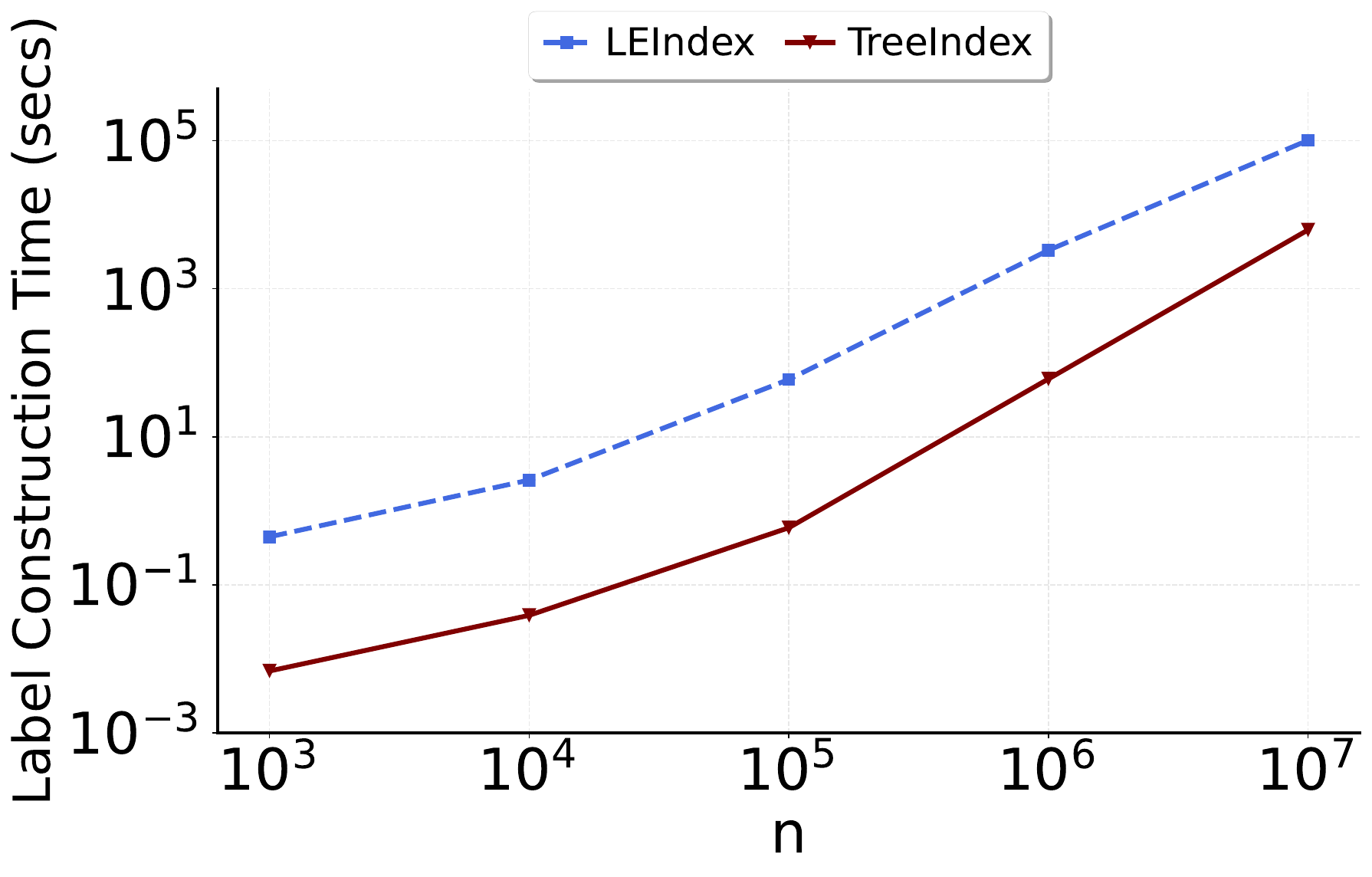}
      }\\
      \vspace{-0.1cm}\subfigure[query time, synthetic scale-free graphs]{
				\includegraphics[width=0.42\columnwidth, height=2.4cm]{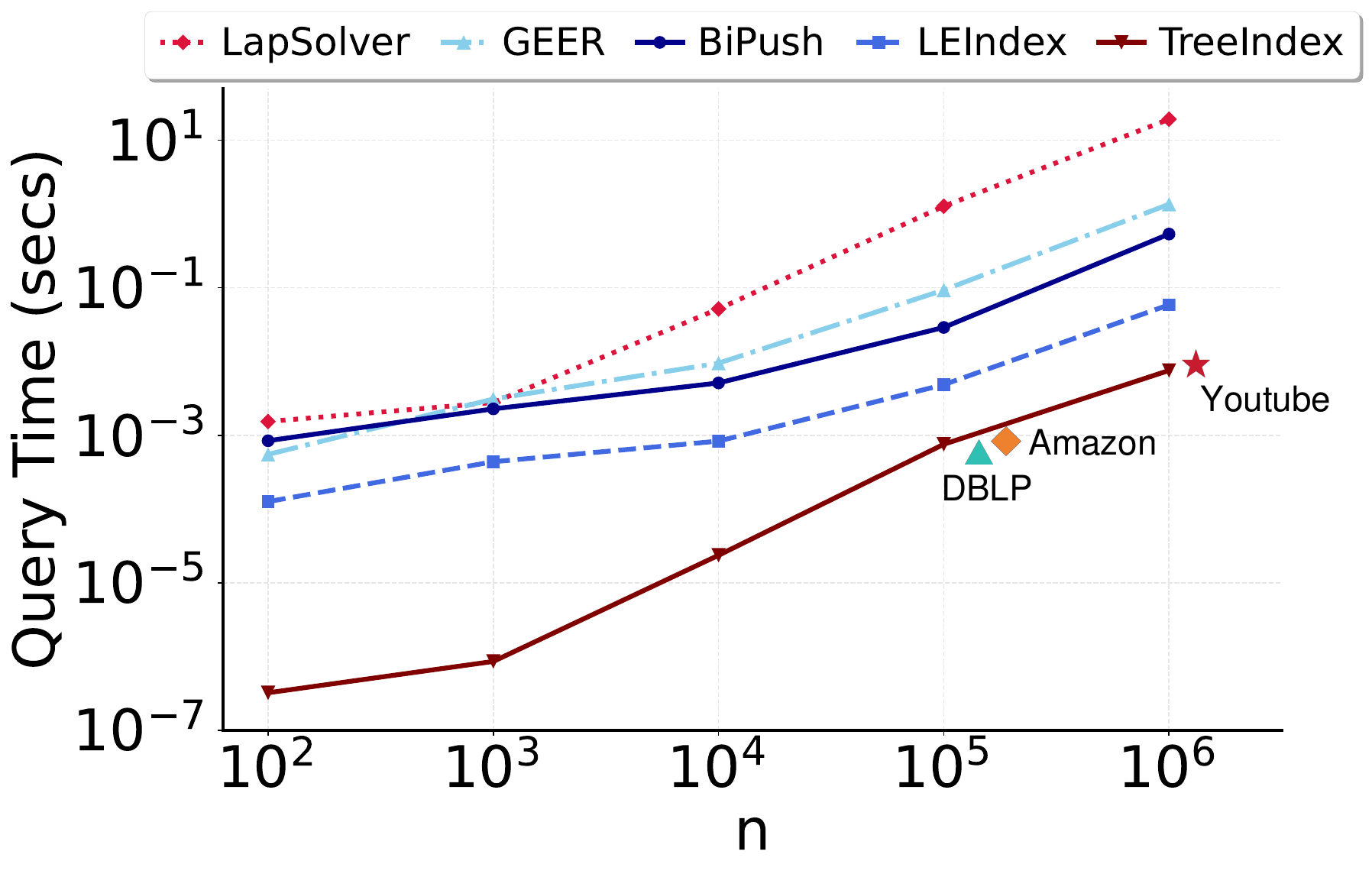}
			}\hspace{0.06\columnwidth}
      \vspace{-0.1cm}\subfigure[label construction time, synthetic scale-free graphs]{
				\includegraphics[width=0.42\columnwidth, height=2.4cm]{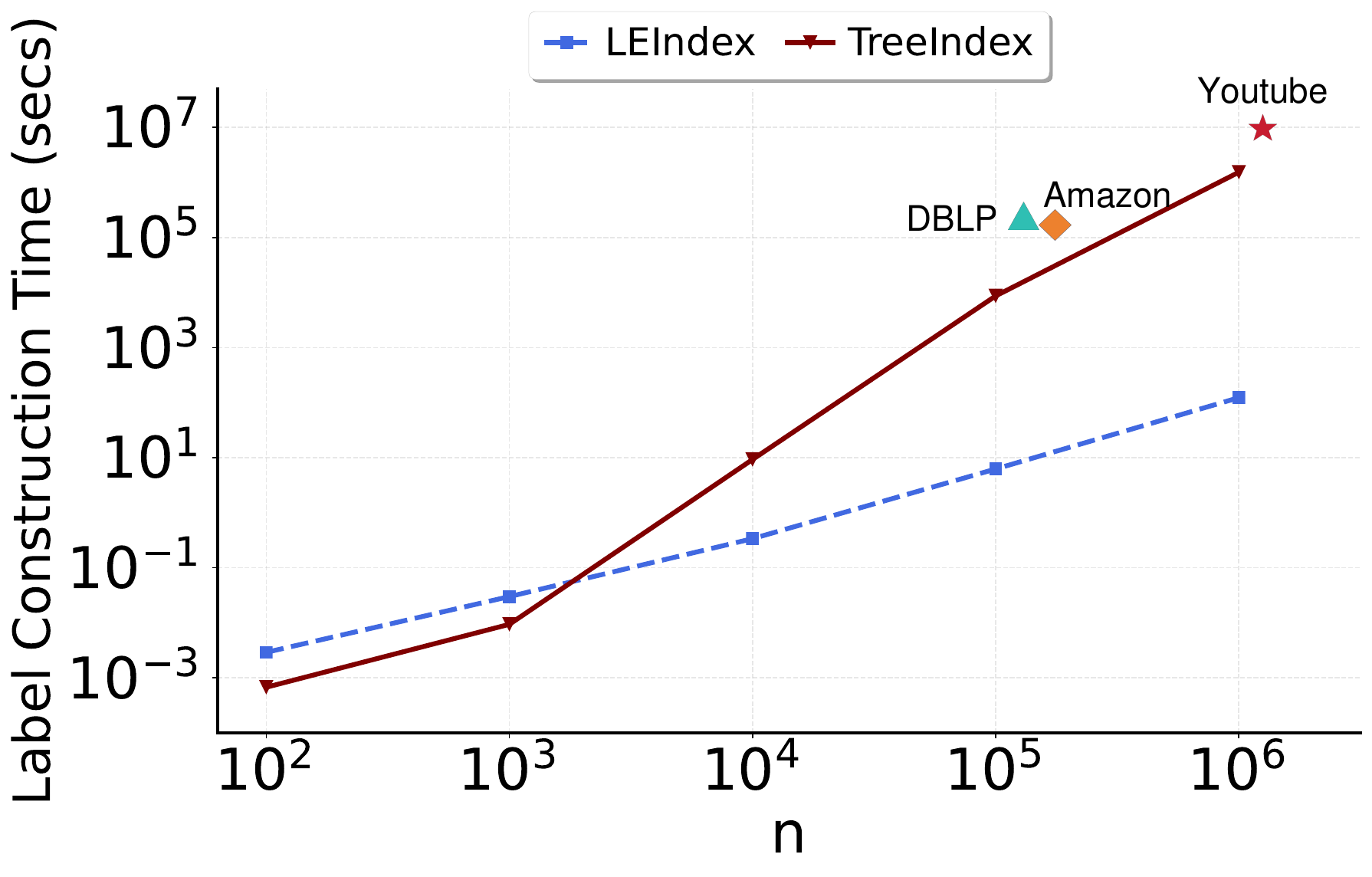}
			}
		\end{tabular}
	\end{center}
	\vspace*{-0.3cm}
	\caption{Scalability test on road networks and synthetic graphs}
	\vspace*{-0.4cm}
	\label{fig:scalability-test}
\end{figure}
\stitle{Exp-V: Scalability Test.} We conduct scalability tests on both real-world road networks and synthetic scale-free graphs. In this paper, we focus on the problem of resistance distance computation on small treewidth graphs such as road networks. Following previous studies for shortest path distance computation \cite{HopLabeling2018hierarchy,ProjectedVertexSeparator2021}, we first conduct scalability test by extracting different sizes of road networks from the OpenStreetMap dataset \cite{OpenStreetMap}. We vary the node size $n$ from $10^3$ to $10^7$ and compare the single-pair query time and label construction time of different methods. The results are shown in Fig.~\ref{fig:scalability-test} (a)-(b). It can be seen that the query time and label construction time both increase slowly when the node size increases. This validates that \treeindex is scalable on very large road networks. Then, we use the commonly-adpoted Chung-Lu model \cite{chung2002connected} to generate scale-free synthetic graphs with different sizes. We fix the power-law exponent $\gamma$ as $2.2$, and vary $n$ from $10^2$ to $10^6$. The results are shown in Fig.~\ref{fig:scalability-test} (c)-(d). We also plot the results of three real-life non-road networks. For graphs larger than \dblp (including \youtube \cite{snapnets} with $1$M nodes, $3$M edges), the results are estimated using the operation numbers obtained from tree decomposition, as stated in the time complexity analysis in Section~\ref{sec:resistance-distance-labelling}. It can be seen that although the query time is still very fast when the node size increases, the bottleneck of \treeindex is the label construction time which grows rapidly. When $n=10^5$, it costs $48$ hours ($2$ days) to construct the labels, and it is hard to generate labels for graphs with larger sizes. For \youtube, label construction will cost more than $140$ days by estimation. The same situation can be observed in the tree decomposition-based shortest path distance computation methods such as \htwoh \cite{HopLabeling2018hierarchy}, as the treeheight is relatively large. Thus, it is a promising future direction to incorporate \treeindex with other methods for a better performance on non-road graphs.

\begin{figure}[t!]
	\vspace*{-0.2cm}
	\begin{center}
 \begin{tabular}[t]{c}
      \subfigure[single-pair, query time]{
        \includegraphics[width=0.42\columnwidth, height=2.4cm]{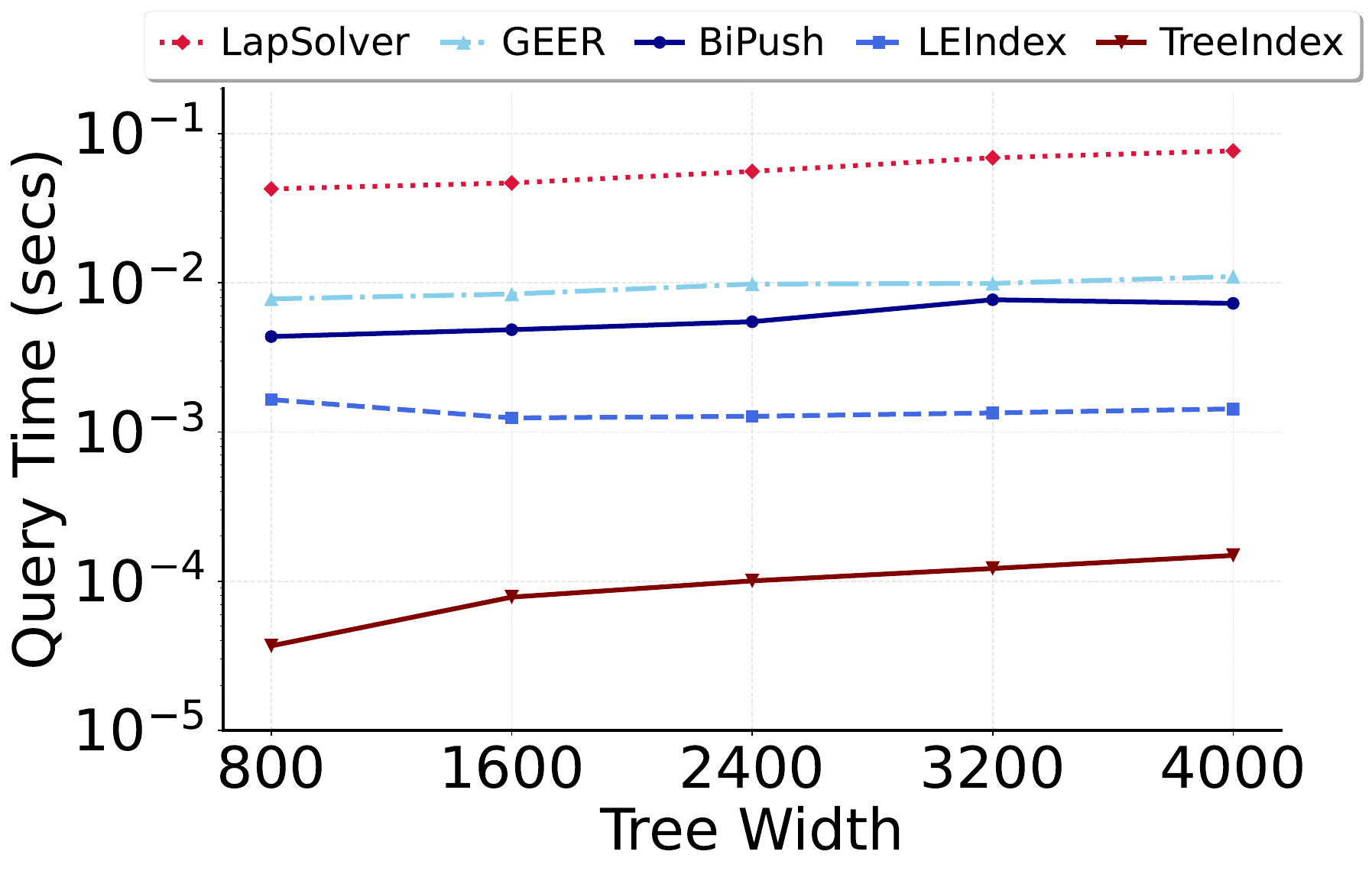}
      }\hspace{0.06\columnwidth}
      \subfigure[label construction time]{
        \includegraphics[width=0.42\columnwidth, height=2.4cm]{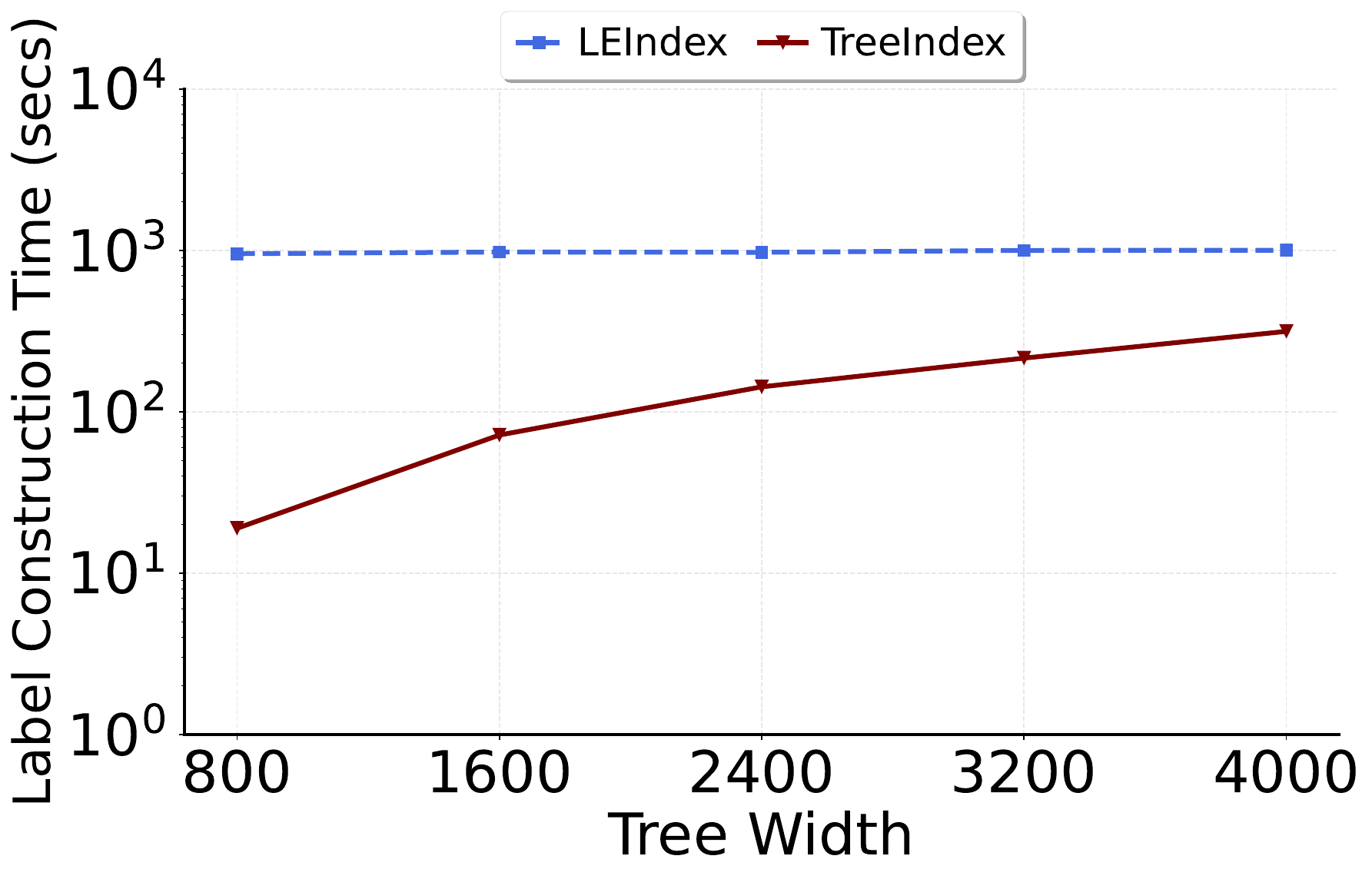}
      }
		\end{tabular}
	\end{center}
	\vspace*{-0.6cm}
	\caption{Performance of \treeindex when varying treewidth}
	\vspace*{-0.5cm}
	\label{fig:varying-treewidth}
\end{figure}
\stitle{Exp-VI: Performance when varying treewidth.} We also conduct experiments to see the performance of \treeindex when the treewidth varies. Specifically, we vary the parameter $\gamma$ of the Chung-Lu model to generate graphs with specific treewidth. We fix the node number as $10^4$, vary the treewidth from $800$ to $4000$ and compare the single-pair query time and label construction time of different methods. The results are shown in Fig.~\ref{fig:varying-treewidth}. It can be seen that the query time and label construction time grow when the treewidth increases. \treeindex is significantly faster than the existing methods while the difference becomes smaller when the treewidth is large. This validates that the proposed method \treeindex is proper for small treewidth graphs.

\begin{figure}[t!]
	\vspace*{0cm}
	\begin{center}
		\includegraphics[width=0.98\columnwidth]{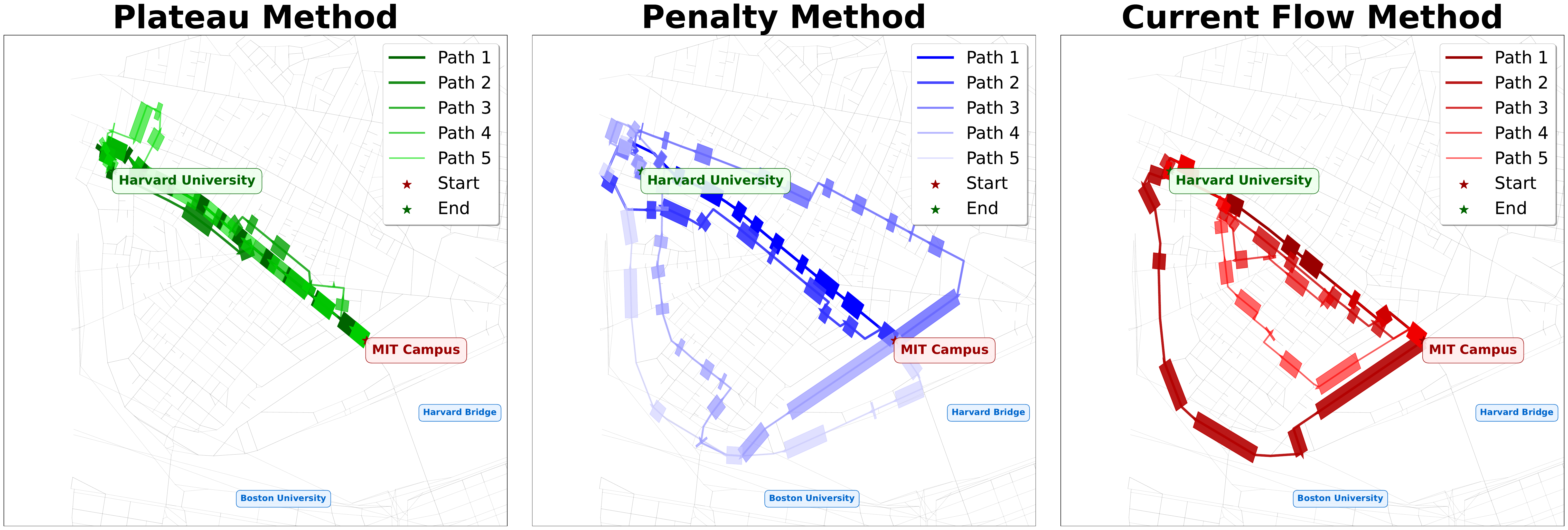}
	\end{center}
	\vspace*{-0.2cm}
	\caption{Comparison of different routing methods from MIT to Harvard}
	\vspace*{-0.3cm}
	\label{fig:case-study-robust-routing}
\end{figure}
\begin{table}[t]
	\centering
	\caption{Performance of different routing methods}
	\label{tab:detailed-comparison}
	\vspace*{-0.2cm}
	\resizebox{\columnwidth}{!}{
	\begin{tabular}{ccccc}
	\toprule
	\textbf{Method} & \textbf{Routing Time} & \length & \diversity & \robustness \\
	\midrule
	\plateaumethod \cite{PlateauMethod} & $0.002$ secs & 1.13 & 0.61 & 0.08 \\
	\penaltymethod \cite{PenaltyMethod} & $0.067$ secs & 1.49 & 0.94 & 0.87 \\
	\resistancepath & $0.010$ secs & 1.25 & 0.87 & 0.86 \\
	\bottomrule
	\end{tabular}
	}
	\vspace{-0.4cm}
\end{table}
\stitle{Exp-VII: Case study--Robust Routing on Road networks.} In this experiment, we compare the \resistancedistance-based routing method with the state-of-the-art robust routing methods \penaltymethod \cite{PenaltyMethod} and \plateaumethod \cite{PlateauMethod}. For evaluation, we use a real-world road network of Boston extracted from OpenStreetMap \cite{OpenStreetMap}, comprising $1,591$ nodes and $3,540$ edges. Here, edge weights represent the corresponding travel times. Fig.~\ref{fig:case-study-robust-routing} illustrates routing results obtained by different methods when $k=5$. It is evident that the \plateaumethod\ method generates similar paths, as it fails to avoid certain routes to reach Harvard. In contrast, \penaltymethod\ and \resistancepath\ consistently identify robust and diverse paths. Furthermore, we evaluate the quality of alternative paths using metrics such as \length, \diversity, and \robustness. Specifically, \length\ denotes the average ratio of the lengths of the alternative paths to that of the shortest path; \diversity\ represents the average pairwise Jaccard similarity among all alternative paths; and \robustness\ is the probability that $s$ and $t$ remain connected via the alternative paths after each edge is independently removed with a probability of $0.001$. Detailed results are presented in Table~\ref{tab:detailed-comparison}. It can be observed that the \penaltymethod can achieve higher \diversity and \robustness but at the cost of substantially longer routing time. Conversely, \plateaumethod method is significantly faster but produces similar paths. In comparison, \resistancepath consistently finds robust and diverse paths rapidly. These results indicate that the the \resistancepath-based routing method is ideal for robust routing applications on road networks.

\stitle{Summary of findings.} The experimental results demonstrate that the proposed \treeindex approach significantly outperforms state-of-the-art exact and approximate methods in terms of query efficiency (over $3$ orders of magnitude faster) and query accuracy (nearly exact), at the expense of increased index construction time and index size. On social networks (have large tree-widths), the index construction takes more than $145$ hours and is challenging to apply to social networks larger than \dblp. In contrast, our method exhibits significantly superior performance on small tree-width graphs, such as road networks. Specifically, it successfully constructs resistance distance labelling on the largest available road network, \fullusa, within $7$ hours, resulting in an index size of approximately $405$~GB, whereas none of the existing methods can compute exact resistance distances on this network. The improvements provided by our method make the query performance of resistance distance comparable to that of shortest path distance.

\section{Related work}\label{sec:related-work}
\stitle{Resistance Distance Computation.} Resistance distance computation is a well-established problem in graph data management. Several algorithms have been proposed for computing resistance distance by the theoretical community \cite{GraphSparsificationEff08,TowardsOptimalNIPS23,LaplacianSolver,SimpleParallelLaplacianSolver23,UltrasparseUltrasparsifiers21,SolvingSDD14,NonExpander2023effective,LiLawrence2023new,CutToggling23}. A representative set of these methods are based on the Laplacian solver \cite{GraphSparsificationEff08,LaplacianSolver,SimpleParallelLaplacianSolver23,UltrasparseUltrasparsifiers21,SolvingSDD14}, which achieves a near-linear time complexity (with respect to the number of edges). However, despite numerous attempts to efficiently implement the Laplacian solver in practice \cite{RobustandPracticalLaplacian23,RCHOL21,AnEmpiricalComparison16}, the hidden constant factors in these complexity analyses are substantial, resulting in poor practical efficiency. In contrast to these methods, we focus on algorithms that are efficient in practice. From this perspective, numerous studies have focused on approximate solutions. \cite{KDDlocal21} first proposed local algorithms for resistance distance by sampling random walks, while \cite{22resistance,ResistanceYang} subsequently reduced the variance of this approach. Other studies have explored sampling spanning trees \cite{22resistance,SpanningEdgeCentrality}. To further enhance efficiency, an index-based solution was proposed in \cite{23resistance,EfficientIndexMaintenanceACM25}, which uses spanning forest sampling to approximate several relevant matrices as indices. These algorithms are more suitable for graphs where random walks mix rapidly, such as social networks, and are inefficient for graphs with small treewidth, such as road networks.

\stitle{Shortest Path Distance Computation.} Shortest path distance computation is another fundamental problem in graph data management. Since online methods such as Dijkstra's algorithm, bidirectional search \cite{BidirectionalShortestPath77}, and A* search \cite{Asearch05} are inefficient for large-scale networks, numerous studies have focused on index-based methods. The basic idea is to find an $h$-hopset such that after adding this set, the distances can be exactly or approximately preserved while any two nodes can reach each other within $h$ hops in the new graph \cite{Cohen94}. This hop-set as well as the distance values can be stored as labellings for efficient query \cite{DistanceLabeling01}. The hop-set and labelling-based ideas have been extensively studied in theory literature \cite{SublinearDistanceLabeling16,HubLabelingHardness17,Hopsets19,HighwayDimension10,SkeletonLabel17,DistanceOracles11,HopsetsFOCS16,ExactDistanceOracleFOCS17}. $2$-hop labelling is a special case when $h=2$. Cohen conjectured that for any graph, the optimal $2$-hop cover has size $O(\sqrt{m}\cdot n)$ \cite{Cohen2hopLabel2002}. On graphs with special structure, such as bounded treewidth \cite{CorePeriphery20,DistanceOracles11}, bounded highway dimension \cite{HighwayDimension10} or bounded skeleton dimensions \cite{SkeletonLabel17}, the theoretical bounds can be improved. Such ideas have also been utilized to design practically efficient algorithms under different computational environments \cite{FastExact13ShortestPath,ISLABEL13,HopDoublingLabel14,ShortestPathIndexMaintenance20,ReinforcementLearningTreeDecomposition23,StableTreeLabellingEDBT25,DualHierarchyLabellingACM25,RelativeSubboundednessSIGMOD22,TEDISIGMOD10,LiJunChang2012exact,HopLabeling2018hierarchy,HierarchicalCutLabelling23,ProjectedVertexSeparator2021}. However, all these methods depend on the cut property of shortest path distance. As the cut property for resistance distance is unclear, resistance distance computation presents an entirely different challenge. To the best of our knowledge, none of the techniques used in these shortest path methods had been successfully adapted for resistance distance computation prior to our study. Among these approaches, the most relevant to our work are the tree decomposition-based methods. Tree decomposition was first applied to shortest path distance computation in \tedi \cite{TEDISIGMOD10}, which leverages the tree decomposition structure to construct distance labelling for efficient shortest path distance queries. Subsequent research enhanced this approach by introducing multi-hop queries (\multihop) \cite{LiJunChang2012exact}, hierarchical distance labelling (\htwoh) \cite{HopLabeling2018hierarchy}, and pruned vertex separators \cite{ProjectedVertexSeparator2021}. Recently, balanced vertex hierarchy has been employed to further reduce index sizes \cite{HierarchicalCutLabelling23}. In this paper, we adapt the concepts of tree decomposition and vertex hierarchy from these studies and develop non-trivial extensions specifically tailored for resistance distance computation.

\section{Conclusion}
In this paper, we propose \treeindex, a novel exact method for computing resistance distances by leveraging tree decomposition to construct resistance distance labelling. Our approach specifically addresses the computational limitations of existing random walk-based methods, which perform poorly on graphs with small treewidth. To overcome these limitations, we establish the cut property of resistance distance derived from the Cholesky decomposition of the inverse Laplacian matrix and efficiently extend it to the entire graph by exploiting the hierarchical structure obtained from tree decomposition. The resulting labelling achieves a compact size of $O(n \cdot h_{\mathcal{G}})$ and can be computed in $O(n \cdot h_{\mathcal{G}}^2 \cdot d_{max})$ time, where the tree height $h_{\mathcal{G}}$ and maximum degree $d_{max}$ are typically small in practical graphs with low treewidth, such as road networks. Utilizing this labelling, single-pair resistance distance queries can be answered in $O(h_{\mathcal{G}})$ time, whereas single-source queries require $O(n \cdot h_{\mathcal{G}})$ time. Extensive experiments demonstrate that our method achieves substantial improvements in query efficiency compared to state-of-the-art exact and approximate methods, while incurring only modest increases in index size and construction time.

\balance
\bibliographystyle{ACM-Reference-Format}
\bibliography{CholWilson}
\end{document}